\numberwithin{equation}{section}
\newtheorem{theorem}{Theorem}[section]
\newtheorem{lemma}[theorem]{Lemma}
\newtheorem{hyp}[theorem]{Hypothesis}
\newtheorem{case}{Case}
\theoremstyle{definition}
\newtheorem{definition}{Definition}[section]
\DeclareMathOperator \re {Re}
\DeclareMathOperator \im {Im}
\newcommand{\supp} {\operatorname{supp}}
\newcommand{\loc}{\operatorname{loc}}
\newcommand{\Real}{\mathbb{R}}
\newcommand{\Complex}{\mathbb{C}}
\newcommand{\mcv}{\mathcal{V}}
\newcommand{\Natural}{\mathbb{N}}
\newcommand{\mcR}{\mathcal{R}}
\newcommand{\ch}{\operatorname{ch}}
\newcommand{\mcV}{\mathcal{V}}
\newcommand{\sing}{\operatorname{sing}}
\newcommand{\sect}{{\mathcal S}}
\newcommand{\ci}{\mathcal{C}}
\newcommand{\rhop}{\rho_+}
\newcommand{\rhom}{\rho_-}
\newcommand{\rhopm}{\rho_{\pm}}
\title[Asymptotic distribution of resonances]{Singularities and asymptotic distribution of resonances for Schr\"{o}dinger operators
in one dimension}
\author{T.J. Christiansen and T. Cunningham}
\address{Department of Mathematics, University of Missouri, Columbia, MO 65211 USA}
\email{christiansent@missouri.edu}
\email{travisdcunningham@gmail.com}
\begin{document}

\begin{abstract}
We obtain new results about the high-energy distribution of resonances for the one-dimensional Schrödinger operator. Our primary result is an upper bound on the density of resonances above any logarithmic curve in terms of the singular support of the potential. We also prove results about the distribution of resonances in sectors away from the real axis, and construct a class of potentials producing multiple sequences of resonances along distinct logarithmic curves, explicitly calculating the asymptotic location of these resonances. The results are unified by the
use of an integral representation of the reflection coefficients, refining methods used in \cite{Froese97} and \cite{Si}.
\end{abstract}
\maketitle

\section{Introduction}
\subsection{Main results}
This paper studies relationships between the singularities of a potential $V\in L^\infty_c(\Real)$ and the high-energy distribution of resonances
of the Schr\"{o}dinger operator $-\frac{d^2}{dx^2}+V$.   It is well known that if $V\in C_c^\infty (\Real)$  then $-\frac{d^2}{dx^2}+V$ has at most finitely many resonances above any logarithmic curve $\{ \im \lambda = -M \log(1+|\lambda|)\}$ (see \cite{Vainberg75,LaxPhillips} or \cite[Section 4.6]{Dyatlov19}).  On the other hand, 
in certain cases it is known that if $V$ has singularities, then $-\frac{d^2}{dx^2}+V$ has a sequence of resonances lying approximately on a 
logarithmic curve; see \cite[Theorem 6]{Zworski87}, or \cite{DatchevN22, Datchev22,Brady23} for some semiclassical results with delta potentials.  This paper explores such connections further, both with some specific classes of examples and with some general theorems.
Fundamental to many of our results is an integral representation for the entries of the scattering matrix, see  (\ref{eq:Tpm}) and (\ref{eq:smat}), and 
an analysis of these integrals.
  Here we build on some techniques and results of 
\cite{Froese97} and \cite{Si}.


For $V\in L^\infty_c(\Real;\Complex)$, denote by $\mcR_V$ the set of resonances of $-\frac{d^2}{dx^2}+V$, repeated according to multiplicity.
A now classic result in the study of resonances is 
\begin{equation}\label{eq:MZas}
\# \{ \lambda_j\in \mcR_V: \; |\lambda_j|<r\} = \frac{2}{\pi}|\ch \supp V|r(1+o(1)) \qquad \text{as} \quad r \rightarrow \infty
\end{equation}
where $\supp V$ is the support of $V$, $\ch\supp V$ is the convex hull of the support, and for an interval $I\subset \Real$, $|I|$ is the 
length of $I$.  See \cite[Theorem 2]{Zworski87}, \cite[Theorem 1.1]{Froese97}, or \cite[Theorem 2.16]{Dyatlov19} for this 
result, and \cite{Regge58} for a special case.  If instead of counting resonances in disks we count those above logarithmic curves, we obtain a result which reflects the role of $\sing \supp V$, the singular support of $V$, rather than the support of $V$, in these neighborhoods of the real axis.  
\begin{theorem}\label{t:ss}
    Let \(V \in L_c^\infty (\Real; \mathbb{C})\). Then for any $M>0$ we have
    \[\overline{\lim_{r\to\infty}} \frac{\# \{ \lambda_j\in \mcR_V: \im \lambda_j>-M\log(|\lambda_j|+1), \pm \re \lambda_j>0\; \text{and}\; |\lambda_j|<r\}}{r} \leq \frac{1}{\pi}|\ch \sing \supp V|.\]
    Moreover, if $M<|\ch \sing\supp V |^{-1}$, then there exists $R>0$ such that
    \[\mathcal{R}_V \cap \{\lambda: \; \im \lambda >-M \log(|\lambda|+1)\} \cap \{|\lambda| > R\} = \emptyset .\]
\end{theorem}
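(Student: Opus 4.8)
The plan is to realize the resonances, for $\im\lambda$ near or below the real axis, as the zeros of a single entire function attached to the scattering problem, and to use the representations (\ref{eq:Tpm})--(\ref{eq:smat}) to peel off the part of that function governed by the smooth part of $V$ from the part governed by $\sing\supp V$. After translating $V$ (which changes neither $\mcR_V$ nor the region $\{\im\lambda>-M\log(|\lambda|+1)\}$) we may assume $\ch\sing\supp V=[0,L]$ with $L:=|\ch\sing\supp V|$, so that $V\in C^\infty$ on $(-\infty,0)\cup(L,\infty)$ and, if $\supp V$ extends beyond $[0,L]$, $V$ vanishes to infinite order at the outer endpoints of $\supp V$. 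We use that the resonances with $\im\lambda\le 0$ are exactly the zeros there of $\mathfrak a(\lambda):=t(\lambda)^{-1}$ (equivalently of the Jost/Wronskian function $-2i\lambda\,\mathfrak a(\lambda)$), while any zeros of $\mathfrak a$ with $\im\lambda>0$ correspond to the finitely many eigenvalues of $-\frac{d^2}{dx^2}+V$ and, being confined to $|\lambda|\le R_0$, are irrelevant both for the density count and for the second assertion.

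I would then insert the representation of the scattering data into $\mathfrak a$ and split the resulting $x$-integrals at $x=0$ and $x=L$, writing $\mathfrak a-1=A+B+E$. Here $E$ collects the contributions of the wings $(-\infty,0)\cup(L,\infty)$; on these intervals $V$ — and hence the integrands, which by the Jost integral equations are built from $V$ — is $C^\infty$, so repeated integration by parts applies: the boundary terms at the outer endpoints of $\supp V$ vanish, while those produced at $x=0,L$ are only polynomially large on the logarithmic neighbourhood, of size $(|\lambda|+1)^{O(M)}$, so enough integrations by parts give $|E(\lambda)|\le C_N(|\lambda|+1)^{-N}$ for every $N$, uniformly on $\{\im\lambda>-M\log(|\lambda|+1)\}$. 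This is the quantitative form of ``there is no reflection off a smooth potential,'' and making it uniform on the log region for a general $V\in L^\infty_c$ (not merely piecewise-smooth $V$) is the step I expect to be the main obstacle. The remaining ``core'' part has $A$ non-oscillatory and $O(|\lambda|^{-1})$, whereas $B$, the ``round-trip'' part carrying a phase $e^{2i\lambda\tau}$ with $0<\tau\le L$, is by the structure of (\ref{eq:smat}) a product of (a piece of) $T_+$ and (a piece of) $T_-$, each of which comes with an explicit $\tfrac1{2i\lambda}$; hence $|B(\lambda)|\le C(|\lambda|+1)^{-2}\cdot(|\lambda|+1)^{2ML}$ on the log region. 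Altogether, on $\{\im\lambda>-M\log(|\lambda|+1)\}$,
\[ |\mathfrak a(\lambda)-1|\ \le\ C(|\lambda|+1)^{-1}\ +\ C(|\lambda|+1)^{2ML-2}, \]
and it is precisely the exponent $2ML-2$ — rather than the $2ML-1$ that a single reflection would give — that yields the threshold $M<L^{-1}$.

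The second assertion is now immediate: if $M<L^{-1}=|\ch\sing\supp V|^{-1}$ then $2ML-2<0$, so the displayed bound forces $\mathfrak a(\lambda)\to 1$ uniformly as $|\lambda|\to\infty$ on $\{\im\lambda>-M\log(|\lambda|+1)\}$, whence $\mathfrak a$, and therefore $-\frac{d^2}{dx^2}+V$, has no resonance in that region once $|\lambda|>R$.

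For the density bound, which must hold for every $M>0$ (so $\mathfrak a-1$ need no longer be small), I would run a zero-counting argument in the spirit of \cite{Froese97}. On a neighbourhood of the relevant part of the log region, after discarding the super-polynomially small term $E$, the function $\mathfrak a$ has the form $\sum_j g_j(\lambda)e^{i\lambda\beta_j}$, where the exponents $\beta_j$ span an interval of length at most $2L$ (the phases reaching beyond $[0,L]$ have been absorbed into $E$) and each $g_j$ is analytic of at most polynomial growth. The zeros of such a sum cluster along finitely many logarithmic ``dominant-balance'' curves, on each of which two of the $e^{i\lambda\beta_j}$ have comparable modulus and the argument of the dominant difference winds at rate $|\beta_j-\beta_k|\,d(\re\lambda)$, so there are at most $\tfrac{|\beta_j-\beta_k|}{2\pi}\le\tfrac{L}{\pi}$ zeros per unit increment of $\re\lambda$; the term $E$ cannot create additional zeros in this near-real-axis region. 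Summing over $0<\re\lambda<r$ gives the asserted bound $\tfrac1\pi|\ch\sing\supp V|$, and the half plane $\re\lambda<0$ is handled symmetrically, with the roles of $T_+$ and $T_-$ in (\ref{eq:Tpm}) interchanged.
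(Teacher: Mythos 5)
Your treatment of the second assertion is essentially the paper's: decompose $V$ into a smooth part plus a part carried by (a small neighbourhood of) $\ch\sing\supp V$, observe that the only dangerous term in $\det S_V(-\lambda)=1+\frac{1}{4\lambda^2}\rhom\rhop+O(|\lambda|^{-1})$ is the round‑trip term $e^{2i\lambda L}\lambda^{-2}O(1)$, and read off the threshold from $2ML-2<0$. This is exactly how the paper argues via Lemma \ref{l:Tsum}, Lemma \ref{l:UTs} and (\ref{eq:tprod}); the ``main obstacle'' you flag (uniform smoothing on the wings for a general $V\in L^\infty_c$) is resolved there by Lemma \ref{l:abc}, which shows $f_\pm$ inherits (two orders more of) the local regularity of $V$, so the integration by parts you describe is legitimate. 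One bookkeeping slip: the boundary terms produced at $x=L$ by integrating by parts over the right wing carry the phase $e^{2i\lambda L}$ and are \emph{not} made small by further integrations by parts; they must be assigned to the round‑trip term $B$, not to $E$. This does not change your final exponent $2ML-2$, but as written the claim $|E|\le C_N(1+|\lambda|)^{-N}$ is false.

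The genuine gap is in the first assertion. For a general $V\in L^\infty_c$ the coefficient of $e^{2i\lambda L}$ in the round‑trip term is only a bounded analytic function of $\lambda$ (the $O(1)$ of Lemma \ref{l:Tsum}), not a finite sum $\sum_j g_j(\lambda)e^{i\lambda\beta_j}$ with slowly varying $g_j$; it may itself have zeros and wildly varying argument. Consequently the ``dominant‑balance curves with winding rate $|\beta_j-\beta_k|$'' picture is a heuristic, not a proof: one cannot count zeros of $1+h(\lambda)\lambda^{-2}e^{2i\lambda L}$ by an argument‑principle computation without controlling $\arg h$. Moreover, the Levinson/indicator‑function machinery of \cite{Froese97} counts zeros in disks or sectors and does not by itself localize the count to a logarithmic neighbourhood of the real axis. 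The paper closes this gap with a potential‑theoretic device: a Jensen‑type formula for ellipses (Lemma \ref{l:koosis}) applied to $D_{U+W}$ on a family of thinner and thinner ellipses that cover $\hat L_M^{+}\cap\{\im\lambda\le 0\}$ while remaining inside a larger region $\hat L_A$, combined with the upper bound $\log|D_{U+W}(\lambda)|\le C+2|\ch\supp U|(\im\lambda)_-$ there (Lemma \ref{l:Dub}, which is exactly your pointwise estimate) and the lower bound $|D_{U+W}|\ge C_1$ on the real axis. Some zero‑counting tool of this kind is indispensable; without it the density bound does not follow from the pointwise estimates you derive.
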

The upper bound obtained in the first part of this theorem is optimal (when considered to hold for any value of $M$). 
 This can be seen from the results of \cite[Theorem 6]{Zworski87} for a class of potentials
$V$ for which $\ch \sing \supp V =\ch \supp V$.  That the optimality holds for potentials for which $|\ch \sing \supp V|<|\ch \supp V|$ is a consequence of our Theorem \ref{t:as}.

We give a  proof of Theorem \ref{t:ss} using a Jensen-type formula for ellipses rather than circles, and an upper bound on the determinant of the 
scattering matrix in appropriate regions of $\Complex$.  
It  may be possible to prove the second part of Theorem \ref{t:ss} using results of \cite{Galkowski17}.  However, the proof we give here  easily follows from the intermediate steps in the proof of the first part of the theorem.

Before stating further results, we introduce some notation which we shall use throughout the paper.  We continue to use $\mcR_V$ to 
denote the resonances of $-\frac{d^2}{dx^2}+V$, repeated with multiplicity.
For \(\Omega\subset\mathbb{C}\), we denote \(n_V (\Omega;r):=\#\{\lambda \in \mathcal{R}_V \cap \Omega : |\lambda| \leq r\}\),
and set  \(n_V (\mathbb{C};r) =: n_V (r)\).   
We take \(\Omega^{\circ}, \bar{\Omega}\), \(\ch \Omega, \: \mathbb{C}\setminus \Omega \) (or \( \Real \setminus \Omega)\) to denote interior, closure, convex hull, and complement, respectively, of a set \(\Omega\) in \(\Complex\) or \(\Real\) depending on the context.

We define a class of potentials of a type considered in \cite[Theorem 6]{Zworski87} (see also \cite{Regge58} and \cite{Stepin07}) although here we allow the potentials to be 
complex-valued.
\begin{definition}\label{d:Vjk}
    We say that \(V \in \mathcal{V}_{j, k}([a,b])\) for integers \(j, k \geq 0\) and an interval \([a, b]\) provided:
    \begin{center}
    \begin{minipage}{5in}

    \begin{enumerate}[(i)]
        \item $\ch \supp V = [a, b]$ and $V \in C^N ([a, b];\Complex)$ for some  $N > j, k$,
        \item For some constants $C_1, C_2 \in \mathbb{C}\setminus \{0\}$, we have
        \[V(x) \sim 
            \begin{cases}
              C_1 (x-a)^j, & \text{as}\ x \rightarrow a^+ \\
              C_2 (b-x)^k, & \text{as}\ x \rightarrow b^-.
            \end{cases}
        \]
    \end{enumerate}
    \end{minipage}
    \end{center}
\end{definition}

Zworski shows that real-valued potentials in \(\mathcal{V}_{j, k} ([a, b])\) produce a sequence of resonances asymptotic to a logarithmic curve, and he gives the form of this sequence explicitly (see \cite[Theorem 6]{Zworski87}). We prove an extension of this result which, in addition to allowing complex-valued $V$, shows that this sequence persists when we perturb by an arbitrary function in $C_c^\infty(\Real)$, even one supported outside of $[a, b]$.  For $M>0$ we define 
\[\hat{L}_M := \{\operatorname{Im}\lambda > -M\log(1 + |\lambda|)\}\]
and notice that $\hat{L}_M$ contains what is sometimes called a logarithmic neighborhood of the real axis.
\begin{theorem}\label{t:as}
    Let \(V \in \mathcal{V}_{j,k}([a,b])\) and \(W \in C_c^\infty (\Real;\mathbb{C})\). Then for any \(M > (j + k + 4) / 2(b - a)\), we can find $R > 0$ such that within \(\Hat{L}_M \cap \{|\lambda| > R\}\) the resonances of \(-\frac{d^2}{dx^2} + V+ W\) form a sequence 
    \[\lambda_{\pm n} = \pm \frac{n \pi}{b-a} \pm \frac{j+k+4}{2(b-a)} \frac{\pi}{2} + i\frac{\log C}{2(b-a)} - i \frac{j+k+4}{2(b-a)} \log \left(\frac{n \pi}{b-a}\right) + \varepsilon_{\pm n}\]
    where \(C = j!k!C_1C_2 / 2^{j+k+4}\) and \(\varepsilon_{\pm n} \rightarrow 0\) as \(n \rightarrow \infty\).
\end{theorem}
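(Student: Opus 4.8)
The plan is to realize the resonances of $-\frac{d^2}{dx^2}+V+W$ (write $U=V+W$, $[\tilde a,\tilde b]=\ch\supp U$) as the zeros in the lower half-plane of an entire function $F_U(\lambda)$ — the Jost function / Fredholm determinant — obtained from the representation (\ref{eq:Tpm})–(\ref{eq:smat}) of the scattering matrix, which has the form $F_U(\lambda)=1+G_U(\lambda)$ with $G_U$ built from iterated integrals of $U$ against the exponential kernels appearing in (\ref{eq:Tpm}). The heart of the argument is to establish that, uniformly in $\hat{L}_M\cap\{|\lambda|>R_0\}$,
\[ F_U(\lambda)=1-\frac{j!\,k!\,C_1C_2}{(2i\lambda)^{j+k+4}}\,e^{2i\lambda(b-a)}\bigl(1+o(1)\bigr)+E(\lambda), \]
where $E(\lambda)$ decays faster than any power of $|\lambda|$ on $\hat{L}_M$. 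Note that $C=j!\,k!\,C_1C_2/2^{j+k+4}$, and the factor $i^{-(j+k+4)}$ hidden inside $(2i\lambda)^{-(j+k+4)}$ is exactly what will produce the constant phase shift $\pm\frac{j+k+4}{2(b-a)}\frac{\pi}{2}$ in the final formula.

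Deriving this expansion is the step I expect to be the main obstacle. I would expand $G_U$ along the Neumann (Born) series: the non-exponential first-order term contributes only an $O(|\lambda|^{-1})$ correction to the constant $1$ and carries no $e^{2i\lambda(b-a)}$. The terms carrying $e^{2i\lambda(b-a)}$ come from reflecting off the two ends of $[a,b]$; repeated integration by parts at $a$, using $V\in C^N$ with $N>j$ together with $V(x)\sim C_1(x-a)^j$ (so $V^{(m)}(a)=0$ for $m<j$ and $V^{(j)}(a)=j!\,C_1$), produces a surviving boundary contribution of size $\sim C_1 j!/(2i\lambda)^{j+2}$, and similarly $\sim C_2 k!/(2i\lambda)^{k+2}$ at $b$; subsequent terms are smaller by powers of $|\lambda|^{-1}$ and give the $o(1)$. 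A key point is that the \emph{single}-reflection contributions proportional to $e^{2i\lambda(b-a)}$ cancel in $F_U$ — as they must, since the Wronskian is insensitive to a pure phase — so the surviving $e^{2i\lambda(b-a)}$ term is genuinely a \emph{second}-order term in $U$ (reflect off the singularity at $b$, then off the one at $a$), of size $j!\,k!\,C_1C_2/(2i\lambda)^{j+k+4}$; it is nonzero precisely because $C_1,C_2\neq0$. Finally, the contributions of $W$ and of any interior singularity of $V$ carry exponential factors $e^{2i\lambda\ell}$ with coefficients decaying faster than any power of $|\lambda|$ (smoothness of $W$ at $\partial\supp W$, and the regularity of $V$ in $(a,b)$); since on $\hat{L}_M$ one has $|e^{2i\lambda\ell}|\le(1+|\lambda|)^{2M\ell}$, all of these are absorbed into $E=O(|\lambda|^{-\infty})$, and one checks that the threshold $M>(j+k+4)/2(b-a)$ is exactly the condition that the curve $\{\im\lambda=-\tfrac{j+k+4}{2(b-a)}\log|\lambda|\}$, on which $|e^{2i\lambda(b-a)}(2i\lambda)^{-(j+k+4)}|\asymp 1$, lies inside $\hat{L}_M$.

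With the expansion in hand, the equation $F_U(\lambda)=0$ becomes
\[ e^{2i\lambda(b-a)}=\frac{(2i\lambda)^{j+k+4}}{j!\,k!\,C_1C_2}\bigl(1+o(1)\bigr). \]
Taking logarithms — choosing the branch so that $\arg\lambda$ is near $0$ for the family with $\re\lambda>0$ and near $-\pi$ for the family with $\re\lambda<0$ — gives $2i\lambda(b-a)=(j+k+4)\log(2i\lambda)-\log(j!\,k!\,C_1C_2)+2\pi i n+o(1)$, and solving for $\lambda$ by substituting the leading approximation $|\lambda|\approx n\pi/(b-a)$ back into the right-hand side yields, after separating real and imaginary parts and recombining, exactly the stated $\lambda_{\pm n}$ up to an $o(1)$ error; that $\varepsilon_{\pm n}\to0$ follows from a contraction-mapping iteration on a small disk about the displayed value. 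To see that these are \emph{all} the resonances in $\hat{L}_M\cap\{|\lambda|>R\}$, run a Rouch\'e argument: on small circles $\gamma_n$ centered at the approximate roots, $F_U$ and $1-\frac{j!\,k!\,C_1C_2}{(2i\lambda)^{j+k+4}}e^{2i\lambda(b-a)}$ have the same number of zeros (one), while on the complement of $\bigcup_n\gamma_n$ within $\hat{L}_M\cap\{|\lambda|>R\}$ one has $|F_U(\lambda)|\ge c>0$: near the real axis $|F_U-1|<\tfrac12$; in the deep region $|F_U|>2$; and along the transition region the function $e^{2i\lambda(b-a)}(2i\lambda)^{-(j+k+4)}$ winds rapidly (its argument changes by $2\pi$ over each interval of length $\pi/(b-a)$ in $\re\lambda$) so it stays bounded away from $1$ off the $\gamma_n$, and $|E|=o(1)$ does not spoil this. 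This yields the bijection $n\leftrightarrow\lambda_{\pm n}$ and completes the proof.
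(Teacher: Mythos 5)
Your proposal is correct and follows essentially the same route as the paper: the expansion you derive for $F_U$ is precisely the paper's (\ref{eq:Sme}) combined with the integration-by-parts asymptotics (\ref{eq:vhat}) of $\hat V$ at the singular endpoints (packaged in Lemmas \ref{l:abc} and \ref{l:UTs}), with the crucial $e^{2i\lambda(b-a)}(2i\lambda)^{-(j+k+4)}$ term arising exactly as you say from the product $\rho_-\rho_+$ of the two endpoint reflections, and your final log-plus-Rouch\'e step is the paper's Lemma \ref{l:Hm1} (Hardy's method) written out inline. The only cosmetic slip is labelling the full error $E$ as $O(|\lambda|^{-\infty})$ when a genuine $O(|\lambda|^{-1})$ remainder survives, but since Hardy's method only needs $o(1)$ this is harmless.
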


The sequence in this theorem lies approximately on the logarithmic curve 
$$\left\{\lambda :\; \im \lambda = \frac{\log |C|}{2(b-a)} - \frac{j+k+4}{2(b-a)}\log |\re \lambda|\right\}$$ and has linear density \(\frac{2}{\pi}(b-a) = \frac{2}{\pi}|\ch \supp V|\). Thus when \(\ch \supp V \subset \ch \supp (V+W)^\circ \), (\ref{eq:MZas}) shows that \(-\frac{d^2}{dx^2} + V+W\) will have infinitely many resonances in \(\mathbb{C}\setminus \hat{L}_M\) and in fact, if $M$ is sufficiently large,
\begin{equation}
    n_{V+W}(\mathbb{C}\setminus \hat{L}_M ;r) = \frac{2}{\pi}\big(|\ch \supp (V+W)| - |\ch \supp V|\big)r\big(1+o(1)\big) \: \text{as} \: r \rightarrow \infty.
\end{equation}

 We note that \cite{Stepin07}  extends \cite[Theorem 6]{Zworski87} to complex-valued potentials supported
in $[a,b]$ while  allowing more general singular behavior at $a$ and $b$.  We extend the results of \cite{Stepin07} in Theorem \ref{t:L1}.

In Section \ref{s:rsum}  we show that one can choose certain
 $V_1\in \mcV_{jk}([a,b])$, $V_2\in \mcv_{kl}([b,c])$ so that the potential $V=V_1+V_2$ has resonances along two distinct logarithmic curves.
 This is made possible by the nature of the singularities of the potential at $a,\; b,$ and $c$.  Moreover, our method allows us to construct potentials with resonances appearing in a sequence of dense clouds of a specified multiplicity concentrated near points along a logarithmic curve.
For other possibilities for such sums under certain additional restrictions on $V_1$ and $V_2$ see Section \ref{s:rsum}.  
 The 
 results of Section \ref{s:rsum} in some sense parallel those for the three delta function Schr\"{o}dinger operators of 
  \cite{Datchev22}.  This is to be expected, since the underlying dynamics of the propagation of singularities should be similar.  
 See \cite{Datchev22} and \cite{GW} for further discussion and references.

We now turn to resonances in sectors away from the real axis.  For $-\pi \leq \theta<\varphi\leq0$ set
$$\sect(\theta, \varphi) := \{\lambda: \theta < \arg \lambda < \varphi\}.$$ A result of \cite{Zworski87,Froese97} shows that if $-\pi <\theta<\varphi < 0$ then $n_V(\sect(\theta, \varphi);r)=o(r)$ as $r\rightarrow \infty$.  Nonetheless, it is known that such sectors may include infinitely many resonances.  In fact,
\cite[Proposition 7]{Zworski87} constructs an explicit example with infinitely many resonances on the imaginary axis.   

This next result gives new information about high-energy resonances in sectors away from $\mathbb{R}$ by showing that the number is unchanged by any interior perturbation of the potential.
To state the theorem, we define 
$$A_V := \{\text{arg}\lambda \in [-\pi, \pi): \lambda \in \mathcal{R}_V\}$$
 and we say that a set \(E \subset \Real_+\) has \emph{finite logarithmic measure} provided
\[\int_E \frac{dx}{x} < \infty.\]
In Section \ref{sec:preliminaries} we show that for any $V, A_V$ is nowhere dense so the condition \(\theta, \varphi \not\in \Bar{A}_V\) in the following theorem is not very restrictive.
\begin{theorem}\label{t:ara}
    Let \(V \in L_{c}^{\infty}(\Real; \mathbb{C})\), let \(\sect = \sect(\theta, \varphi)\) be a sector with \(-\pi < \theta < \varphi < 0\) and \(\theta, \varphi \not\in \Bar{A}_V\), and let \(W \in L_{c}^{\infty}(\Real; \mathbb{C})\) with $\supp W \subset (\ch \supp V)^\circ$. Then there exists \(R > 0\) such that
    \[n_{V+W}(\sect;r) - n_{V+W}(\sect;R) = n_V(\sect;r) - n_V(\sect;R)\]
    for all \(r \geq R, r \not \in E_V\) where $E_V\subset \Real_+$ has finite logarithmic measure and is independent of $\sect$ and $W$.
\end{theorem}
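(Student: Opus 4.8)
The plan is to build on the integral representation for the scattering matrix entries that underlies the paper's other results, and to use a Rouché-type argument on a sequence of nested regions. Write the scattering matrix $S_V(\lambda)$ (or the relevant reflection/transmission coefficient, in the form (\ref{eq:smat})); the resonances in the lower half-plane are (generically) the zeros of the transmission coefficient $T_V(\lambda)^{-1}$, equivalently the zeros of a Fredholm-type determinant $D_V(\lambda)$ built from $V$. The key point is that $D_{V+W}(\lambda) = D_V(\lambda)\bigl(1 + r_W(\lambda)\bigr)$, or more precisely that $D_{V+W}/D_V$ can be written, via a resolvent identity applied to the perturbation $W$ with $\supp W \subset (\ch\supp V)^\circ$, as a quantity that is \emph{exponentially smaller} than $D_V$ on suitable curves. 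Concretely, since $W$ is supported strictly inside $[\min\supp V, \max\supp V]=:[a_0,b_0]$, the free propagator contributions from $W$ carry exponential factors $e^{-2|\im\lambda|(a_0 - \min\supp W)}$ and $e^{-2|\im\lambda|(\max\supp W - b_0)}$, each with a strictly positive exponent, whereas $|D_V(\lambda)|$ on a well-chosen family of curves is comparable to $e^{2|\im\lambda|(b_0-a_0)}$ up to subexponential corrections governed by the singular support. The ratio of the perturbation term to $D_V$ is then $O(e^{-c|\im\lambda|})$ for some $c>0$ on these curves.

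The core technical step, and the main obstacle, is choosing the right family of contours and controlling $D_V(\lambda)$ from \emph{below} on them. One cannot hope for a lower bound on all circles $|\lambda|=r$, which is exactly why the exceptional set $E_V$ of finite logarithmic measure appears: away from $\bar A_V$ in argument, the restriction of $D_V$ to the ray $\arg\lambda = \theta$ (and $=\varphi$) is a subexponential-order entire-type function of $r$, and a Cartan/minimum-modulus estimate on such functions gives $|D_V(re^{i\theta})| \ge e^{-C\log^2 r}$ (say) for all $r$ outside a set $E_V$ with $\int_{E_V} dr/r < \infty$; the same for $\arg\lambda=\varphi$. Because $\theta,\varphi\notin\bar A_V$, the rays themselves avoid resonances, so one also needs that $|\lambda|$ large on these rays forces $|\im\lambda|$ large (this uses $\theta,\varphi<0$), making the $e^{-c|\im\lambda|}$ bound on the perturbation term genuinely beat the $e^{-C\log^2 r}$ lower bound once $r\ge R$. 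On the two arcs $|\lambda|=r$ (for $r\notin E_V$) and $|\lambda|=R$ closing the region $\sect\cap\{R<|\lambda|<r\}$, the same minimum-modulus argument applied on circles gives the lower bound off $E_V$; I would absorb everything into a single $E_V$ of finite logarithmic measure, chosen independently of $\sect$ and $W$ precisely because it depends only on the zero set and growth of $D_V$, not on the sector or the perturbation.

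With these bounds in hand, apply Rouché (in the form of the argument principle) to $D_V$ and $D_{V+W}=D_V(1+(\text{perturbation term}))$ on the boundary of $\sect\cap\{R<|\lambda|<r\}$: on this boundary $|1 + (\text{perturbation term})|$ stays uniformly bounded away from $0$, hence $D_V$ and $D_{V+W}$ have the same number of zeros inside, counted with multiplicity. That count is exactly $n_{V+W}(\sect;r)-n_{V+W}(\sect;R)$ on one side and $n_V(\sect;r)-n_V(\sect;R)$ on the other, giving the claimed equality for all $r\ge R$ with $r\notin E_V$. Two routine wrinkles to dispatch along the way: first, resonances are zeros of $D_V$ with the correct multiplicity (this is standard in one dimension and follows from the setup around (\ref{eq:Tpm})), with at most a harmless discrepancy at $\lambda=0$ handled by taking $R>0$; second, the claim that $A_V$ is nowhere dense — already promised for Section \ref{sec:preliminaries} — is what guarantees one can pick $\theta,\varphi\notin\bar A_V$, and it is also what lets the contour rays avoid resonances so that the boundary integrals in the argument principle are well defined.
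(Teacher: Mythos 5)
Your proposal follows essentially the same route as the paper: an upper bound on the perturbation term with strictly smaller exponential rate (because $\supp W$ sits strictly inside $\ch\supp V$, this is the paper's Lemma \ref{l:Sdiff}), a lower bound on the unperturbed determinant off a fixed family of exceptional disks, and Rouch\'e's theorem on the boundary of the annular sector, with $E_V$ the set of radii whose circles meet the exceptional disks. Two points need tightening. First, your two stated lower bounds on $D_V$ are inconsistent, and only one of them works: the bound $|D_V(re^{i\theta})|\ge e^{-C\log^2 r}$ (and the claim that $D_V$ restricted to a ray is of subexponential order) is both unjustified and useless here, since the perturbation term is exponentially \emph{large}, of size $e^{2M|\im\lambda|}$ with $M<|\ch\supp V|$; what you actually need, and what you also assert in the preceding sentence, is $\log|D_V(\lambda)| = 2|\ch\supp V|\,|\im\lambda| + o(|\lambda|)$ off exceptional disks, which is exactly Lemma \ref{l:Sasymp} (Levin's theorem for Class C functions), and on the sector $|\im\lambda|\ge |\sin\varphi|\,|\lambda|$ this beats $e^{2M|\im\lambda|}$. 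Second, if you obtain the exceptional set by applying a minimum-modulus estimate ray by ray, the resulting set of bad radii a priori depends on $\theta$ and $\varphi$, contradicting the required independence of $E_V$ from $\sect$; the paper avoids this by using $\theta,\varphi\notin\bar A_V$ together with $\sum r_i/|z_i|<\infty$ to show the two bounding rays eventually miss the fixed disk family $\ci_V$ altogether, so that the exceptional radii come only from the circular arcs and $E_V$ depends on $V$ alone. With those corrections your argument matches the paper's proof.
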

Theorem \ref{t:closeres} shows even more specifically that the resonances of $V$ in this kind of sector $\sect$ are stable under perturbations of the type described here.  
Thus we see that for the high-energy distribution of resonances away from the real axis, it is really the behavior of $V$ at the boundary of its support that is most important.  This is, of course, consistent with the asymptotics (\ref{eq:MZas}) as well
as with the construction in the proof of \cite[Proposition 7]{Zworski87}.  Note the
contrast in the support properties of the perturbation $W$ as compared to $V$ in the hypotheses of Theorem  \ref{t:as} with those of  Theorem \ref{t:ara} or \ref{t:closeres}.  In 
each of these theorems, one should think of $W$ as a perturbation, though it is not necessarily small in norm.

\subsection{Overview and additional connections to previous work}

As for so many results in the study of resonances, we reduce the problem to the 
study of the zeros of an analytic function.   We will use both $\det S_V(-\lambda)$,
where $S_V$ is the scattering matrix and $\im \lambda <0$, and the Fredholm determinant
$\lambda D_V(\lambda)=\lambda \det (I+VR_0(\lambda)\chi)$, where $R_0(\lambda)$ is the free resolvent (see (\ref{eq:free})) and $\chi \in C_c^\infty(\Real)$ 
satisfies $\chi V=V$.   Both $S_V$ and $D_V$ are introduced in Section \ref{sec:preliminaries}.
   The zeros of either of these functions correspond to the resonances of $-\frac{d^2}{dx^2}+V$.  From \cite{Froese97}
we use both that $\lambda D_V(\lambda)$ is an entire function of class C (see Section \ref{sec:preliminaries}) and a 
representation of 
$\det S_V(\lambda)$, (\ref{eq:Sme}).  This representation of the determinant of the scattering matrix and its relation to $D_V(\lambda)$
 allows us to, for example, bound $D_V(\lambda)$ in logarithmic neighborhoods of the real axis in terms of the singular support of 
 $V$.  Preliminary work for this is done in Section \ref{s:spbds}.  This bound, along with a Jensen-type formula for ellipses, is what is needed to prove Theorem \ref{t:ss} in Section \ref{s:ss}.  
 
 Section \ref{s:extension} proves three extensions of \cite[Theorem 6]{Zworski87} (see also \cite{Stepin07}), including Theorem \ref{t:as}.   Each of these expands the 
 class of potentials for which one can explicitly calculate (up to small error) a sequence of resonances asymptotic to a logarithmic curve.  Section \ref{s:rsum} 
 also produces examples of such potentials, though we find more complicated behavior as well.  For potentials of the 
 type considered in Theorem \ref{t:as}, (\ref{eq:Sme}) allows us to find the leading terms of $\det S_V(-\lambda)$ when $-T \log (|\lambda|+1)<\im \lambda<0$.  This and an application of Hardy's method, see Lemma \ref{l:Hm1}, are the main ingredients in the proof of Theorem \ref{t:as}. 
 
 Our results in Section \ref{s:logc} give a variety of explicit examples of resonances generated by diffraction of singularities in the homogenous (nonsemiclassical) setting. Theorem \ref{t:as} in particular can be used to give more general examples of the optimality of the resonance-free
region for diffractive trapping obtained in \cite[Theorem 1]{GW} (see \cite[Theorem 2]{GW} where a potential in $\mathcal{V}_{k,l}([0, L])$ is used to illustrate that optimality). The paper \cite{Datchev22} (see also \cite{DatchevN22,Brady23}) gives some results analogous to our Section \ref{s:rsum} on the existence of resonances along distinct logarithmic curves, but in the context of $\delta$-function potentials; although we will not explore the connection in this paper, we note that the dynamical picture of propagation of singularities in these two settings is very similar. 
  For much more
on the appearance of sequences of resonances along logarithmic curves and related phenomena in a variety of scattering theoretic contexts, the 
reader may consult \cite[Section 1.2]{GW}, \cite[Introduction]{Datchev22}, and references therein. Along with those in \cite{Datchev22}, our results are among the most precise in that we calculate expressions for the asymptotic location of the resonances, allowing one to see explicitly how the sequences vary with the parameters defining the potentials.

It is a result of Vainberg  and Lax-Phillips \cite{Vainberg75, LaxPhillips}
 (which holds in more general settings, see also \cite[Section 4.6]{Dyatlov19}) that if $V\in C_c^\infty(\Real;\Real)$ then
$-\frac{d^2}{dx^2}+V$ has only finitely many resonances above any logarithmic curve.  Galkowski \cite{Galkowski17} quantifies this, by 
relating the smoothing properties of the wave group to the existence of specific logarithmic neighborhoods of the real axis with only finitely many resonances.

We prove Theorem \ref{t:ara} in Section \ref{s:awayfromR}.  Although all of the results of this paper use an intermediate step of \cite{Froese97} (see also \cite{Si}) as a starting point, it is the proof of Theorem \ref{t:ara} which has the most in common with \cite{Froese97}, using results from the theory of entire functions of exponential type.

 An appendix  provides some applications of a well-known method of G.H. Hardy on 
locating the solutions of certain transcendental equations which we use repeatedly in Section \ref{s:logc}. 

In addition to the papers previously mentioned, \cite{K1,K2} and references therein prove further estimates on resonances in one dimension.

  \vspace{2mm}
\noindent\textbf{Acknowledgments.} We are very grateful to the Prison Math Project and in particular Tian An Wong both for general support and for
 handling the typing of the original manuscript of this paper.   Thanks also to Carlo Beenakker and Dan Cunningham  for logistical help.

\section{Preliminaries}\label{sec:preliminaries}
For $\im\lambda > 0$ we set $R_0 (\lambda) := (-\frac{d^2}{dx^2} - \lambda ^ 2)^{-1}:L^2(\Real)\rightarrow L^2(\Real)$.  The free resolvent
$R_0$ has explicit integral kernel
\begin{equation}\label{eq:free}
    R_0(\lambda ; x,y) = \frac{i}{2\lambda}e^{i\lambda|x-y|}
\end{equation}
and hence continues meromorphically to $\mathbb{C}$ as an operator from $L_c^2(\Real)$ to $L_{\loc}^2(\Real)$. Given $V \in L_c^\infty (\Real;\mathbb{C})$, the resolvent $R_V(\lambda) := (-\frac{d^2}{dx^2} + V - \lambda^2)^{-1}$, initially defined for $\im\lambda \gg 1$, likewise has a meromorphic continuation to $\mathbb{C}$ as an operator from $L_c^2(\Real)$ to $L_{\loc}^2(\Real)$; the poles of this meromorphic continuation are called resonances.

It is well-known (see \cite[Chapter 2]{Dyatlov19} and \cite{Froese97}) that resonances can also be identified as the zeros of the Fredholm determinant,
$$D_V(\lambda) := \det(I+VR_0(\lambda)\chi),$$
where $\chi \in L_c^\infty(\Real)$ satisfies $\chi V = V$, using that $VR_0(\lambda)\chi$ is a meromorphic family of trace class operators. (We note that much of our notation follows \cite{Froese97} and \cite[Chapter 2]{Dyatlov19}, but the notation for the resolvents is that of \cite{Dyatlov19}.)
The function $D_V$ has the following properties:
\begin{center}
\begin{varwidth}{0.8\textwidth}
\begin{enumerate}[(i)]
    \item \(\det S_V(\lambda) = D_V(-\lambda)/D_V(\lambda)\) where $S_V(\lambda)$ is the scattering matrix.
    \item The function $f_V(\lambda) := \lambda D_V(\lambda)$ is an entire function of exponential type with indicator function 
    \[h_{f_V}(\theta) = 
        \begin{cases}
          0 & \ \theta \in [0, \pi] \\
          2|\ch \supp V||\sin \theta| & \ \theta \in [-\pi, 0].
        \end{cases}
    \]
    \item There are constants $C_0,\; C_1,\; C_2>0$ so that in  $\im \lambda \geq 0, |\lambda| > C_0$ we have $C_1 \leq |D_V(\lambda)| \leq C_2$ 
    (see \cite[Theorem 2.17]{Dyatlov19}); in particular
    \[\int_{-\infty}^{\infty} \frac{\log^+ |f_V(t)|dt}{1+t^2} < \infty.\]
\end{enumerate}
\end{varwidth}
\end{center}

In \cite[Chapter 16]{Levin96}, entire functions satisfying (ii) and the last part of (iii) are said to be of Class C. Applying \cite[Section 16.1, Theorem 2]{Levin96} to $f_V$ and using (i) above, we obtain the following application of these facts:

\begin{lemma}\label{l:Sasymp}
    Given $V \in L_c^\infty (\Real;\mathbb{C})$, we can find a union of disks $\ci_V = \bigcup\limits_{i=1}^\infty B(z_i;r_i)$ satisfying $\sum\limits_{i=1}^\infty \frac{r_i}{|z_i|} < \infty$, and such that
    \begin{equation}\label{eq:Sasympt}
        \log |\det S_V(-\lambda)| = 2|\ch \supp V||\im \lambda| + o(|\lambda|)
    \end{equation}
    for $\im \lambda \leq 0$ and $\lambda \not \in \ci_V$.
\end{lemma}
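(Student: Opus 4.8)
The plan is to reduce the statement to the growth of the single entire function $f_V(\lambda) = \lambda D_V(\lambda)$ and then apply the structure theory of functions of Class C. From property (i), $\det S_V(-\lambda) = D_V(\lambda)/D_V(-\lambda)$, and writing $D_V(\lambda) = f_V(\lambda)/\lambda$ this becomes $\det S_V(-\lambda) = -f_V(\lambda)/f_V(-\lambda)$, so
\[\log|\det S_V(-\lambda)| = \log|f_V(\lambda)| - \log|f_V(-\lambda)|.\]
Hence it suffices to control $\log|f_V|$ off a small exceptional set; the stray factors of $\lambda$ contribute only $\pm\log|\lambda| = o(|\lambda|)$, which is absorbed into the error term.

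By properties (ii) and (iii), $f_V$ is an entire function of Class C in the sense of \cite[Chapter 16]{Levin96}. Applying \cite[Section 16.1, Theorem 2]{Levin96} to $f_V$ produces a union of disks $\ci_V' = \bigcup_i B(z_i;r_i)$ with $\sum_i r_i/|z_i| < \infty$ such that
\[\log|f_V(\lambda)| = |\lambda|\,h_{f_V}(\arg\lambda) + o(|\lambda|) \qquad (|\lambda|\to\infty,\ \lambda\notin\ci_V'),\]
with the error uniform in $\arg\lambda$ (this is the assertion that Class C functions have completely regular growth). Since reflection through the origin preserves each ratio $r_i/|z_i|$, the set $\ci_V := \ci_V' \cup (-\ci_V')$ is again of this type, and for $\lambda\notin\ci_V$ both $f_V(\lambda)$ and $f_V(-\lambda)$ obey the displayed asymptotic.

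Finally I would insert the explicit indicator function from (ii). For $\im\lambda \le 0$ one has $\arg\lambda\in[-\pi,0]$, so $h_{f_V}(\arg\lambda) = 2|\ch\supp V||\sin\arg\lambda|$ and therefore $|\lambda|\,h_{f_V}(\arg\lambda) = 2|\ch\supp V|\,|\im\lambda|$; meanwhile $-\lambda$ has argument in $[0,\pi]$, where $h_{f_V}\equiv 0$, so $\log|f_V(-\lambda)| = o(|\lambda|)$. Subtracting the two asymptotics gives $\log|\det S_V(-\lambda)| = 2|\ch\supp V|\,|\im\lambda| + o(|\lambda|)$ for $\lambda\notin\ci_V$ with $\im\lambda\le 0$, as claimed. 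The only step requiring genuine care is the invocation of Levin: one must check that the two defining conditions of Class C in \cite[Chapter 16]{Levin96} are precisely property (ii) together with the integrability statement in (iii), and that the cited theorem really yields a uniform-in-angle asymptotic valid off a union of disks with $\sum_i r_i/|z_i| < \infty$ (rather than merely off a set that is thin along individual rays). Granting this, everything else is bookkeeping with $h_{f_V}$.
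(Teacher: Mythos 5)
Your proof is correct and follows essentially the same route as the paper: the lemma is precisely the application of Levin's completely-regular-growth theorem for Class C functions to $f_V(\lambda)=\lambda D_V(\lambda)$, combined with property (i) to convert the asymptotics of $\log|f_V|$ into asymptotics of $\log|\det S_V(-\lambda)|$. The one small simplification you could make is that the symmetrization $\ci_V=\ci_V'\cup(-\ci_V')$ is unnecessary: property (iii) gives $C_1\le|D_V(-\lambda)|\le C_2$ for $\im\lambda\le 0$, $|\lambda|>C_0$, so $\log|f_V(-\lambda)|=O(\log|\lambda|)=o(|\lambda|)$ with no excluded disks in the upper half-plane.
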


Although the exceptional disks in the set $\ci_V$ are not effectively constructed, because resonances are the zeros of $\det S_V(-\lambda)$ in $\im \lambda \leq 0$, it is obvious from (\ref{eq:Sasympt}) that $\mathcal{R}_V \subset \ci_V \cup B(0, C)$ for some large $C$. A particular consequence of this is that the set $A_V = \{\arg\lambda \in [-\pi, \pi): \lambda \in \mathcal{R}_V\}$ is nowhere dense. Indeed, if $\Bar{A}_V$ contained a non-empty interval $I$, we choose $R > 0$ so that $\sum\limits_{|z_i| > R} \frac{r_i}{|z_i|} < |I|/4$. Since $\frac{r_i}{|z_i|} = \sin \frac{\alpha_i}{2}$ where $\alpha_i$ is the magnitude of the opening of the smallest sector containing the disk $B(z_i; r_i)$, and since $\frac{1}{2} < \frac{\sin \theta}{\theta}$ for small $\theta$, we see that $\sum\limits_{|z_i| > R} \alpha_i < |I|$. This is clearly a contradiction.

Next we define
\begin{align}\label{eq:fpm}
    f_\pm (x, \lambda) & := e^{\pm i\lambda x} R_V(-\lambda)e^{\mp i \lambda x} V,\\
\label{eq:Tpm}
    \rhopm (-\lambda) &:= \int e^{\mp 2i\lambda x} V(x) (1-f_\pm (x, \lambda))dx.
\end{align}
and
\[
\mathcal{T}_\pm(-\lambda) := \int V(x)(1-f_\pm(x, \lambda))dx.
\]

We will sometimes use superscripts $f_\pm^V$, etc., to denote dependence upon the potential. In \cite[Lemma 3.3]{Froese97}, Froese shows that
\begin{equation}\label{eq:fbd}
    |f_\pm (\bullet, \lambda)| \leq \frac{C}{|\lambda|}
\end{equation}
for $\im\lambda \leq 0$ and $|\lambda| > C_1$, and his proof shows that the choice of $C, C_1$ depends only on $\|V\|_{L^{1}}$.

The point of the definitions above is that the following representation of the scattering matrix holds:
\begin{equation}\label{eq:smat}
    S_V(\lambda) = I + \frac{1}{2i\lambda}
    \begin{pmatrix}
    \mathcal{T}_+(\lambda) & \rhom(\lambda)\\
    \rhop(\lambda) & \mathcal{T}_-(\lambda)
    \end{pmatrix}
    .
\end{equation}
Thus the functions $\rhopm$ which we shall study are $2i\lambda$ times the reflection coefficients.

We therefore obtain the following useful expansion for the determinant of the scattering matrix:
\begin{equation}\label{eq:Sme}
    \det S_V(-\lambda) = 1 +\frac{1}{4\lambda^2}\rhom(-\lambda)\rhop(-\lambda)+O(|\lambda|^{-1})
\end{equation}
for $\im\lambda \leq 0$ and $|\lambda| > C_0$ for some $C_0$. See \cite{Froese97} and \cite[Chapter 2]{Dyatlov19} for more on this representation.

We note that although it is standard to work with potentials $V \in L_c^\infty(\Real)$ (and in fact we need this for some of the bounds we make), the definition of $R_V(\lambda)$, and hence of $f_\pm$ and $\rhopm$ are equally valid when $V \in L_c^1(\Real)$ and we will work briefly with such a $V$ in Theorem \ref{t:L1}.

The definition of $f_\pm $ from (\ref{eq:fpm}) and a resolvent identity yield
\begin{equation}\label{eq:fpmB}
f_{\pm}=B_{\pm} (V )-B_{\pm}(Vf_\pm)
\end{equation}
where
\begin{equation}\label{eq:Bpm}
B_\pm = B_\pm (\lambda) := e^{\pm i \lambda x} R_0(-\lambda) e^{\mp i \lambda \bullet}.
\end{equation}

Another tool we will need is the following important property of $\rhopm$ involving sums of potentials; this observation will be crucial for us in most of the remaining sections.
\begin{lemma}\label{l:Tsum}
    Let $V, W \in L_c^\infty (\Real;\mathbb{C})$ where $\ch \supp V = [a, b]$, $\ch \supp W = [a_0, b_0]$, with $a \leq a_0 < b_0 \leq b$. Then there exists $C_0 > 0$ such that
    \begin{align*}\rhom^{V+W} (-\lambda)&  = \rhom^V(-\lambda) + e^{2i\lambda b_0}O(1)\\
    \rhop^{V+W} (-\lambda) &= \rhop^V(-\lambda) + e^{-2i\lambda a_0}O(1)
    \end{align*}
    when $\im \lambda \leq 0, |\lambda| > C_0$.
\end{lemma}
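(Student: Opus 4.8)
The idea is to convert the definitions (\ref{eq:fpm})--(\ref{eq:Tpm}) into an \emph{exact} ``sandwich'' formula for the difference $\rhom^{V+W}(-\lambda)-\rhom^{V}(-\lambda)$ (and similarly for $\rhop$) as a single integral over $\supp W$, and then read off the exponential factor directly from the support of $W$. Throughout I work in $\{\im\lambda\le 0,\ |\lambda|>C_0\}$ with $C_0$ large enough that, by property (iii) applied to $D_V$ and to $D_{V+W}$, all resonances of modulus $>C_0$ lie in $\{\im\lambda<0\}$; hence $R_V(-\lambda)$ and $R_{V+W}(-\lambda)$ are genuine pole-free operators $L_c^2(\Real)\to L_{\loc}^2(\Real)$ there. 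Set
\[
\psi_+^{V}(x,\lambda):=e^{i\lambda x}\bigl(1-f_-^{V}(x,\lambda)\bigr)=e^{i\lambda x}-\bigl(R_V(-\lambda)(e^{i\lambda\bullet}V)\bigr)(x),\qquad
\psi_-^{V}(x,\lambda):=e^{-i\lambda x}\bigl(1-f_+^{V}(x,\lambda)\bigr),
\]
the second equality coming from (\ref{eq:fpm}); unwinding (\ref{eq:Tpm}) then gives $\rhom^{V}(-\lambda)=\int e^{i\lambda x}V(x)\psi_+^{V}(x,\lambda)\,dx$ and $\rhop^{V}(-\lambda)=\int e^{-i\lambda x}V(x)\psi_-^{V}(x,\lambda)\,dx$, with the same formulas for $V+W$.

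\textbf{The key identity.} I claim $\rhom^{V+W}(-\lambda)-\rhom^{V}(-\lambda)=\int \psi_+^{V}(x,\lambda)\,W(x)\,\psi_+^{V+W}(x,\lambda)\,dx$ (and its twin with $\rhop,\psi_-$). To prove it, apply the resolvent identity $R_{V+W}(-\lambda)=R_V(-\lambda)-R_V(-\lambda)\,W\,R_{V+W}(-\lambda)$ to $e^{i\lambda\bullet}(V+W)$ and combine it with $R_{V}(-\lambda)(e^{i\lambda\bullet}V)=e^{i\lambda x}-\psi_+^{V}$ (and the analogue for $V+W$); after the two $e^{i\lambda\bullet}W$ terms cancel one gets the relation $\psi_+^{V+W}=\psi_+^{V}-R_V(-\lambda)\bigl(W\psi_+^{V+W}\bigr)$. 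Now write
\[
\rhom^{V+W}(-\lambda)-\rhom^{V}(-\lambda)=\int e^{i\lambda x}W\psi_+^{V+W}\,dx+\int e^{i\lambda x}V\bigl(\psi_+^{V+W}-\psi_+^{V}\bigr)\,dx,
\]
substitute the relation just derived into the second integral, and transfer $R_V(-\lambda)$ off $e^{i\lambda\bullet}V$ onto $W\psi_+^{V+W}$ using the symmetry of the Green's function of $-\frac{d^2}{dx^2}+V$ (legitimate for complex $V$, since this operator is formally self-transpose, so $\int u\,R_V(-\lambda)v=\int \bigl(R_V(-\lambda)u\bigr)v$ for $u,v\in L_c^2(\Real)$, and this persists by analytic continuation to the region at hand). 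Using once more that $R_V(-\lambda)(e^{i\lambda\bullet}V)=e^{i\lambda x}-\psi_+^{V}$, the two copies of $\int e^{i\lambda x}W\psi_+^{V+W}\,dx$ cancel and the identity drops out.

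\textbf{The estimate, and the main obstacle.} By (\ref{eq:fbd}) for $V$ and for $V+W$ one has $|1-f_\mp^{\bullet}(x,\lambda)|\le 2$ once $|\lambda|$ is large, hence $|\psi_+^{\bullet}(x,\lambda)|\le 2|e^{i\lambda x}|$ and $|\psi_-^{\bullet}(x,\lambda)|\le 2|e^{-i\lambda x}|$. Since $\supp W\subset[a_0,b_0]$ and $|e^{2i\lambda x}|=e^{-2(\im\lambda)x}$ is nondecreasing in $x$ when $\im\lambda\le 0$,
\[
\Bigl|\int \psi_+^{V}\,W\,\psi_+^{V+W}\,dx\Bigr|\le 4\int_{a_0}^{b_0}|e^{2i\lambda x}|\,|W(x)|\,dx\le 4\|W\|_{L^1}\,|e^{2i\lambda b_0}|=e^{2i\lambda b_0}O(1),
\]
and, using instead that $|e^{-2i\lambda x}|$ is nonincreasing in $x$, the $\rhop$ integral is $\le 4\|W\|_{L^1}|e^{-2i\lambda a_0}|=e^{-2i\lambda a_0}O(1)$; enlarging $C_0$ once more to absorb the constants from (\ref{eq:fbd}) completes the proof. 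I expect the sandwich identity to be the only nontrivial point: the cancellation of the $\int e^{i\lambda x}W\psi_+^{V+W}\,dx$ terms hinges on combining the relation $R_V(-\lambda)(e^{i\lambda\bullet}V)=e^{i\lambda x}-\psi_+^{V}$ with the symmetry of the Green's function, and one must take care that all resolvents involved are genuinely pole-free on $\{\im\lambda\le0,\ |\lambda|>C_0\}$. (Alternatively, one could stay within the $B_\pm$ formalism: subtracting (\ref{eq:fpmB}) for $V$ and $V+W$ gives $(I+B_-V)(f_-^V-f_-^{V+W})=-B_-\bigl(W(1-f_-^{V+W})\bigr)$, and a direct analysis of the kernel of $B_-$ with a weight $e^{2i\lambda(b_0-x)}$ yields the same bound; but the route above is cleaner.)
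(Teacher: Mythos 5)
Your proof is correct, and it takes a genuinely different route from the paper's at the decisive step. Both arguments begin by splitting the difference into an integral over $\supp W$ plus an integral of $V$ against $f_-^{V}-f_-^{V+W}$ (equivalently $\psi_+^{V+W}-\psi_+^{V}$), but from there they diverge. The paper stays inside the $B_\pm$ formalism: it derives the integral equation $(I+VB_-\chi)F=G$ for $F=V(f_-^{V+W}-f_-^{V})$, inverts the operator using $\|VB_\pm\chi\|_{L^2\to L^2}\le C/|\lambda|$, and then transports the exponential weight through the inverse via the conjugation identity $e^{2i\lambda x}(I+VB_-\chi)^{-1}e^{-2i\lambda\bullet}=(I+VB_+\chi)^{-1}$ — exactly the alternative you sketch in your closing parenthesis. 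You instead push the resolvent identity all the way to the exact sandwich formula $\rhom^{V+W}(-\lambda)-\rhom^{V}(-\lambda)=\int\psi_+^{V}\,W\,\psi_+^{V+W}\,dx$, at the price of invoking the symmetry of the Green's function of $-\frac{d^2}{dx^2}+V$ (valid for complex $V$ by formal self-transposedness and meromorphic continuation, as you note, and consistent with the symmetry of the free kernel (\ref{eq:free})); the estimate then reduces to the monotonicity of $|e^{\pm 2i\lambda x}|$ on $[a_0,b_0]$ together with (\ref{eq:fbd}) for both $V$ and $V+W$ (legitimate, since the constants there depend only on the $L^1$ norm). Your computation of the sandwich identity checks out — the two copies of $\int e^{i\lambda x}W\psi_+^{V+W}$ do cancel — and the identity is arguably more informative than the paper's bound, since it isolates the difference as a single explicit bilinear expression in $W$; what it costs you is the extra structural input of the transpose symmetry of $R_V$, which the paper's operator-inversion argument avoids.
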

\begin{proof}
     From (\ref{eq:Tpm}) we may write
    \begin{multline*}
    \rhom^{V+W}(-\lambda) - \rhom^V(-\lambda) \\= \int_{a_0}^{b_0} e^{2i\lambda x}W(x)(1-f_-^{V+W}(x, \lambda))dx 
    + \int_a^b e^{2i\lambda x} V(x)(f_-^V (x, \lambda)-f_-^{V+W} (x, \lambda))dx.
    \end{multline*}
    Using (\ref{eq:fbd}), the first term on the right is clearly $e^{2i\lambda b_0}O(1)$. For the second term,
    (\ref{eq:fpmB}) gives
    \begin{equation}\label{eq:fdiff}
        f_-^{V+W} - f_-^V = B_-(W) - B_-(Wf_-^{V+W}) + B_-(V(f_-^V - f_-^{V+W})).
    \end{equation}
    Defining
    \begin{align*}
    F(x, \lambda) & := V(x)(f_-^{V+W}(x, \lambda) - f_-^V(x, \lambda))\\
    G(x, \lambda) & := V(x)B_-(W)(x, \lambda) - V(x)B_-(Wf_-^{V+W})(x, \lambda),
    \end{align*}
    we note the easy bound
    \begin{equation}\label{eq:Gbd}
        G(x, \lambda) = e^{2i\lambda(b_0-x)}O(1) \quad \text{in } \im\lambda \leq 0, |\lambda| > C_0,
    \end{equation}
    and write $V$ times (\ref{eq:fdiff}) as
    \begin{equation}\label{eq:cat}
        F(x, \lambda) = G(x, \lambda) - V(x)B_-(F)(x, \lambda).
    \end{equation}
    
    We need to show that 
    \begin{equation}
        \int_a^b e^{2i\lambda x}F(x, \lambda)dx = e^{2i\lambda b_0}O(1)
    \end{equation}\label{eq:Artie}
    in $\im\lambda \leq 0, |\lambda| > C_0$. If $\chi \in L_c^\infty$ and $\chi V = V$, then (\ref{eq:cat}) says that
    \[(I+VB_-\chi)F = G.\]
    Using the explicit integral kernel for the free resolvent (\ref{eq:free}), it is easy to see that in $\im\lambda \leq 0, |\lambda| > C_0$, we have
    \[\| VB_\pm\chi \| _{L^2 \rightarrow L^2} \leq \frac{C}{|\lambda|},\]
    so we can invert $(I+VB_\pm \chi)$ and $\|(I+VB_\pm \chi)^{-1}\|_{L^2 \rightarrow L^2} \leq C$ in $\im\lambda \leq 0,$ when $ |\lambda| > C_0$. We now calculate
    \begin{align*}
    \int_a^b e^{2i\lambda x} F(x, \lambda)dx & = \int_a^b e^{2i\lambda x}\left[(I+VB_-\chi)^{-1}G\right](x, \lambda)dx \\
    & = \int_a^b \left[(I+VB_+\chi)^{-1}(e^{2i\lambda\bullet}G)\right](x, \lambda)dx,
    \end{align*}
    where we used that $e^{2i\lambda x}(I+VB_-\chi )^{-1}e^{-2i\lambda\bullet} = (I+VB_+\chi )^{-1}$. Therefore using (\ref{eq:Gbd}),
    \[\left|e^{-2i\lambda b_0} \int_a^b e^{2i\lambda x} F(x, \lambda)dx\right| = \left|\int_a^b \left[(I+VB_+\chi)^{-1}(e^{2i\lambda (\bullet - b_0)}G)\right](x, \lambda)dx\right| \leq C\]
    in $\im\lambda \leq 0, |\lambda| > C_0$. This proves (\ref{eq:Artie}) and hence the lemma for $\rhom$. The proof for $\rhop$ is analogous.
\end{proof}

Lemma \ref{l:Tsum} is used in Section \ref{s:awayfromR} through the following corollary giving a bound on the difference of two scattering determinants.

\begin{lemma}\label{l:Sdiff}
    Let $V$ and $W$ be as in Lemma \ref{l:Tsum}. If $M := \max\{b_0 - a, b-a_0\}$, then there exist $C,\;C_0 > 0$ such that 
    \[|\det S_{V+W}(-\lambda) - \det S_V(-\lambda)| \leq C e^{2M|\im\lambda |}\]
    for $\im \lambda \leq 0, |\lambda| > C_0$.
\end{lemma}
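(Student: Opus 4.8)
The plan is to combine the representation (\ref{eq:Sme}) of the scattering determinant with Lemma \ref{l:Tsum}. First I would observe that the error term in (\ref{eq:Sme}) is genuinely uniform in the lower half-plane: since $\mathcal{T}_\pm^V(-\lambda)=\int V(x)(1-f_\pm(x,\lambda))\,dx$ and (\ref{eq:fbd}) give $|\mathcal{T}_\pm^V(-\lambda)|\le C$ with $C$ depending only on $\|V\|_{L^1}$ — and likewise for $V+W$, since $\|V+W\|_{L^1}\le\|V\|_{L^1}+\|W\|_{L^1}$ — the terms of $\det S_V(-\lambda)-1-\tfrac{1}{4\lambda^2}\rhom^V(-\lambda)\rhop^V(-\lambda)$ carry no exponential growth and are $O(|\lambda|^{-1})$ uniformly for $\im\lambda\le0$. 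Subtracting the two expansions we get
\[
\det S_{V+W}(-\lambda)-\det S_V(-\lambda)=\frac{\rhom^{V+W}(-\lambda)\rhop^{V+W}(-\lambda)-\rhom^{V}(-\lambda)\rhop^{V}(-\lambda)}{4\lambda^2}+O(|\lambda|^{-1})
\]
for $\im\lambda\le0$, $|\lambda|>C_0$. Since $a\le a_0<b_0\le b$ forces $M>0$, the $O(|\lambda|^{-1})$ is bounded by $Ce^{2M|\im\lambda|}$ and is harmless; so the problem reduces to estimating the difference of the products $\rhom\rhop$.

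Next I would record the elementary exponential bounds on $\rhopm^V$ coming directly from (\ref{eq:Tpm}) and (\ref{eq:fbd}). Writing $\mu:=|\im\lambda|=-\im\lambda\ge0$, so that $|e^{\pm2i\lambda x}|=e^{\pm2\mu x}$, and using $\supp V\subset[a,b]$ together with $|1-f_\pm|\le 2$ for $|\lambda|>C_0$, one has
\[
|\rhom^V(-\lambda)|\le Ce^{2\mu b},\qquad |\rhop^V(-\lambda)|\le Ce^{-2\mu a},
\]
because $e^{2\mu x}$ is maximized over $[a,b]$ at $x=b$ and $e^{-2\mu x}$ at $x=a$. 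Now I substitute the two identities of Lemma \ref{l:Tsum}, namely $\rhom^{V+W}=\rhom^V+e^{2i\lambda b_0}O(1)$ and $\rhop^{V+W}=\rhop^V+e^{-2i\lambda a_0}O(1)$, into the product; the diagonal term cancels, leaving
\[
\rhom^{V}e^{-2i\lambda a_0}O(1)\;+\;e^{2i\lambda b_0}O(1)\,\rhop^{V}\;+\;e^{2i\lambda(b_0-a_0)}O(1),
\]
whose three terms are bounded, using the displays above, by $Ce^{2\mu(b-a_0)}$, $Ce^{2\mu(b_0-a)}$, and $Ce^{2\mu(b_0-a_0)}$ respectively.

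The last step is just the arithmetic of the exponents: from $a\le a_0<b_0\le b$ we get $b-a_0\le M$, $b_0-a\le M$, and $b_0-a_0\le b_0-a\le M$, so with $\mu\ge0$ each of the three quantities is at most $e^{2M\mu}=e^{2M|\im\lambda|}$. Hence $|\rhom^{V+W}\rhop^{V+W}-\rhom^V\rhop^V|\le Ce^{2M|\im\lambda|}$, and dividing by $4|\lambda|^2\ge4C_0^2$ and adding the $O(|\lambda|^{-1})\le C$ term gives $|\det S_{V+W}(-\lambda)-\det S_V(-\lambda)|\le Ce^{2M|\im\lambda|}$ in $\im\lambda\le0$, $|\lambda|>C_0$, as claimed. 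There is no substantial obstacle: the computation is routine once Lemma \ref{l:Tsum} is in hand, and the only points requiring care are that the error term in (\ref{eq:Sme}) genuinely does not grow exponentially in the lower half-plane (which is why one notes the boundedness of the $\mathcal{T}_\pm$, in contrast with the $\rhopm$), and that the three combined exponents $b-a_0$, $b_0-a$, $b_0-a_0$ match exactly the definition of $M$.
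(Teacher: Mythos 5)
Your proof is correct and follows essentially the same route as the paper's: both start from (\ref{eq:Sme}), invoke Lemma \ref{l:Tsum} for the differences $\rhopm^{V+W}-\rhopm^V$, and use the elementary bounds $|\rhom|\leq Ce^{2|\im\lambda|b}$, $|\rhop|\leq Ce^{-2|\im\lambda|a}$ coming from (\ref{eq:fbd}). The only (cosmetic) difference is algebraic: the paper writes the difference of products via the polarization identity $xy-x'y'=\tfrac12(x-x')(y+y')+\tfrac12(y-y')(x+x')$ and bounds the sums $\rhopm^{V+W}+\rhopm^V$, whereas you expand into three cross terms and bound each factor separately; the resulting exponents $b-a_0$, $b_0-a$, $b_0-a_0$ are identical.
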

\begin{proof}
     Using (\ref{eq:Sme}) we have
    \begin{multline*}\det S_{V+W} (-\lambda) - \det S_V(-\lambda) = \frac{1}{8\lambda^2}\left(\rhom^{V+W} (-\lambda) - \rhom^V(-\lambda)\right)\left(\rhop^{V+W}(-\lambda) + \rhop^V(-\lambda)\right)\\
    + \frac{1}{8\lambda^2}\left(\rhop^{V+W}(-\lambda) - \rhop^V(-\lambda)\right)\left(\rhom^{V+W}(-\lambda) + \rhom^V(-\lambda)\right)+O(|\lambda|^{-1}).
    \end{multline*}
    Since (\ref{eq:fbd}) gives us the bounds
    \begin{align*}
    \rhom^{V+W}(-\lambda) + \rhom^V(-\lambda) &= e^{2i\lambda b}O(1)\\
    \rhop^{V+W}(-\lambda) + \rhop^V(-\lambda) &= e^{-2i\lambda a}O(1),
    \end{align*}
    one easily concludes using Lemma \ref{l:Tsum} and taking $|\lambda|$ large enough.
\end{proof}

\section{Resonances along logarithmic curves}\label{s:logc}

In this section we outline a  method which shows how functions in the class $\mathcal{V}_{j, k} ([a, b])$ from  Definition \ref{d:Vjk} can be used to construct potentials with sequences of resonances along logarithmic curves.

\subsection{Preliminary results for $\rhopm$}    \label{s:spbds}

We begin here by developing the tools which allow us to relate the asymptotics of the determinant of the scattering matrix to certain properties of the potential. The potentials we consider in this section consist of sums of functions in $L_c^\infty(\Real),\; C_c^\infty(\Real)$, and $\mathcal{V}_{j,k}([a,b])$ and in particular will have varying degrees of regularity on intervals bounded by certain distinguished points of the support. The following definition is made to capture that concept, and gives us a shorthand notation for the proofs of our technical lemmas.

\begin{definition}
    If $a_0 \leq a \leq b \leq c$ and $0 \leq j \leq k \leq l \leq \infty$, we say that a function $U: \Real \rightarrow \mathbb{C}$ is in $\mathcal{U}(a_0, a, b, c; j, k, l)$ provided:
    \begin{center}
    \begin{varwidth}{0.8\textwidth}
    \begin{enumerate}[(i)]
        \item $U \in C^\infty ([a_0, a]) \cap C^{N_1}([a,b]) \cap C^{N_2}([b,c])$ for some $N_1 > k, N_2 > l,$
        \item $U$ is smooth across $a_0$ and continuous up to (but not necessarily including) order $j$ at $a$, order $k$ at $b$, and order $l$ at $c$.
    \end{enumerate}
    \end{varwidth}
    \end{center}
Moreover, if $a_0 = a, a = b$, or $b = c$, we require $U$ to satisfy only the more lenient of the restrictions at that point.
\end{definition}

We note that we have not required these functions to have compact support, as we will also use the definition to describe the regularity properties of the functions $f_\pm$, see Lemma \ref{l:abc}.

We shall see that for some such $U$, Lemma \ref{l:UTs}
combined with (\ref{eq:Sme}) shows that the leading behavior of $\det S_U(-\lambda)$ in $\im \lambda \leq 0$ is determined by the Fourier transform of the potential.  A first step in this direction, which will allow us to integrate by parts in the definition of $\rhom$, is the following lemma.
There is an analogous result for $f_+$.


\begin{lemma}\label{l:abc}
    Let $U \in \mathcal{U}(a_0, a, b, c; j, k, l)$ with $\ch \supp U=[a_0, c]$. Then there exists $C_0 > 0$ such that in $\im \lambda \leq 0, |\lambda| > C_0$, we have $f_-(\bullet, \lambda) = f_-^U(\bullet, \lambda) \in \mathcal{U}(a_0, a, b, c; j+2, k+2, l+2)$ and the following expansions hold:
    \begin{center}
    \begin{varwidth}{0.85\textwidth}
    \begin{enumerate}[(i)]
        \item In $[a_0, a], \partial_x^d f_-(\bullet, \lambda) = O(|\lambda|^{-1})$ for $d \geq 0$.
        \\
        \item In $[a, b], \partial_x^d f_-(\bullet, \lambda) = O(|\lambda|^{-1})$ for $0 \leq d \leq j+1$,
        \\
        $\partial_x^d f_-(\bullet, \lambda) = e^{-2i \lambda (x-a)} O(|\lambda|^{d-j-2}) + O(|\lambda|^{-1})$ for $j+2 \leq d \leq k+1.$
        \\
        \item In $[b,c], \partial_x^d f_-(\bullet, \lambda) = O(|\lambda|^{-1})$ for $0 \leq d \leq j+1$,
        \\
        $\partial_x^d f_-(\bullet, \lambda) = e^{-2i \lambda (x-a)} O(|\lambda|^{d-j-2}) + O(|\lambda|^{-1})$ for $j+2 \leq d \leq k+1$,
        \\
        $\partial_x^d f_-(\bullet, \lambda) = e^{-2i \lambda (x-b)} O(|\lambda|^{d-k-2}) + e^{-2i \lambda (x-a)} O(|\lambda|^{d-j-2}) + O(|\lambda|^{-1})$ for $k+2 \leq d \leq l+1.$
    \end{enumerate}
    \end{varwidth}
    \end{center}
\end{lemma}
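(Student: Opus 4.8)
The plan is to reduce the statement to the explicit integral kernel of the operator $B_-$ from (\ref{eq:Bpm}), together with an induction on the order of differentiation. Substituting the free resolvent kernel (\ref{eq:free}) into (\ref{eq:Bpm}) and simplifying the phases gives, for $g\in L^1_c(\Real)$,
\[
B_-(g)(x,\lambda)=\frac{-i}{2\lambda}\left(\int_{-\infty}^{x}e^{-2i\lambda(x-y)}g(y)\,dy+\int_{x}^{\infty}g(y)\,dy\right),
\]
in which both exponential factors have modulus at most $1$ for $\im\lambda\le 0$ because $x-y\ge 0$ in the first integral. Differentiating in $x$ gives $\partial_x B_-(g)(x,\lambda)=-\int_{-\infty}^{x}e^{-2i\lambda(x-y)}g(y)\,dy$ and hence $\partial_x^2 B_-(g)=-g-2i\lambda\,\partial_x B_-(g)$. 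Applying these to $g=U(1-f_-)$, which is legitimate since $f_-=B_-(U(1-f_-))$ by (\ref{eq:fpmB}) and $|f_-|\le C|\lambda|^{-1}$ by (\ref{eq:fbd}), shows that $f_-(\bullet,\lambda)$ is globally $C^1$ and solves the ODE $\partial_x^2 f_-+2i\lambda\,\partial_x f_-+U(1-f_-)=0$ on $\Real$.

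From this ODE I would first get the membership $f_-(\bullet,\lambda)\in\mathcal{U}(a_0,a,b,c;j+2,k+2,l+2)$. Writing $h:=U(1-f_-)$, on an open subinterval on which $U$ has $m$ continuous derivatives one pass of the ODE upgrades the smoothness of $f_-$ by one order, so after finitely many passes $f_-\in C^{m+2}$ there; taking $m=\infty,N_1,N_2$ on $(a_0,a),(a,b),(b,c)$ gives $f_-\in C^\infty([a_0,a])\cap C^{N_1+2}([a,b])\cap C^{N_2+2}([b,c])$ with $N_1+2>k+2$, $N_2+2>l+2$. For the behaviour across the break points, a simultaneous induction on the derivative order, based on the fact that $h=U(1-f_-)$ inherits at $a$ the order of continuity of $U$ once $f_-$ is known to be smooth enough there, propagates $f_-\in C^1$ to continuity up to order $j+2$ at $a$; the same argument at $b$ and at $c$ (where one also uses that $f_-$ is $C^\infty$ for $x>c$, since $h$ vanishes there) yields orders $k+2$ and $l+2$. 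The lenient cases $a_0=a$, $a=b$, $b=c$ are then automatic.

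The expansions (i)--(iii) I would prove by induction on $d$. Writing $P(g)(x):=\int_{-\infty}^x e^{-2i\lambda(x-y)}g(y)\,dy$, so $\partial_x f_-=-P(h)$, the cases $d=0,1$ are immediate from (\ref{eq:fbd}) and $P(h)=O(|\lambda|^{-1})$. For the inductive step, fix $x$ in one of the three subintervals, split $P(h)(x)$ at the break points lying to its left (only $a_0$ in case (i); $a_0$ and $a$ in case (ii); $a_0,a,b$ in case (iii)) and integrate by parts repeatedly on each smooth piece. This produces an interior sum whose image under $\partial_x^{d-1}$ has leading term $\partial_x^{d-1}h(x)/(2i\lambda)$; boundary contributions at $a_0$ that vanish to infinite order (or are absent when $a_0=a$); a boundary contribution at $a$ equal to $e^{-2i\lambda(x-a)}$ times a series whose lowest surviving term has size $|\lambda|^{-(j+1)}$, the lower jump terms vanishing because $h$, like $U$, is continuous up to order $j$ at $a$; and, in case (iii), the analogous contribution $e^{-2i\lambda(x-b)}O(|\lambda|^{-(k+1)})$ at $b$. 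Since the jumps of $\partial_x^r h$ at $a$ and at $b$ are constants in $x$, applying $\partial_x^{d-1}$ to $e^{-2i\lambda(x-a)}$ produces only the factor $(-2i\lambda)^{d-1}$, turning the $a$-contribution into $e^{-2i\lambda(x-a)}O(|\lambda|^{d-j-2})$ and the $b$-contribution into $e^{-2i\lambda(x-b)}O(|\lambda|^{d-k-2})$; these are absorbed into $O(|\lambda|^{-1})$ when $d\le j+1$ (respectively $d\le k+1$) and otherwise appear exactly as stated. Finally, substituting $h=U(1-f_-)$ and using the inductive hypothesis for $\partial_x^m f_-$ with $m\le d-1$, the bounded part of $\partial_x^{d-1}h$ contributes the $O(|\lambda|^{-1})$ term while its growing part contributes only $e^{-2i\lambda(x-a)}O(|\lambda|^{d-j-4})$, which is dominated; collecting terms according to which break points lie to the left of $x$ then gives (i), (ii), (iii).

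The main obstacle is the bookkeeping in this last step: one must track, through iterated integration by parts, the exact power of $\lambda^{-1}$ at which the boundary contribution at each singular point first survives — $\lambda^{-(j+1)}$ at $a$ and $\lambda^{-(k+1)}$ at $b$, tied to the order of continuity of $U$ there — verify that each $x$-derivative converts $e^{-2i\lambda(x-a)}$ and $e^{-2i\lambda(x-b)}$ into precisely one more power of $\lambda$ while the interior terms remain $O(|\lambda|^{-1})$, and check that the feedback through $h=U(1-f_-)$, in which $\partial_x^m f_-$ may itself grow, never generates anything larger than the claimed terms. Keeping the induction across the break points in the second step genuinely non-circular is a secondary technical point.
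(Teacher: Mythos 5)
Your proposal is correct and follows essentially the same route as the paper: both start from the explicit first-derivative formula $\partial_x f_-(x,\lambda) = -\int_{a_0}^x e^{-2i\lambda(x-y)}U(y)(1-f_-(y))\,dy$ and proceed by induction on $d$, splitting the integral at the break points and integrating by parts so that the only surviving oscillatory contributions are the jumps at $a$ and $b$, entering at orders $|\lambda|^{-(j+1)}$ and $|\lambda|^{-(k+1)}$ and then amplified by a factor of $|\lambda|$ per additional derivative, exactly as you describe. The only divergence is cosmetic: for the membership $f_-\in\mathcal{U}(a_0,a,b,c;j+2,k+2,l+2)$ you bootstrap through the conjugated equation $\partial_x^2 f_- + 2i\lambda\,\partial_x f_- + U(1-f_-)=0$, whereas the paper simply invokes standard elliptic regularity.
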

\begin{proof}
    That $f_- \in \mathcal{U}(a_0, a, b, c; j+2, k+2, l+2)$ follows from its definition in terms of the resolvent and standard elliptic regularity theory. 
    Using (\ref{eq:fpmB}), (\ref{eq:Bpm}) and  the explicit integral kernel for the free resolvent (\ref{eq:free}) we easily obtain that for $x \in [a_0, c]$,
    \begin{equation}\label{eq:start}
        \partial_x f_-(x, \lambda) = - \int_{a_0}^x e^{-2i\lambda (x-y)} U(y)dy + \int_{a_0}^x e^{-2i\lambda (x-y)}(Uf_-)(y)dy
    \end{equation}
    in $\im \lambda \leq 0, |\lambda| > C_0$. Using (\ref{eq:start}) and integration by parts, the proof of (i), as well as the first parts of (ii) and (iii), is an elementary induction argument beginning from (\ref{eq:fbd}).

    To complete the proof of (ii), when $x \in [a, c]$ we split the integrals in (\ref{eq:start}) at $y=a$, differentiate, and then integrate by parts to get
    \begin{align*}
    \partial_x^{j+1} f_-(x, \lambda)&  = - \int_{a_0}^a e^{-2i\lambda (x-y)} U^{(j)}(y)dy - \int_a^x e^{-2i\lambda (x-y)}U^{(j)}(y)dy \\
     & \;\;\;+ \int_{a_0}^a e^{-2i\lambda (x-y)} (Uf_-)^{(j)}(y)dy + \int_a^x e^{-2i\lambda (x-y)}(Uf_-)^{(j)}(y)dy \\
    & = e^{-2i\lambda (x-a)}C(\lambda) -  \int_a^x e^{-2i\lambda (x-y)} U^{(j)}(y)dy + \int_a^x e^{-2i\lambda (x-y)}(Uf_-)^{(j)}(y)dy
    \end{align*}
in $\im \lambda \leq 0, |\lambda| > C_0$, where $C(\lambda) = O(|\lambda|^{-1})$ and is independent of $x \in [a,c]$. Using this equation, integration by parts, and the bounds made in the first part of the proof, we inductively produce the remainder of (ii), as well as the second part of (iii).

    The last part of (iii) is obtained from (i) and (ii) in analogous fashion; this completes the proof of the lemma.
\end{proof}

\begin{lemma}\label{l:UTs}
    For $U \in \mathcal{U}(a_0, a, b, c; j, k, l)$ with $\ch \supp U = [a_0, c]$, we have
    \begin{align*}
    \rhom(-\lambda) & = \hat{U}(-2\lambda) + e^{2i\lambda a}O(|\lambda|^{-j-2}) + e^{2i\lambda b}O(|\lambda|^{-k-2}) + e^{2i\lambda c} O(|\lambda|^{-l-2})\\
    \rhop(-\lambda) & = \hat{U}(2\lambda) + e^{-2i\lambda a_0}O(|\lambda|^{-\infty} )+ e^{-2i\lambda a}O(|\lambda|^{-j-2})
    \end{align*}
    in $\im \lambda \leq 0, |\lambda| > C_0$.
\end{lemma}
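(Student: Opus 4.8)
The plan is to peel off the Fourier transform and reduce everything to an integration-by-parts estimate. From (\ref{eq:Tpm}),
\[
\rhom(-\lambda)=\int e^{2i\lambda x}U(x)\,dx-\int e^{2i\lambda x}U(x)f_-(x,\lambda)\,dx=\hat U(-2\lambda)-\int e^{2i\lambda x}h(x,\lambda)\,dx,
\]
where $h:=Uf_-$, and similarly $\rhop(-\lambda)=\hat U(2\lambda)-\int e^{-2i\lambda x}Uf_+\,dx$, so it suffices to estimate the last integral in each line. By Lemma \ref{l:abc}, $f_-\in\mathcal{U}(a_0,a,b,c;j+2,k+2,l+2)$ and its derivatives are controlled on each of $[a_0,a]$, $[a,b]$, $[b,c]$; hence $h$ is $C^\infty$ across $a_0$ and vanishes to infinite order there, it has enough regularity on $[a,b]$ and on $[b,c]$ to integrate by parts $k+1$ and $l+1$ times respectively (this uses $N_1>k$, $N_2>l$), $\partial_x^d h$ vanishes at $a^+$ for $d<j$, at $b^-$ for $d<k$, at $c^-$ for $d<l$, and on each subinterval $\partial_x^d h$ obeys the same bounds as $\partial_x^d f_-$, with an oscillatory prefactor $e^{-2i\lambda(x-a)}$ (and, on $[b,c]$, also $e^{-2i\lambda(x-b)}$) on the amplified derivatives; in particular $h=O(|\lambda|^{-1})$ throughout.

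Next I would split $\int e^{2i\lambda x}h\,dx$ over $[a_0,a]\cup[a,b]\cup[b,c]$ and integrate each piece by parts. On $[a_0,a]$ one may integrate by parts arbitrarily many times since $h$ is smooth there: the boundary terms at $a_0$ vanish, those at $a$ first appear after $j+1$ integrations (because $\partial_x^d h(a^+)=0$ for $d<j$) and then carry $|\lambda|^{-(j+1)}$ from the integrations times $O(|\lambda|^{-1})$ from $h$, and the remaining non-oscillatory integral is $O(|\lambda|^{-N})$ for every $N$; all of this is $e^{2i\lambda a}O(|\lambda|^{-j-2})$. On $[a,b]$ and on $[b,c]$ I would integrate by parts $k+1$ and $l+1$ times respectively — as many as the $\lambda$-expansions of Lemma \ref{l:abc} permit — handling the boundary terms at $a,b,c$ via the vanishing orders of $h$ as above and estimating the leftover integrals directly from the amplified derivative bounds.

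The one point demanding care is the bookkeeping of exponential factors. A boundary term at $b$, or a leftover-integral term on $[b,c]$, produced by an amplified derivative contains the combination $e^{2i\lambda b}e^{-2i\lambda(b-a)}=e^{2i\lambda a}$, and likewise $e^{2i\lambda c}e^{-2i\lambda(c-a)}=e^{2i\lambda a}$ and $e^{2i\lambda c}e^{-2i\lambda(c-b)}=e^{2i\lambda b}$; so one must check, in each recombination, that the surviving power of $|\lambda|$ is at least $|\lambda|^{-j-2}$ when the surviving oscillation is $e^{2i\lambda a}$, at least $|\lambda|^{-k-2}$ for $e^{2i\lambda b}$, and at least $|\lambda|^{-l-2}$ for $e^{2i\lambda c}$. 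This holds because the amplification of the $d$-th derivative off the singularity at $a$ (resp.\ $b$, $c$) is only $|\lambda|^{d-j-2}$ (resp.\ $|\lambda|^{d-k-2}$, $|\lambda|^{d-l-2}$) while each integration by parts supplies a $|\lambda|^{-1}$, and because $j\le k\le l$; the non-oscillatory pieces are $O(|\lambda|^{-\infty})$ and are absorbed into $e^{2i\lambda a}O(|\lambda|^{-j-2})$. Throughout one uses the elementary bound $\int_\alpha^\beta e^{2i\lambda x}\psi(x)\,dx=e^{2i\lambda\beta}O(\|\psi\|_\infty)$, valid for $\im\lambda\le0$ and $\alpha\le\beta$, i.e.\ that $|e^{2i\lambda x}|$ is nondecreasing in $x$ there.

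The claim for $\rhop$ follows the same scheme using the analogue of Lemma \ref{l:abc} for $f_+$ mentioned in the excerpt, in which left and right — hence $a_0$ and $c$ — are interchanged, so the amplified derivatives of $f_+$ on $[b,c]$, $[a,b]$, $[a_0,a]$ carry the successive factors $e^{2i\lambda(x-c)}$, then also $e^{2i\lambda(x-b)}$, then also $e^{2i\lambda(x-a)}$. Every boundary or leftover term is then $e^{-2i\lambda t}O(|\lambda|^{-s})$ with $t\in\{a,b,c\}$ and $s$ at least the vanishing order at $t$ plus two; since for $\im\lambda\le0$ and $a\le b\le c$ the modulus $|e^{-2i\lambda t}|$ is nonincreasing in $t$, and $j\le k\le l$, every $b$- and $c$-term collapses into $e^{-2i\lambda a}O(|\lambda|^{-j-2})$, whereas the non-oscillatory remainder over $[a_0,a]$ — the one subinterval on which we integrate by parts infinitely often — is naturally controlled at the $x=a_0$ endpoint and produces exactly the $e^{-2i\lambda a_0}O(|\lambda|^{-\infty})$ term (which cannot be folded into $e^{-2i\lambda a}O(|\lambda|^{-j-2})$ uniformly, since $e^{-2i\lambda a_0}/e^{-2i\lambda a}$ grows exponentially in $|\im\lambda|$). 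I expect the main obstacle to be not conceptual but precisely this case-by-case verification: keeping the bounds uniform in $x$ and confirming, after every integration by parts and every recombination of exponentials, that the surviving power of $|\lambda|$ is dictated by the correct vanishing order.
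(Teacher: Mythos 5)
Your argument is correct and is essentially the paper's own proof, which likewise writes $\rhom(-\lambda)=\hat U(-2\lambda)-\int e^{2i\lambda x}Uf_-\,dx$, splits the integral at $a$ and $b$, integrates by parts the number of times permitted by Lemma \ref{l:abc}, and notes the cancellations at the break points; your tracking of the recombined exponentials (e.g.\ $e^{2i\lambda b}e^{-2i\lambda(b-a)}=e^{2i\lambda a}$ paired with the power $|\lambda|^{e-j-2}$ from the amplified derivatives versus the $|\lambda|^{-1}$ per integration by parts) is exactly the verification the paper leaves implicit, and your isolation of the $e^{-2i\lambda a_0}O(|\lambda|^{-\infty})$ term for $\rhop$ correctly explains why it cannot be absorbed. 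One small correction: when $a_0<a$ it need not be true that $\partial_x^d h(a^+)=0$ for $d<j$; what vanishes is the jump $[\partial_x^d h]_a$ for $d<j$, so the boundary terms at $a$ coming from the $[a_0,a]$ and $[a,b]$ pieces cancel up to order $j-1$ (and similarly at $b$), which is the cancellation the paper refers to and yields the same conclusion.
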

\begin{proof}
The proofs of these expansions are very similar, with our lopsided assumptions on $U$ making the one for $\rhop$ the simpler to obtain. Thus we will give the proof for $\rhom$.

The definition of $\rhom$ in (\ref{eq:Tpm}) gives
\[\rhom^U (-\lambda) = \hat{U}(-2\lambda) - \int_{a_0}^c e^{2i\lambda x} U(x) f_-^U (x, \lambda) dx.\]
Splitting the integral at $x=a$ and $x=b$, Lemma \ref{l:UTs} is obtained by integrating by parts an appropriate number of times in each term, noting cancellations due to regularity of the integrand at $a$ and $b$, and then appealing to Lemma \ref{l:abc} to make the required bounds. This gives the desired expansion for $\rhom$.
\end{proof}

\subsection{Extensions of a theorem of Zworski}\label{s:extension}
Here we give three extensions of \cite[Theorem 6]{Zworski87}, including Theorem \ref{t:as}.   Each of these extends the class of potentials for which one can explicitly calculate a sequence of resonances asymptotic to a logarithmic curve.

We first note that the Fourier transform of $V \in \mathcal{V}_{j,k} ([a,b])$ has a particularly nice form. Indeed, a simple integration by parts calculation shows that
\begin{equation}\label{eq:vhat}
    \hat{V}(\pm 2\lambda) = \frac{k!C_2}{(\mp 2i\lambda)^{k+1}}e^{\mp 2i\lambda b}(1+o(1)) + \frac{j!C_1}{(\pm 2i\lambda)^{j+1}}e^{\mp2i\lambda a}(1+o(1))
\end{equation}
in $\im \lambda \leq 0$, where the decaying terms are had using \cite[Lemma 2.2]{Titchmarsh26} which shows that when $f \in C([0,1])$,
\[\int_0^1 e^{-i\lambda x} f(x)dx = o(1)\]
uniformly as $|\lambda| \rightarrow \infty$ through $\im\lambda \leq 0$.

\begin{proof}[Proof of Theorem \ref{t:as}]
    Without loss of generality we assume $j \leq k$. If $\ch \supp W = [a_0, b_0]$, we may also take $a_0 \leq a < b \leq b_0$, for if we prove this case the general one follows by writing $V+W = (V-P) + (W+P)$ for some $P \in C_c^\infty$ with $\ch \supp P = [a,b]$ and noting that $(V-P) \in \mathcal{V}_{j,k}([a,b])$ with the same endpoint behavior as $V$.

    With these agreements, we see that $V+W \in \mathcal{U}(a_0, a, b, b_0; j, k, \infty)$ and therefore using Lemma \ref{l:UTs} and (\ref{eq:vhat}) we obtain
    \begin{align*}\rhom(-\lambda) & = \frac{k!C_2}{(2i\lambda)^{k+1}} e^{2i\lambda b}(1+o(1)) + e^{2i\lambda a}O(|\lambda|^{-j-1}) + e^{2i\lambda b_0} O(|\lambda|^{-\infty}) \\
    \rhop(-\lambda) &= \frac{j!C_1}{(2i\lambda)^{j+1}} e^{-2i\lambda a}(1+o(1)) + e^{-2i\lambda b}O(|\lambda|^{-k-1}) + e^{-2i\lambda a_0} O(|\lambda|^{-\infty})
    \end{align*}
    for $\im\lambda \leq 0, |\lambda| > C_0$. The expansion (\ref{eq:Sme}) then shows, after absorbing some terms into one another, that
    \begin{equation}\label{eq:smsum}
        \det S_{V+W}(-\lambda) = 1-\frac{j!k!C_1C_2}{(2i\lambda)^{j+k+4}} e^{2(b-a)i\lambda}(1+o(1)) 
        + e^{2(b_0-a_0)i\lambda}O(|\lambda|^{-\infty})+O(|\lambda|^{-1})
    \end{equation}
    for $\im \lambda \leq 0, |\lambda| > C_0$. When $M > (j+k+4)/2(b-a)$, as $|\lambda| \rightarrow \infty$ through $\hat{L}_M \cap \{\im \lambda \leq 0 \}$ all but the first two terms on the right in (\ref{eq:smsum}) decay so
    \[ \det S_{V+W}(-\lambda) = 1-\frac{j!k!C_1C_2}{(2i\lambda)^{j+k+4}} e^{2(b-a)i\lambda}(1+o(1)) +o(1)\]
    in this region. Just as in the proof of \cite[Lemma 5]{Zworski87}, we conclude by appealing to Hardy's method (see Lemma \ref{l:Hm1} in the appendix), which produces from this expansion the sequence of resonances in the conclusion of the theorem.
\end{proof}

Theorem \ref{t:as} shows that the asymptotic behavior of the sequence of resonances produced by $V \in \mathcal{V}_{j,k}([a,b])$ is unaffected by an arbitrary smooth perturbation. This next extension shows that the asymptotic behavior of the  sequence is likewise unaffected by perturbations with less regularity which are
supported on an 
appropriate subset of the support of $V$.

\begin{theorem}\label{t:lri}
    Let $V \in \mathcal{V}_{j,k}([a,b])$ and $W \in L_c^\infty(\Real;\mathbb{C})$ with $\supp W \subset (a,b)$. Then there exists $M_0 > 0$ so that for any $M > M_0$ there exists $R > 0$ with
    \[\mathcal{R}_{V+W} \cap (\mathbb{C}  \setminus \hat{L}_M) \cap \{|\lambda| > R\} = \emptyset .\]
    Moreover, if
    \[\supp W \subset \left(a + \frac{(j+1)(b-a)}{j+k+4}, b-\frac{(k+1)(b-a)}{j+k+4}\right)\]
    then there exists $R > 0$ such that within $\{|\lambda| > R\}$ the resonances of $-\frac{d^2}{dx^2} +V+W$ form a sequence
    \[\lambda_{\pm n} = \pm \frac{n \pi}{b-a} \pm \frac{j+k+4}{2(b-a)}\frac{\pi}{2} + i\frac{\log C}{2(b-a)} - i\frac{j+k+4}{2(b-a)}\log\left(\frac{n\pi}{b-a}\right) + \varepsilon_{\pm n}\]
    where $C = j!k!C_1C_2/2^{j+k+4}$ and $\varepsilon_{\pm n} \rightarrow 0$ as $n \rightarrow \infty$.
\end{theorem}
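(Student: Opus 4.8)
The plan is to rerun the proof of Theorem~\ref{t:as} with the smooth perturbation there replaced by the merely bounded $W$, carefully tracking the extra error terms this produces. Write $\ch\supp W=[a_0,b_0]$; since $\supp W$ is a compact subset of $(a,b)$ we have $a<a_0<b_0<b$, and as in Theorem~\ref{t:as} we may assume $j\le k$. Because $V\in\mathcal V_{j,k}([a,b])\subset L_c^\infty$ with $\supp W\subset(a,b)$, Lemma~\ref{l:Tsum} (with its ``$V$'', ``$W$'' being our $V,W$) gives $\rhom^{V+W}(-\lambda)=\rhom^{V}(-\lambda)+e^{2i\lambda b_0}O(1)$ and $\rhop^{V+W}(-\lambda)=\rhop^{V}(-\lambda)+e^{-2i\lambda a_0}O(1)$ for $\im\lambda\le 0,\ |\lambda|>C_0$, while Lemma~\ref{l:UTs} applied to $V$ (viewed in the $\mathcal U$-class with its two distinguished endpoints $a,b$) together with (\ref{eq:vhat}) gives, exactly as in the proof of Theorem~\ref{t:as}, $\rhom^{V}(-\lambda)=\frac{k!C_2}{(2i\lambda)^{k+1}}e^{2i\lambda b}(1+o(1))+e^{2i\lambda a}O(|\lambda|^{-j-1})$ and the mirror statement for $\rhop^{V}$. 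Substituting into (\ref{eq:Sme}) and absorbing the terms carrying a factor $e^{-2i\lambda(\,\cdot\,)}$ with positive argument (which are $O(1)$ in $\im\lambda\le 0$) as well as the lower-order constant-phase pieces, I expect to obtain the analogue of (\ref{eq:smsum}):
\begin{multline*}
\det S_{V+W}(-\lambda)=1-\frac{j!k!C_1C_2}{(2i\lambda)^{j+k+4}}e^{2(b-a)i\lambda}(1+o(1))\\
+e^{2(b-a_0)i\lambda}O(|\lambda|^{-k-3})+e^{2(b_0-a)i\lambda}O(|\lambda|^{-j-3})+e^{2(b_0-a_0)i\lambda}O(|\lambda|^{-2})+O(|\lambda|^{-1})
\end{multline*}
for $\im\lambda\le 0,\ |\lambda|>C_0$, the single clean error of (\ref{eq:smsum}) being replaced by three ``interior'' errors whose exponential rates $b-a_0,\ b_0-a,\ b_0-a_0$ are each strictly less than $b-a$.

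For the first assertion I would work in $\mathbb C\setminus\hat{L}_M=\{\im\lambda\le -M\log(1+|\lambda|)\}$, where the leading term has modulus $\gtrsim|\lambda|^{-(j+k+4)}(1+|\lambda|)^{2M(b-a)}$, which tends to $\infty$ once $2M(b-a)>j+k+4$. Dividing each remaining term by the leading term and cancelling $e^{2(b-a)i\lambda}$ leaves a quantity of the form $C|\lambda|^{p}e^{2(\im\lambda)g}$ with $g\in\{a_0-a,\ b-b_0,\ (a_0-a)+(b-b_0)\}>0$ and $p\in\{j+1,\ k+1,\ j+k+2\}$ respectively, hence $\le C|\lambda|^{p}(1+|\lambda|)^{-2Mg}\to 0$ as soon as $M>\max\bigl(\tfrac{j+1}{2(a_0-a)},\tfrac{k+1}{2(b-b_0)}\bigr)$; the $1$ and $O(|\lambda|^{-1})$ terms are handled by $2M(b-a)>j+k+4$. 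Taking $M_0$ to be the maximum of these three explicit thresholds, for $M>M_0$ we get $\det S_{V+W}(-\lambda)=(\text{leading term})(1+o(1))\ne 0$ for $|\lambda|$ large (recall there are no resonances with $\im\lambda>0$ and $|\lambda|$ large, by property~(iii) of $D_V$), which is the emptiness statement.

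For the second assertion, observe that the hypothesis $\supp W\subset\bigl(a+\tfrac{(j+1)(b-a)}{j+k+4},\,b-\tfrac{(k+1)(b-a)}{j+k+4}\bigr)$ says precisely $a_0-a>\tfrac{(j+1)(b-a)}{j+k+4}$ and $b-b_0>\tfrac{(k+1)(b-a)}{j+k+4}$. This forces $\max\bigl(\tfrac{j+1}{2(a_0-a)},\tfrac{k+1}{2(b-b_0)}\bigr)<\tfrac{j+k+4}{2(b-a)}$, so in the first part one may take $M_0=\tfrac{j+k+4}{2(b-a)}$; and a short computation gives $b-a_0<\tfrac{(k+3)(b-a)}{j+k+4}$, $b_0-a<\tfrac{(j+3)(b-a)}{j+k+4}$, $b_0-a_0<\tfrac{2(b-a)}{j+k+4}$, so each of $\tfrac{k+3}{2(b-a_0)},\ \tfrac{j+3}{2(b_0-a)},\ \tfrac{1}{b_0-a_0}$ exceeds $\tfrac{j+k+4}{2(b-a)}$. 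Hence the window $I:=\bigl(\tfrac{j+k+4}{2(b-a)},\ \min\bigl(\tfrac{k+3}{2(b-a_0)},\tfrac{j+3}{2(b_0-a)},\tfrac{1}{b_0-a_0}\bigr)\bigr)$ is nonempty; fix $M\in I$. In $\hat{L}_M\cap\{\im\lambda\le 0\}$ one has $|e^{2(b-a_0)i\lambda}|\le(1+|\lambda|)^{2M(b-a_0)}$ with $2M(b-a_0)<k+3$, and similarly for the other two interior errors, so all three are $o(1)$ there, and the displayed expansion collapses to $\det S_{V+W}(-\lambda)=1-\tfrac{j!k!C_1C_2}{(2i\lambda)^{j+k+4}}e^{2(b-a)i\lambda}(1+o(1))+o(1)$, exactly the expansion used in the proof of Theorem~\ref{t:as}. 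Hardy's method (Lemma~\ref{l:Hm1}) then produces from this, within $\hat{L}_M\cap\{|\lambda|>R\}$, precisely the sequence $\lambda_{\pm n}$ of the statement; these lie on $\{\im\lambda=\tfrac{\log|C|}{2(b-a)}-\tfrac{j+k+4}{2(b-a)}\log|\re\lambda|\}$, hence inside $\hat{L}_M$ since $M>\tfrac{j+k+4}{2(b-a)}$. As $M>M_0$, the first part gives $\mathcal R_{V+W}\cap(\mathbb C\setminus\hat{L}_M)\cap\{|\lambda|>R\}=\emptyset$ after enlarging $R$, and combining the two regions, every resonance of $-\tfrac{d^2}{dx^2}+V+W$ with $|\lambda|>R$ is one of the $\lambda_{\pm n}$.

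The main obstacle is the last piece of bookkeeping: one must check that the three interior exponential rates $b-a_0,\ b_0-a,\ b_0-a_0$ obey the correct inequalities relative to $b-a$, and --- crucially --- that the lower threshold $\tfrac{j+k+4}{2(b-a)}$ needed to make the leading term dominate in $\mathbb C\setminus\hat{L}_M$ lies strictly below the upper thresholds needed to kill the interior errors inside $\hat{L}_M$, i.e.\ that $I\ne\emptyset$; this nonemptiness is precisely where the explicit constants $\tfrac{(j+1)(b-a)}{j+k+4}$ and $\tfrac{(k+1)(b-a)}{j+k+4}$ in the hypothesis on $\supp W$ enter, and it is easy to be off by one in the exponents. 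A secondary, routine point is to justify the degenerate application of Lemma~\ref{l:UTs}/(\ref{eq:vhat}) to $V$ with its two (rather than four) distinguished points, and the fact that Lemma~\ref{l:Tsum} applies with $\supp W$ strictly inside $[a,b]$ yielding the stated $O(1)$ errors with the indicated phases --- noting that, by contrast, $V+W$ itself cannot be placed in a $\mathcal U$-class because its regularity drops at the boundary of $\supp W$ rather than increasing monotonically inward, which is exactly why Lemma~\ref{l:Tsum} is needed here.
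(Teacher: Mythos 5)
Your proposal is correct and follows essentially the same route as the paper's proof: Lemma \ref{l:Tsum} for the $e^{2i\lambda b_0}O(1)$, $e^{-2i\lambda a_0}O(1)$ perturbation terms, Lemma \ref{l:UTs} with (\ref{eq:vhat}) for $\rhopm^V$, the resulting five-term expansion of $\det S_{V+W}(-\lambda)$ (identical to the paper's (\ref{eq:dog})), the same thresholds $\tfrac{j+1}{2(a_0-a)}$, $\tfrac{k+1}{2(b-b_0)}$, $\tfrac{j+k+4}{2(b-a)}$ for $M_0$ and $\min\bigl(\tfrac{k+3}{2(b-a_0)},\tfrac{j+3}{2(b_0-a)},\tfrac{1}{b_0-a_0}\bigr)$ for the upper end of the window, and Hardy's method (Lemma \ref{l:Hm1}) in the intermediate logarithmic region. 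Your observation that the threshold for the $e^{2(b_0-a_0)i\lambda}O(|\lambda|^{-2})$ term is implied by the other two (via the mediant inequality) is a minor, correct simplification of the paper's $M_0$, and all the bookkeeping you flag as the "main obstacle" checks out.
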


\begin{proof}
    Let $\ch \supp W=[a_1, b_1]$ and without loss of generality assume $j \leq k$. We use Lemma \ref{l:Tsum} to see that in $\im \lambda \leq 0, |\lambda| > C_0$,
    \begin{align*}\rhom^{V+W}(-\lambda) &= \rhom^V(-\lambda) + e^{2i\lambda b_1}O(1)\\
    \rhop^{V+W}(-\lambda)& = \rhop^V(-\lambda) + e^{-2i\lambda a_1}O(1).
    \end{align*}
    Since $V \in \mathcal{U}(a, a, b, b;j, k, k)$ we apply Lemma \ref{l:UTs} and then (\ref{eq:vhat}) to get
    \begin{align*}
    \rhom^{V+W}(-\lambda) &= \frac{k!C_2}{(2i\lambda)^{k+1}} e^{2i\lambda b}(1+o(1)) + e^{2i\lambda a}O(|\lambda|^{-j-1}) + e^{2i\lambda b_1} O(1),\\
    \rhop^{V+W}(-\lambda) &= \frac{j!C_1}{(2i\lambda)^{j+1}} e^{-2i\lambda a}(1+o(1)) + e^{-2i\lambda b}O(|\lambda|^{-k-1}) + e^{-2i\lambda a_1} O(1),
    \end{align*}
    and (\ref{eq:Sme}) gives
    \begin{equation}\label{eq:dog}
    \begin{split}
        \det S_{V+W}(-\lambda) = 1-\frac{j!k!C_1C_2}{(2i\lambda)^{j+k+4}} e^{2(b-a)i\lambda}(1+o(1)) + e^{2(b_1-a_1)i\lambda}O(|\lambda|^{-2}) \\
        + e^{2(b-a_1)i\lambda}O(|\lambda|^{-k-3}) + e^{2(b_1-a)i\lambda}O(|\lambda|^{-j-3}) + O(|\lambda|^{-1})
    \end{split}
    \end{equation}
    in $\im \lambda \leq 0, |\lambda|>C_0$. Define
    \[M_0 := \max \left\{ \frac{j+k+2}{2[(b-a)-(b_1-a_1)]}, \frac{j+1}{2(a_1-a)}, \frac{k+1}{2(b-b_1)}, \frac{j+k+4}{2(b-a)}\right\}.\]
    Using  (\ref{eq:dog}) we see that as $|\lambda| \rightarrow \infty$ through $\mathbb{C} \setminus \hat{L}_M = \{\im \lambda \leq -M\log(1+|\lambda|)\}$,
    if $M>M_0$ we have
    \[\left|\frac{(2i\lambda)^{j+k+4}}{j!k!C_1C_2} e^{-2(b-a)i\lambda} \det S_{V+W}(-\lambda) + 1 \right| = o(1)\]
    and therefore $\det S_{V+W}(-\bullet)$ cannot have a zero there once $|\lambda|$ is large enough.   This proves the first part of the theorem.

    Next we define
    \[M_1 := \min \left\{ \frac{1}{b_1-a_1}, \frac{j+3}{2(b_1-a)}, \frac{k+3}{2(b-a_1)}\right\}\]
    and note that when the support of $W$ is restricted as in the statement of the theorem, we have $M_1 > (j+k+4)/2(b-a) = M_0$. Thus, if $M_1>T>M_0$, in $\hat{L}_T\cap \{\im \lambda \leq 0\}$ all but the first two terms on the right in (\ref{eq:dog}) decay as $|\lambda| \rightarrow \infty$ and just as in the proof of Theorem \ref{t:as} we find that the large resonances there form the claimed sequence. In $\mathbb{C}\setminus \hat{L}_T$ on the other hand, there are only finitely many resonances for the same reason as in part one of the proof, so the proof is complete.
\end{proof}

We note that the first part of Theorem \ref{t:lri} improves modestly upon Theorem \ref{t:ara} for this class of potentials. Examples of $W$ which do not satisfy the support condition and for which the conclusion of the theorem is then false may be constructed using the methods of the next subsection. Indeed, taking $W \in \mathcal{V}_{j',k'}([a_1,b_1])$ the parameters may be chosen so that
\[(a,b) \supset [a_1,b_1] \supset \left(a + \frac{(j+1)(b-a)}{j+k+4}, b-\frac{(k+1)(b-a)}{j+k+4}\right)\]
and the potential $V+W$ then produces \emph{two} sequences of resonances.

We now show how our method can be used to extend the main result of \cite{Stepin07}. There the authors directly analyze certain solutions to $(-\frac{d^2}{dx^2} + V - \lambda^2)u = 0$ which allows them to consider potentials of a form
\begin{equation}\label{eq:stepinp}
    \begin{aligned}
    V(x) = (x-a)^\mu V_0(x)(b-x)^\nu,  \; \supp V_0 \subset [a,b], \quad V_0(a)V_0(b) \not= 0,
    \end{aligned}
\end{equation}
where $V_0 \in C^m([a, b]), m \geq \max\{\mu, \nu\} + 2,$ and $\mu, \nu \geq 0$ are not necessarily integers, giving yet another extension of \cite[Theorem 6]{Zworski87}. Their potentials are required to have an absolutely integrable derivative (see \cite[Section 2]{Stepin07}), and therefore the cases $0>\mu,\nu>-1$ cannot be included in the theorem. In addition to extending the main result in \cite{Stepin07}, the following gives an application of our method to potentials with infinite singularities at the endpoints of 
the support.

\begin{theorem}\label{t:L1}
    Let $V$ be of the form (\ref{eq:stepinp}) with $0 > \mu, \nu >-1$ and $V_0 \in C^1 ([a,b])$. Then there exists $R>0$ such that within $\{|\lambda|>R\}$ the resonances of $-\frac{d^2}{dx^2}+V$ form a sequence
    \[\lambda_{\pm n} = \pm\frac{n\pi}{b-a} \pm \frac{\mu+\nu+4}{2(b-a)}\frac{\pi}{2} + i\frac{\log C}{2(b-a)} - i\frac{\mu+\nu+4}{2(b-a)}\log\left(\frac{n\pi}{b-a}\right) + \varepsilon_{\pm n}\]
    where $C=(b-a)^{\mu+\nu}V_0(a)V_0(b)\Gamma(\mu+1)\Gamma(\nu+1)/2^{\mu+\nu+4}$ and $\varepsilon_{\pm n} \rightarrow 0$ as $n \rightarrow \infty$.
\end{theorem}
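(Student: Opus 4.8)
The plan is to follow the route of the proofs of Theorems \ref{t:as} and \ref{t:lri}: compute the leading asymptotics of $\rhopm(-\lambda)$ for $\im\lambda\le 0$, substitute into the expansion (\ref{eq:Sme}) of $\det S_V(-\lambda)$, and then read off the resonances by Hardy's method (Lemma \ref{l:Hm1}). The new feature is that $V\in L^1_c\setminus L^\infty_c$, so one first checks that the machinery survives: (\ref{eq:fbd}) holds since Froese's proof uses only $\|V\|_{L^1}$, hence (\ref{eq:smat}) and (\ref{eq:Sme}) go through, and for $|\lambda|$ large the resonances in $\im\lambda\le 0$ are exactly the zeros of $\det S_V(-\lambda)$ there while there are none in $\im\lambda>0$ (this, like the preceding facts, reduces to estimates on $\|VR_0(\lambda)\chi\|$ that survive the $L^1$ hypothesis after the standard factorization of $VR_0(\lambda)\chi$ into Hilbert--Schmidt operators built from $|V|^{1/2}\in L^2$). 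As noted in Section \ref{sec:preliminaries}, $R_V$, $f_\pm$ and $\rhopm$ are in any case well defined for such $V$.

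The second step is the version of (\ref{eq:vhat}) appropriate here. With $\tilde C_1:=V_0(a)(b-a)^\nu$ and $\tilde C_2:=V_0(b)(b-a)^\mu$, the hypothesis $V_0\in C^1$ gives $V(x)=\tilde C_1(x-a)^\mu+O((x-a)^{\mu+1})$ near $a$ and $V(x)=\tilde C_2(b-x)^\nu+O((b-x)^{\nu+1})$ near $b$, and the classical asymptotic expansion of Fourier integrals with an algebraic endpoint singularity gives
\[\hat V(\pm 2\lambda)=\frac{\Gamma(\nu+1)\tilde C_2}{(\mp 2i\lambda)^{\nu+1}}e^{\mp 2i\lambda b}(1+o(1))+\frac{\Gamma(\mu+1)\tilde C_1}{(\pm 2i\lambda)^{\mu+1}}e^{\mp 2i\lambda a}(1+o(1))\]
in $\im\lambda\le 0$; the two displayed terms have orders $|\lambda|^{-\nu-1}$ and $|\lambda|^{-\mu-1}$ with exponents in $(0,1)$, while the remainders $O((x-a)^{\mu+1})$, $O((b-x)^{\nu+1})$ and the $C^1$ interior contribute only $O(|\lambda|^{-\mu-2})$, $O(|\lambda|^{-\nu-2})$, $O(|\lambda|^{-2})$. (For $\im\lambda\le0$ the bases $2i\lambda$ and $i\lambda$ lie in the closed right half plane, so the fractional powers actually arising in the final formula are unambiguous.)

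The third step, which I expect to be the main obstacle, is the analogue of Lemmas \ref{l:abc} and \ref{l:UTs}: one must show $\rhom(-\lambda)=\hat V(-2\lambda)+e^{2i\lambda a}O(|\lambda|^{-\mu-2})+e^{2i\lambda b}O(|\lambda|^{-\nu-2})+O(|\lambda|^{-2})$ and the symmetric statement for $\rhop$, i.e. that $\int e^{2i\lambda x}V(x)f_-(x,\lambda)\,dx$ is smaller than $\hat V(-2\lambda)$ by a power of $\lambda$. In the smooth/integer case this came from repeated integration by parts while tracking the regularity of $f_-$ across the distinguished points; here $V\notin L^\infty$ and $V_0\in C^1$ only, so almost no differentiation is available, and instead one exploits $f_-=B_-(V)-B_-(Vf_-)$ together with (\ref{eq:fbd}) and the invertibility of $I+VB_\pm\chi$ from the proof of Lemma \ref{l:Tsum}. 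This yields $f_-(x,\lambda)=\tfrac{-i}{2\lambda}\hat V(0)+O\big((x-a)^{\mu+1}|\lambda|^{-1}\big)$ near $a$ (with extra decay at scale $|x-a|\lesssim|\lambda|^{-1}$), so $V(x)f_-(x,\lambda)$ behaves near $a$ like $(x-a)^\mu$ times an $O(|\lambda|^{-1})$ function and its $a$-endpoint Fourier contribution is $O(|\lambda|^{-\mu-2})$; the $b$-endpoint and the $C^1$ interior are handled the same way. A clean bookkeeping is to cut $V$ smoothly into a piece supported near $a$ (equal to $V$ there), a piece supported near $b$, and a $C^1$ interior piece — the last covered by the existing Lemma \ref{l:UTs} — and prove a one-endpoint version of this estimate for the first two; carrying this out with the unbounded endpoint singularities is the technical heart of the proof.

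Granting the above, (\ref{eq:Sme}) gives, after absorbing lower-order terms,
\[\det S_V(-\lambda)=1-\frac{\Gamma(\mu+1)\Gamma(\nu+1)(b-a)^{\mu+\nu}V_0(a)V_0(b)}{(2i\lambda)^{\mu+\nu+4}}e^{2(b-a)i\lambda}(1+o(1))+O(|\lambda|^{-1})\]
in $\im\lambda\le 0,\ |\lambda|>C_0$; the coefficient is $2^{\mu+\nu+4}C$ with $C$ as in the statement, so the oscillatory term is $C(i\lambda)^{-\mu-\nu-4}e^{2(b-a)i\lambda}$. Since there is no perturbation, no term $e^{2(b_0-a_0)i\lambda}O(|\lambda|^{-\infty})$ with $b_0-a_0>b-a$ appears, so no $\hat L_M$ restriction is needed: as $\mu+\nu+4>1$, the oscillatory term is $o(1)$ above the logarithmic curve $\{\im\lambda=\tfrac{\log|C|}{2(b-a)}-\tfrac{\mu+\nu+4}{2(b-a)}\log|\re\lambda|\}$ (hence $\det S_V(-\lambda)=1+o(1)$, no zeros) and dominates $1+O(|\lambda|^{-1})$ below it (again no zeros), so all large resonances lie in a logarithmic strip about that curve. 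There Hardy's method (Lemma \ref{l:Hm1}), applied exactly as in the proof of Theorem \ref{t:as}, produces the claimed sequence $\lambda_{\pm n}$ with the stated constant $C$.
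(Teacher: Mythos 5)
Your overall strategy is the paper's: compute the endpoint asymptotics of $\hat V(\pm2\lambda)$, pass to $\rhopm$, insert into (\ref{eq:Sme}), and apply Lemma \ref{l:Hm1}; your Fourier computation and the resulting formula for $\det S_V(-\lambda)$ agree with the paper's. Where you diverge is at the step you call the ``technical heart'' --- showing that $\int e^{2i\lambda x}V(x)f_-(x,\lambda)\,dx$ is negligible against $\hat V(-2\lambda)$ --- and here you have made the problem much harder than it is. You do not need the refined remainders $e^{2i\lambda a}O(|\lambda|^{-\mu-2})+e^{2i\lambda b}O(|\lambda|^{-\nu-2})$, nor any analysis of the fine structure of $f_-$ near the endpoints, nor an $L^1$ analogue of Lemmas \ref{l:abc}--\ref{l:UTs}. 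The crude bound (\ref{eq:fbd}), $|f_-|\le C/|\lambda|$ (whose constants depend only on $\|V\|_{L^1}$), already gives
\[
\left|\int_a^b e^{2i\lambda x}V(x)f_-(x,\lambda)\,dx\right|\;\le\;\frac{C}{|\lambda|}\,e^{-2b\im\lambda}\|V\|_{L^1},
\]
i.e.\ $e^{2i\lambda b}O(|\lambda|^{-1})$, and --- this is the point of the hypothesis $-1<\mu,\nu<0$ --- the leading term of $\hat V(-2\lambda)$ at the $b$-endpoint has size $|\lambda|^{-\nu-1}e^{-2b\im\lambda}$ with $\nu+1<1$, so the correction is already $o(1)$ relative to it and is absorbed into the $(1+o(1))$. (The same remark applies to $\rhop$ at the $a$-endpoint.) This is exactly why the theorem is stated for the range $\mu,\nu\in(-1,0)$ excluded in \cite{Stepin07}: the singularity makes the Fourier transform decay \emph{slower} than the generic $O(|\lambda|^{-1})$ correction from $f_\pm$. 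As written, your step 3 is an unproven sketch of a stronger statement than needed, so you should replace it with the one-line estimate above. Two minor further points: the paper states the Fourier asymptotics only uniformly in $\im\lambda\le-\delta$ and then disposes of the strip $-\varepsilon\log(1+|\lambda|)\le\im\lambda\le0$ separately (where $\det S_V(-\lambda)=1+o(1)$ by crude bounds), which sidesteps any worry about uniformity of the Erd\'elyi expansion up to the real axis; and your preliminary remarks on extending the Fredholm/scattering framework to $V\in L^1_c$ are welcome but are treated as background in the paper.
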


\begin{proof}
    We first note that $V \in L^1 (a,b)$ and calculating the Fourier transform one obtains that for any $\delta > 0$,
    \begin{align*}
    \hat{V}(-2\lambda) &= \frac{(b-a)^\mu V_0(b)\Gamma(\nu+1)}{(2i\lambda)^{\nu+1}}e^{2i\lambda b}(1+o(1)) + e^{2i\lambda a} O(1)\\ 
    \hat{V}(2\lambda) &= \frac{(b-a)^\nu V_0(a)\Gamma(\mu+1)}{(2i\lambda)^{\mu+1}}e^{-2i\lambda a}(1+o(1)) + e^{-2i\lambda b} O(1)
    \end{align*}
        uniformly as $|\lambda| \rightarrow \infty$ through $\im \lambda \leq -\delta$.

    Using the definitions of $\rhopm$ and (\ref{eq:fbd}) we get from this that in $\im \lambda \leq -\delta$,
    \begin{align*}
    \rhom(-\lambda) &= \frac{(b-a)^\mu V_0(b)\Gamma(\nu+1)}{(2i\lambda)^{\nu+1}}e^{2i\lambda b}(1+o(1)) + e^{2i\lambda a} O(1) \\
    \rhop(-\lambda) &= \frac{(b-a)^\nu V_0(a)\Gamma(\mu+1)}{(2i\lambda)^{\mu+1}}e^{-2i\lambda a}(1+o(1)) + e^{-2i\lambda b} O(1).
    \end{align*}
    Hence (\ref{eq:Sme}) shows that for any $\varepsilon > 0$, we have
    \[\det S_V(-\lambda) = 1-\frac{(b-a)^{\mu+\nu}V_0(a)V_0(b)\Gamma(\mu+1)\Gamma(\nu+1)}{(2i\lambda)^{\mu+\nu+4}} e^{2(b-a)i\lambda}(1+o(1))+O(|\lambda|^{-1})\]
    in $\im \lambda \leq -\varepsilon\log(1+|\lambda|), |\lambda| > C_0$. We conclude using Lemma \ref{l:Hm1}, noting that for sufficiently small $\varepsilon$, there may be at most finitely many resonances in the complement of this region.
\end{proof}

We mention here that we could easily give versions of Theorems \ref{t:as} and \ref{t:lri} for $V$ as in Theorem \ref{t:L1} and it is interesting to notice what happens to the support condition on $W$ in Theorem \ref{t:lri} as $j,k \rightarrow -1$.

\subsection{A class of potentials producing multiple sequences of resonances along logarithmic curves}\label{s:rsum}
This section contains another extension of \cite[Theorem 6]{Zworski87}.  
We will determine the asymptotic distribution of resonances for a potential $V$  which satisfies the following
conditions.
 \begin{hyp}\label{h:V} Let $j,k,l\in \Natural_0$, with 
 $0\leq j\leq k\leq l < \infty, $  $a<b<c$, $V \in C^{N_1}([a,b]) \cap C^{N_2}([b,c])$ for some $N_1>k,\; N_2>l$, $\ch \supp V=[a,c]$, 
  and
\[V(x) \sim 
    \begin{cases}
      C_1 (x-a)^j, & \text{as}\ x \rightarrow a+ \\
      C_2 (b-x)^k, & \text{as}\ x \rightarrow b- \\
      C_3 (x-b)^k, & \text{as}\ x \rightarrow b+ \\
      C_4 (c-x)^l, & \text{as}\ x \rightarrow c-
    \end{cases}
\]
for some constants  $C_i \in \mathbb{C}\setminus \{0\}$ with $C_2 \not= (-1)^k C_3$. 
\end{hyp}
We note that $V=V_1+V_2$ where $V_1 \in \mathcal{V}_{j,k}([a,b]), V_2 \in \mathcal{V}_{k,l}([b,c])$, and the condition on $C_2 $ and $C_3$ ensures
$V$ has a singularity at $b$.
In principle, one could use the techniques we develop here to consider sums of more potentials of these same types, resulting (in some cases)
in more strings of resonances.

Here we will see that there are three cases, determined by the relative size of three quantities determined by $a,\;b,$ and $c$ and by $j,\;k,$ and $l$.  Our results are very similar to the case of the three delta functions potential of \cite{Datchev22}.


Clearly $V \in \mathcal{U}(a,a,b,c;j,k,l)$.  Setting
\begin{align}
A= b-a,\; & \alpha = j+k+4, \; C_A= \frac{j!k!C_1(C_2+(-1)^{k+1}C_3)}{{2}^\alpha}\\
B= c-a,\; & \beta= j+l+4,\; C_B= \frac{j!l!C_1C_4}{2^{\beta}}
\end{align}
and applying the methods of Section \ref{s:extension} yields the following asymptotic representation of the determinant of the scattering matrix:
\begin{equation}\label{eq:S2p}
    \det S_V(-\lambda) = 1-\frac{C_A}{(i\lambda)^\alpha} e^{2Ai\lambda} (1+o(1)) - \frac{C_B}{(i\lambda)^\beta} e^{2Bi\lambda} (1+o(1)) + O(|\lambda|^{-1})
\end{equation}
in $\im \lambda \leq 0, |\lambda| > C_0$.  Here we used that $j\leq k\leq l$ made it possible to absorb one additional term which would occur with more general hypotheses.  We shall see below that the large $|\lambda|$ distribution of resonances depends 
on the relative sizes of $T_1$ and $T_2$, where 
\begin{equation}
T_1 := \frac{\alpha}{2A} = \frac{j+k+4}{2(b-a)}, \; T_2 := \frac{\beta}{2B} = \frac{j+l+4}{2(c-a)},\; \text{and}\;
T_3 := \frac{\beta - \alpha}{2(B-A)} = \frac{l-k}{2(c-b)}.
\end{equation}
For simplicity below we shall study resonances in $\re \lambda\geq 0$; resonances in $\re \lambda<0$ can be
 found by using that $\lambda $ is a resonance of $-\frac{d^2}{dx^2}+V$ if and only if $-\overline{\lambda}$ is a resonance of $-\frac{d^2}{dx^2}+\overline{V}.$ There are three cases.

\begin{case}\label{c:1}($T_1 > T_2$)  Let $V$ satisfy the hypotheses \ref{h:V}, and suppose $T_1>T_2$.  Then there is an $N\in \Natural$, $R>0$, so
that the resonances of $-\frac{d^2}{dx^2}+V$ in $|\lambda| >R$, $\re \lambda \geq 0$ are given by 
 \begin{equation}\label{eq:c1} \lambda_{n} =  \frac{n \pi}{c-a}  +\frac{j+l+4}{2(c-a)}\frac{\pi}{2} + i\frac{\log C_B}{2(c-a)} - i\frac{j+l+4}{2(c-a)}\log\left(\frac{n\pi}{c-a}\right) + \varepsilon_{ n},\; n>N
 \end{equation}
 with $\varepsilon_{ n}\rightarrow 0$ as $ n \rightarrow \infty$.
 \end{case}

    \begin{proof}
    It is easy to check that the assumption $T_1>T_2$ implies $T_2>T_3$. Pick $T$ with $T_2>T>T_3$. In $\hat{L}_T$ we find
    \[|\lambda^{-\alpha} e^{2Ai\lambda}| < |\lambda|^{-\alpha}(1+|\lambda|)^{2AT} = o(1)\]
    as $|\lambda| \rightarrow \infty$; similarly the term involving $\lambda^{-\beta} e^{2Bi\lambda}$ decays as well so we determine from (\ref{eq:S2p}) that there are only finitely many resonances in $\hat{L}_T$.

    In $\mathbb{C}\setminus \hat{L}_T = \{\im \lambda \leq -T\log(1+|\lambda|)\}$ on the other hand, we see that
    \[|\lambda^{\beta-\alpha} e^{2(A-B)i\lambda}| \leq |\lambda|^{\beta-\alpha}(1+|\lambda|)^{-2(B-A)T} = o(1)\]
    so $\lambda^{-\alpha} e^{2Ai\lambda} = \lambda^{-\beta} e^{2Bi\lambda}o(1)$ and therefore
    \[\det S_V(-\lambda) = 1 - \frac{C_B}{(i\lambda)^\beta} e^{2Bi\lambda}(1+o(1)) + O(|\lambda|^{-1})\]
    in $\mathbb{C} \setminus \hat{L}_T$. We conclude using Lemma \ref{l:Hm1} that the resonances of $-\frac{d^2}{dx^2} + V$ in $|\lambda|>R$, $\re \lambda \geq 0$
     form a sequence given by (\ref{eq:c1}).\end{proof}
    Notice that in Case \ref{c:1} we obtain asymptotically the same sequence as is given by Theorem \ref{t:as} if there were no singularity at $b$. The sequence lies approximately on $\{\im \lambda = \frac{\log|C_B|}{2(c-a)} - T_2\log|\re\lambda|\}$ and has linear density $\frac{B}{\pi} = \frac{1}{\pi}|\ch\supp V|$.


\begin{case}[$T_2>T_1$] \label{c:2}
    \:
    Let $V$ satisfy the hypotheses \ref{h:V}, and suppose $T_2>T_1$.  Then there is an $N\in \Natural$, $R>0$, so
that the resonances of $-\frac{d^2}{dx^2}+V$ in $|\lambda| >R$, $\re \lambda \geq 0$ are given by 
 \begin{equation}\label{eq:c2a} \lambda_{n,1} =\frac{n \pi}{b-a} +\frac{j+k+4}{2(b-a)}\frac{\pi}{2} + i\frac{\log C_A}{2(b-a)} - i\frac{j+k+4}{2(b-a)}\log\left(\frac{n\pi}{b-a}\right) + \varepsilon_{ n,1},\;   n>N
 \end{equation}
 and 
 \begin{equation}\label{eq:c2b} \lambda_{n,2} =\frac{n \pi}{c-b} + \frac{l-k}{2(c-b)}\frac{\pi}{2} + i\frac{\log [-C_A^{-1}C_B]}{2(c-b)} - i\frac{l-k}{2(c-b)}\log\left(\frac{n\pi}{c-b}\right) + \varepsilon_{n,2},\; n>N
 \end{equation}
 with $\varepsilon_{ n,j}\rightarrow 0$ as $ n \rightarrow \infty$.
 \end{case}
    
    \begin{proof}
    The assumption $T_2>T_1$ implies  $T_3>T_2$.  Let $T\in \Real$ satisfy $T_2>T>T_1$. In $\hat{L}_T$ the term involving $\lambda^{-\beta}e^{2Bi\lambda}$ in (\ref{eq:S2p})  decays, so
    \[\det S_V(-\lambda) = 1 - \frac{C_A}{(i\lambda)^\alpha} e^{2Ai\lambda}(1+o(1)) + o(1)\]
    as $|\lambda| \rightarrow \infty$ through $\hat{L}_T \cap \{\im \lambda \leq 0\}$. Thus, in this region the resonances
    with $\re \lambda >0$ form the sequence (\ref{eq:c2a}).

    In $\mathbb{C}\setminus \hat{L}_T$ we see that $|\lambda^\alpha e^{-2Ai\lambda}| = o(1)$, so multiplying (\ref{eq:S2p}) by $-C_A^{-1}(i\lambda)^\alpha e^{-2Ai\lambda}$ we find
    \[-C_A^{-1}(i\lambda)^\alpha e^{-2Ai\lambda} \det S_V(-\lambda) = 1 + \frac{C_A^{-1}C_B}{(i\lambda)^{\beta-\alpha}} e^{2(B-A)i\lambda}(1+o(1)) + o(1).\]
    Lemma \ref{l:Hm1} gives us a \emph{second} sequence in $\re \lambda>0$ given by (\ref{eq:c2b}).
\end{proof}
We note the sequence (\ref{eq:c2a}) lies 
     approximately on $\{\im \lambda = \frac{\log|C_A|}{2(b-a)} - T_1\log|\re \lambda|\}$ and has linear density $\frac{A}{\pi} = \frac{1}{\pi}(b-a)$, while 
     the sequence (\ref{eq:c2b}) lies
    approximately on $\{\im \lambda = \frac{\log|C_A^{-1}C_B|}{2(c-b)} - T_3\log|\re\lambda|\}$ and has linear density $\frac{B-A}{\pi} = \frac{1}{\pi}(c-b)$. The densities of the two sequences sum to $\frac{1}{\pi}|\ch \supp V|$, as they must (recalling there are similar sequences in the left half plane).

\begin{case}[$T_1=T_2$] If $V$ satisfies hypotheses \ref{h:V} and $T_1=T_2$, then there is an $N\in \Natural$ so that for 
   $m = 1, \dots, j+l+4$, and all $n \geq N$,  $-\frac{d^2}{dx^2}+V$ has  resonances
        \begin{multline}\label{e:ms}
        \lambda_{n,m} =  \frac{(j+l+4)n \pi}{c-a} + \frac{(j+l+4)}{2(c-a)}\frac{\pi}{2} - i\frac{(j+l+4)}{2(c-a)}\log z_m \\
        - i\frac{(j+l+4)}{2(c-a)}\log\left(\frac{(j+l+4)n\pi}{c-a}\right) + \varepsilon_{n,m} 
        \end{multline}
        where $\{z_m\}_{m=1}^{j+l+4}$ are the roots of the equation $C_B z^{j+l+4} + C_A z^{j+k+4} - 1 = 0$ repeated with multiplicity, and $\varepsilon_{n,m} \rightarrow 0$ as $n \rightarrow \infty$.   These may not be in one-to-one correspondence with all the (sufficiently large) resonances in $\re \lambda>0$.
     \end{case}
    
    \begin{proof}
    When $T_1=T_2$, we find $T_3=T_1$ and no term of (\ref{eq:S2p}) dominates.    In this case we must use Lemma \ref{l:Hm2}, which 
    shows that there are zeros of $\det S_{V}(-\lambda)$ at points described by (\ref{e:ms}).\end{proof}

    In contrast with Case \ref{c:2}, these sequences lie approximately on logarithmic curves which differ at most by a finite shift.
    Notice moreover that the linear density of each sequence $\{\lambda_{ n,m}\}_{n=N}^{\infty}$ is equal to $\frac{c-a}{(j+l+4)\pi}$ and that together, therefore, the set of resonances found  has linear density $\frac{1}{\pi}|\ch \supp V|$ (compare with (\ref{eq:MZas})); we make no claim, however, that we have found all but finitely many of the resonances.

    We also note that the $j+l+4$ sequences could each lie approximately on \emph{distinct} logarithmic curves, or we could have multiple of them lying on the same curve, depending upon the multiplicities of the $\{z_m\}$. Indeed, by tuning the parameters $j,k,l,C_A,C_B$, so that the equation $C_B z^{j+l+4} + C_A z^{j+k+4} - 1 = 0$ has a solution of a desired multiplicity $M>1$, one constructs a potential which has resonances asymptotically in dense clouds, each of this same multiplicity. For instance, suppose the parameters are chosen so that $z_0$ is a solution of $C_B z^{j+l+4} + C_A z^{j+k+4} - 1 = 0$ of multiplicity $M>1$ (it is easy to see that this is possible for any specified multiplicity). Then for $m = 1, \dots, M$, and all $n \geq N$, we will have the resonances
    \begin{align*} \lambda_{n,m} & =  \frac{(j+l+4)n \pi}{c-a} + \frac{(j+l+4)}{2(c-a)}\frac{\pi}{2} - i\frac{(j+l+4)}{2(c-a)}\log z_0
     - i\frac{(j+l+4)}{2(c-a)}\log\left(\frac{(j+l+4)n\pi}{c-a}\right) + \varepsilon_{n,m} \\ &
    =: \gamma_{n} + \varepsilon_{n,m}.
    \end{align*}
As these sequences differ only in the small remainder term $\varepsilon_{n,m}$, for each $n \geq N$ we find a cloud of $M$ resonances concentrated near the points $\gamma_{n}$, and since $\varepsilon_{n,m} \rightarrow 0$, these clouds will be more tightly packed the larger we take $n$.


\section{A bound for resonances in terms of the singular support}\label{s:ss}
In this section we prove Theorem \ref{t:ss}.  
In order to do so, we first recall a complex analysis result for estimating the number of zeros of an analytic function in an ellipse. The following is taken from \cite[Ch. IX, Sec. C, Corollary, pg. 61]{Koosis92}.

\begin{lemma}\label{l:koosis}(\cite{Koosis92})
    Let $a$ and $\gamma > 0$, and suppose that $F(z)$ is analytic inside and on the ellipse
    \[E_\gamma := \{z = a\cosh(\gamma + i\theta): \theta \in [0, 2\pi]\}.\]

    If $0 \leq \eta < \gamma$ and $N_\eta$ denotes the number of zeros of $F$ (counting multiplicities) inside and on the ellipse $E_\eta$, then
    \[(\gamma - \eta)N_\eta \leq \frac{1}{2\pi}\int_0^{2\pi} \log |F(a\cosh(\gamma + i\theta))|d\theta - \frac{1}{\pi}\int_{-a}^a \frac{\log|F(t)|}{\sqrt{a^2-t^2}}dt.\]
\end{lemma}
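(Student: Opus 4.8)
The plan is to reduce this to the classical Jensen/Hadamard three-circles inequality on an annulus via the Joukowski map. Set \(\phi(w) = \tfrac{a}{2}(w + w^{-1})\), which maps the closed annulus \(A_\gamma := \{e^{-\gamma} \le |w| \le e^\gamma\}\) onto the closed region bounded by \(E_\gamma\) as a two-to-one branched cover, with deck involution \(w \mapsto 1/w\) and branch points \(w = \pm 1\) lying over the foci \(z = \pm a\). Note \(\phi(e^{\gamma + i\theta}) = a\cosh(\gamma + i\theta)\) parametrizes \(E_\gamma\), while \(\phi(e^{i\theta}) = a\cos\theta\) traces the focal segment \([-a,a]\) (twice). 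Pulling \(F\) back, the function \(G := F \circ \phi\) is analytic on the open annulus \(\{e^{-\gamma} < |w| < e^\gamma\}\), continuous up to its boundary, and satisfies the symmetry \(G(w) = G(1/w)\).

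Next I would study the circular mean \(M(s) := \tfrac{1}{2\pi}\int_0^{2\pi} \log|G(e^s e^{i\theta})|\,d\theta\) for \(s \in [-\gamma, \gamma]\). Since \(\log|G|\) is subharmonic on the annulus, \(M\) is convex in \(s\); the symmetry \(G(w) = G(1/w)\) forces \(M(-s) = M(s)\); and the standard Jensen-for-annuli identity gives that \(M'\) is nondecreasing with \(M'(s+) - M'(s-) = \#\{\text{zeros of } G \text{ on the circle } |w| = e^s\}\) (counted with multiplicity) and \(M'(\eta+) = M'(0+) + \#\{\text{zeros of } G : 1 < |w| \le e^\eta\}\) for \(\eta > 0\), where by evenness \(M'(0+) = \tfrac12\,\#\{\text{zeros of } G \text{ on } |w| = 1\} =: \tfrac12 Z_1\). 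A change of variables identifies the two endpoint means with the quantities in the lemma: \(M(\gamma) = \tfrac{1}{2\pi}\int_0^{2\pi}\log|F(a\cosh(\gamma + i\theta))|\,d\theta\), and, substituting \(t = a\cos\theta\), \(M(0) = \tfrac{1}{\pi}\int_{-a}^{a} \tfrac{\log|F(t)|}{\sqrt{a^2 - t^2}}\,dt\); in particular the weight \(\tfrac{dt}{\pi\sqrt{a^2-t^2}}\) appears precisely as the pushforward of \(\tfrac{d\theta}{2\pi}\) under \(\phi\).

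Finally I would assemble these facts. The two-to-one cover relates the zero counts: if \(N_\eta\) is the number of zeros of \(F\) inside and on \(E_\eta\), then a short multiplicity bookkeeping — a zero of \(F\) off the focal segment has one preimage with \(1<|w|\); a zero on the open segment has two preimages on \(|w|=1\); a zero of order \(m\) at a focus \(\pm a\) becomes a zero of \(G\) of order \(2m\) at \(\pm 1\); a zero on \(E_\eta\) has one preimage on \(|w|=e^\eta\) — yields the identity \(N_\eta = n(\eta) + \tfrac12 Z_1\), where \(n(\eta) := \#\{\text{zeros of } G : 1 < |w| \le e^\eta\}\). Then, using \(M(\gamma) - M(0) = \int_0^\gamma M'(s)\,ds\), the monotonicity of \(M'\), the jump \(M'(0+) = \tfrac12 Z_1\), and \(M'(s) \ge M'(\eta+) = \tfrac12 Z_1 + n(\eta)\) for (a.e.) \(s \in (\eta,\gamma)\), one obtains \(M(\gamma) - M(0) \ge \eta\cdot\tfrac12 Z_1 + (\gamma - \eta)\big(\tfrac12 Z_1 + n(\eta)\big) = \gamma\cdot\tfrac12 Z_1 + (\gamma - \eta)\,n(\eta) \ge (\gamma - \eta)\big(\tfrac12 Z_1 + n(\eta)\big) = (\gamma - \eta)N_\eta\), using \(\eta \ge 0\). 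This is exactly the asserted inequality. I expect the main obstacle to be the careful bookkeeping of multiplicities under the branched cover — the degenerate behavior at the foci \(\pm a\) and on the focal segment \([-a,a]\), and zeros lying on \(E_\eta\) itself — together with making the "derivative counts zeros" statement for \(M\) precise on a closed annulus that may carry zeros on its bounding circles; both are routine but warrant attention to corner cases.
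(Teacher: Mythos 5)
The paper does not actually prove this lemma: it is quoted from Koosis (\cite{Koosis92}, Ch.~IX, Sec.~C), with only the remark that the version stated here is slightly stronger than the one printed there but ``implied without any change to the proof.'' Your argument is a correct, self-contained proof, and it follows what is essentially the standard (indeed Koosis's) route: pull back by the Joukowski map $\phi(w)=\tfrac{a}{2}(w+w^{-1})$ to the annulus, use convexity of the circular means $M(s)$ of $\log|G|$ together with the annular Jensen formula (jumps of $M'$ count zeros), exploit the evenness $M(-s)=M(s)$ forced by $G(w)=G(1/w)$ to get $M'(0+)=\tfrac12 Z_1$, and convert back via $M(0)=\tfrac1\pi\int_{-a}^a\log|F(t)|\,(a^2-t^2)^{-1/2}dt$. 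The multiplicity bookkeeping at the foci and on the focal segment is handled correctly ($Z_1/2$ equals the number of zeros of $F$ on $[-a,a]$ with multiplicity, including doubled orders at $\pm a$), and the final chain $M(\gamma)-M(0)\ge \gamma\cdot\tfrac12 Z_1+(\gamma-\eta)n(\eta)\ge(\gamma-\eta)N_\eta$ is exactly the claimed inequality, with the discarded term $\eta\cdot\tfrac12 Z_1\ge 0$ accounting for why the stated version (zeros counted on $E_\eta$ and on the segment) still holds. Two routine caveats worth a sentence in a written version: one must assume $F\not\equiv 0$ (otherwise $N_\eta$ is undefined), and one should note that $\log|F|$ remains integrable against $(a^2-t^2)^{-1/2}dt$ even when $F$ vanishes on $[-a,a]$ or at the foci, so that $M(0)$ is finite or $-\infty$ consistently with the inequality.
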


The result stated in \cite{Koosis92} is actually slightly weaker than this, but the statement above is implied without any change to the proof. We will apply this result to $D_U(\lambda)$ (which is always analytic away from $\lambda = 0$; see Section \ref{sec:preliminaries}) and to ellipses which are shifted and rescaled so as to occupy larger and larger portions of the region above a logarithmic curve.  
To estimate the resulting terms we will need  the following bound. We set $x_- := \max\{0, -x\}$.

\begin{lemma}\label{l:Dub}
    Let $U \in L_c^\infty(\Real;\mathbb{C})$ and $W \in C_c^\infty(\Real;\mathbb{C})$. Then for any $M>0$, there exists $C \in \Real$ and $C_0 > 0$ such that
    \[\log|D_{U+W}(\lambda)| \leq C+2|\ch \supp U|(\im \lambda)_-\]
    in $\hat{L}_M$ when $|\lambda| > C_0$.
\end{lemma}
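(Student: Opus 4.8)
The plan is to bound $\log|D_{U+W}(\lambda)|$ by reducing to two facts already at our disposal: the relation $\det S_V(\lambda) = D_V(-\lambda)/D_V(\lambda)$, and the uniform two-sided bound $C_1 \leq |D_V(\lambda)| \leq C_2$ in the upper half-plane $\im\lambda \geq 0$, $|\lambda| > C_0$, from property (iii) of Section \ref{sec:preliminaries}. The key point is that for $\lambda \in \hat{L}_M$ with $\im\lambda \leq 0$ (the only interesting case, since $\im\lambda \geq 0$ is handled directly by (iii)), we can write
\[
D_{U+W}(\lambda) = \frac{D_{U+W}(-\lambda)}{\det S_{U+W}(-\lambda)},
\]
where now $-\lambda$ lies in the upper half-plane with $\im(-\lambda) = -\im\lambda \leq M\log(1+|\lambda|)$, in particular $\im(-\lambda) \geq 0$ is \emph{not} guaranteed — so this naive step needs a little care near the real axis. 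Instead I would argue: $|D_{U+W}(-\lambda)|$ is bounded above by a constant for $\im(-\lambda) \geq 0$, and one extends this to the thin strip $-M\log(1+|\lambda|) \leq \im(-\lambda) \leq 0$ using that $f_{U+W}(\mu) = \mu D_{U+W}(\mu)$ is entire of exponential type with indicator function $0$ on $[0,\pi]$; combined with the integrability of $\log^+|f_{U+W}(t)|/(1+t^2)$ this gives, via a Phragmén–Lindelöf / Levin-type argument for functions of class C, that $\log|D_{U+W}(\mu)| \leq C + 2|\ch\supp(U+W)|(\im\mu)_-$ everywhere. That already is a bound of the claimed shape but with $\ch\supp(U+W)$ in place of $\ch\supp U$.

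To replace $\ch\supp(U+W)$ by $\ch\supp U$, I would use the structure of $W \in C_c^\infty$. The point is that $W$ is smooth, so the extra support it contributes should not cost anything in the \emph{logarithmic} region $\hat{L}_M$. Concretely, writing $\ch\supp(U+W) = [a',b']$ and $\ch\supp U = [a,b]$ with $[a,b] \subseteq [a',b']$, one compares $D_{U+W}$ with $D_U$. Using (\ref{eq:Sme}) and Lemma \ref{l:Tsum} applied with potentials $U$ and $W$ (or rather a variant comparing $U+W$ to $U$, noting $W$ need not be supported inside $\supp U$ here — so one should instead use that $\rhopm^{U+W}$ and $\rhopm^{U}$ differ by terms carrying the exponential factors $e^{2i\lambda b'}$, $e^{-2i\lambda a'}$ coming from the outer support, which decay like $|\lambda|^{-\infty}$ against the smooth tails of $W$), one gets
\[
\det S_{U+W}(-\lambda) = \det S_U(-\lambda) + (\text{terms bounded by } C|\lambda|^{-N}e^{2|\ch\supp U||\im\lambda|}\ \text{for all } N)
\]
in $\im\lambda \leq 0$; in the logarithmic region $\hat{L}_M \cap \{\im\lambda \leq 0\}$ the factor $e^{2|\ch\supp U||\im\lambda|}$ is at most polynomial in $|\lambda|$, so these corrections are $O(|\lambda|^{-1})$ there. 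Hence $\log|\det S_{U+W}(-\lambda)| = \log|\det S_U(-\lambda)| + o(1)$ in $\hat{L}_M \cap \{\im\lambda \leq 0\}$, and then $\log|D_{U+W}(\lambda)| = -\log|\det S_{U+W}(-\lambda)| + \log|D_{U+W}(-\lambda)| \leq C + \log|\det S_U(-\lambda)|^{-1}$, where finally $\log|\det S_U(-\lambda)|^{-1} \leq 2|\ch\supp U|(\im\lambda)_- + (\text{correction})$ is obtained from the representation (\ref{eq:smat})–(\ref{eq:Sme}) of $\det S_U$ together with the crude bound $|\rhopm^U(-\lambda)| \leq Ce^{2|\ch\supp U||\im\lambda|}$ from (\ref{eq:Tpm}) and (\ref{eq:fbd}): indeed $1 - \det S_U(-\lambda)$ is $O(\lambda^{-2})e^{2|\ch\supp U||\im\lambda|} + O(|\lambda|^{-1})$, so $|\det S_U(-\lambda)|$ is bounded below by $c\,e^{-2|\ch\supp U||\im\lambda|}$ once $|\lambda|$ is large (using that the leading correction, when it is not small, dominates).

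\textbf{Main obstacle.} The delicate part is the treatment near the real axis and making the Phragmén–Lindelöf step precise: $\hat{L}_M$ dips slightly below $\Real$, and the clean estimate $|D_{U+W}| \leq C_2$ only holds in the closed upper half-plane outside a disk. One must therefore carefully combine the exponential-type growth of $f_{U+W}$, its indicator function, and the $\log^+$-integrability on $\Real$ to propagate the bound into the thin logarithmic strip below $\Real$ — this is exactly the kind of "class C" function argument invoked for Lemma \ref{l:Sasymp}, and I expect the cleanest route is to cite \cite[Section 16.1, Theorem 2]{Levin96} or the underlying Phragmén–Lindelöf principle directly rather than reprove it. A secondary technical point is justifying that the smooth part $W$'s contribution to $\det S_{U+W} - \det S_U$ decays faster than any power times $e^{2|\ch\supp U||\im\lambda|}$; this follows from the decay of $\hat{W}$ and the structure of Lemma \ref{l:UTs}-type expansions, but needs the outer support points $a',b'$ to only enter through $O(|\lambda|^{-\infty})$ terms, which is where smoothness of $W$ is essential.
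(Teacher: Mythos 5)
Your argument has the key identity backwards, and this flips the entire logical structure of the proof into something that cannot work. From property (i) of Section \ref{sec:preliminaries}, $\det S_V(-\lambda)=D_V(\lambda)/D_V(-\lambda)$, i.e.\ $D_V(\lambda)=D_V(-\lambda)\,\det S_V(-\lambda)$, not $D_V(-\lambda)/\det S_V(-\lambda)$ as you wrote. With the correct identity, for $\im\lambda\le 0$ the point $-\lambda$ lies in the \emph{closed upper half-plane} (your worry that ``$\im(-\lambda)\ge 0$ is not guaranteed'' is unfounded), so $|D_{U+W}(-\lambda)|\le C_2$ by (iii) and all one needs is an \emph{upper} bound on $|\det S_{U+W}(-\lambda)|$; no Phragm\'en--Lindel\"of step is required anywhere. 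Your inverted identity instead forces you to seek a \emph{lower} bound $|\det S_U(-\lambda)|\ge c\,e^{-2|\ch\supp U||\im\lambda|}$ in $\hat L_M\cap\{\im\lambda\le 0\}$, and that bound is simply false: the zeros of $\det S_U(-\lambda)$ in $\im\lambda\le 0$ are exactly the resonances of $U$, and for non-smooth $U$ these do occur in $\hat L_M$ at arbitrarily large $|\lambda|$ (Theorem \ref{t:as} exhibits whole sequences of them). Your dichotomy ``either the correction $1-\det S_U$ is small or it dominates'' breaks down precisely when the correction has modulus comparable to $1$ --- e.g.\ when it equals $-1$ at a resonance. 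Also note that a lower bound on $|\det S_U|$ would give an upper bound on $|D_U(\lambda)|^{-1}$, not on $|D_U(\lambda)|$; a correct upper bound for $|D_{U+W}(\lambda)|$ must be compatible with $D_{U+W}$ vanishing at resonances, which the multiplicative identity is and your quotient is not.

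The second half of your sketch also organizes the comparison the wrong way around. You compare $\det S_{U+W}$ with $\det S_U$, which requires a variant of Lemma \ref{l:Tsum} you do not prove and produces cross terms in which the rough inner potential contributes only finitely many powers of $|\lambda|^{-1}$ against the outer exponentials $e^{2i\lambda b'},e^{-2i\lambda a'}$; your claimed error of the form $C_N|\lambda|^{-N}e^{2|\ch\supp U||\im\lambda|}$ for all $N$ is not what comes out. The paper's proof avoids this by first arranging (via $U+W=(U-P)+(W+P)$ for a smooth bump $P$) that $\ch\supp U\subset\ch\supp W$, and then applying Lemma \ref{l:Tsum} with the \emph{smooth} $W$ as the base potential and the rough $U$ as the inner perturbation: this gives $\rho_\pm^{U+W}(-\lambda)=\rho_\pm^{W}(-\lambda)+e^{\mp 2i\lambda(\text{endpoint of }\ch\supp U)}O(1)$, and Lemma \ref{l:UTs} shows $\rho_\pm^{W}$ itself is $O(|\lambda|^{-\infty})$ times the outer exponentials. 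Hence in the product $\rho_-^{U+W}\rho_+^{U+W}$ every term carrying the outer support $|\ch\supp W|$ has an $O(|\lambda|^{-\infty})$ coefficient (harmless in $\hat L_M$, where $e^{2|\ch\supp W||\im\lambda|}$ grows only polynomially), and the sole $O(1)$ term carries exactly $e^{2i\lambda|\ch\supp U|}$; this yields $|\det S_{U+W}(-\lambda)|\le Ce^{-2|\ch\supp U|\im\lambda}$ and, with the corrected identity, the lemma.
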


\begin{proof}
    If $\ch \supp U = [a,b]$, $\ch\supp W = [a_1,b_1]$, we may assume that $a_1 \leq a < b \leq b_1$, for otherwise we write $U+W = (U-P)+(W+P)$ for some $P \in C_c^\infty(\Real)$ with $\ch \supp P = [a,b]$ and apply the proof to $U_1 = U-P$ and $W_1 = W+P$.

    From Section \ref{sec:preliminaries}, we know that for some $C_0, C_2, |D_{U+W}(\lambda)| \leq C_2$ in $\im\lambda \geq 0, |\lambda| > C_0$, and
    \begin{equation}\label{eq:Dsum}
        |D_{U+W}(\lambda)| = |D_{U+W}(-\lambda)| |\det S_{U+W}(-\lambda)| \leq C_2|\det S_{U+W}(-\lambda)|
    \end{equation}
    in $\im\lambda \leq 0, |\lambda| > C_0$.

    Now, Lemma \ref{l:Tsum} (applied with $V_{\ref{l:Tsum}}=W$, $W_{\ref{l:Tsum}}=U$) shows that
    \begin{align*}\rhom^{U+W}(-\lambda) & = \rhom^W(-\lambda) + e^{2i\lambda b}O(1) \\
    \rhop^{U+W}(-\lambda) &= \rhop^W(-\lambda) + e^{-2i\lambda a}O(1).
    \end{align*}

    Using Lemma \ref{l:UTs} for the terms $\rhopm^W$, we obtain that
    \begin{equation}\label{eq:tprod}
        \rhom^{U+W}(-\lambda)\rhop^{U+W}(-\lambda) = e^{2i\lambda|\ch\supp U|}O(1) + e^{2i\lambda|\ch\supp W|}O(|\lambda|^{-\infty}).
    \end{equation}
    in $\im \lambda \leq 0, |\lambda| > C_0$. Therefore using (\ref{eq:Sme}) we see that for any $M>0$, in $\hat{L}_M \cap \{\im \lambda \leq 0\}$ 
    if  $|\lambda| > C_0$, one has
    \[|\det S_{U+W}(-\lambda)| \leq Ce^{-2|\ch \supp U|\im \lambda}.\]
    Pairing this with (\ref{eq:Dsum}) proves the lemma.
\end{proof}

Given $V \in L_c^\infty(\Real)$ and $\varepsilon>0$ we can write
\begin{equation}\label{eq:Uassum}
V=U+W\; \text{with $U \in L_c^\infty(\Real)$, $W \in C_c^\infty(\Real)$, and  $|\ch \supp  U| \leq |\ch \sing \supp V | + 2\varepsilon$}
\end{equation}
where of course the choice of $U$ and $W$ depends on $\varepsilon$.
Using this decomposition of $V$ and (\ref{eq:tprod}) we prove the second part of Theorem \ref{t:ss}.
Indeed given $M<|\ch \sing \supp V|^{-1}$, choose $\varepsilon > 0$
 so that $$M < (|\ch \sing\supp V| + 2\varepsilon)^{-1} \leq |\ch \supp U|^{-1}$$ where we have written $V=U+W$ as 
 in (\ref{eq:Uassum}). Using (\ref{eq:Sme}) and (\ref{eq:tprod}) we see that $\det S_V(-\lambda) = 1+o(1)$ in $\hat{L}_M \cap \{\im \lambda \leq 0\}$ and therefore $\mathcal{R}_V \cap \hat{L}_M \cap \{|\lambda| > R\} = \emptyset$ for sufficiently large $R$.

 Since in the decomposition (\ref{eq:Uassum}) $\varepsilon>0$ is arbitrary the rest of Theorem \ref{t:ss} follows from Theorem \ref{t:ssv2}.   For the proof recall that 
 $\hat{L}_M^\pm := \hat{L}_M \cap \{\pm \re\lambda > 0\}$, and define
 \[L_M := \{\lambda:\im\lambda = -M\log(1+|\lambda|)\}.\]

 \begin{theorem}\label{t:ssv2}
     Let $U \in L_c^\infty(\Real;\mathbb{C})$ and $W \in C_c^\infty(\Real;\mathbb{C})$. Then for any $M>0$ we have
     \[\overline{\lim_{r\rightarrow\infty}}\frac{n_{U+W}(\hat{L}_M^\pm ; r)}{r} \leq \frac{1}{\pi}|\ch \supp U|.\]
 \end{theorem}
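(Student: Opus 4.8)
The plan is to apply Lemma~\ref{l:koosis} to the analytic function $D_{U+W}(\lambda)$ along a family of ellipses whose real axis grows while their vertical extent stays proportional to $\log r$, so that the ellipses sweep out the region $\hat L_M^\pm$ near the real axis. First I would fix the sign, say $\hat L_M^+$, and choose a sequence of ellipses $E_\gamma^{(r)}$ centered at a point on the positive real axis near $r/2$, with semi-major axis of order $r$ along $\Real$ and semi-minor axis of order $C'\log r$ for a suitable constant $C'$ chosen large enough (in terms of $M$) that the inner ellipse $E_\eta^{(r)}$ contains $\hat L_M^+ \cap \{C_0 < |\lambda| < r\}$. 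The key point is that $n_{U+W}(\hat L_M^+;r)$ is bounded (up to an $O(1)$ from the disk $|\lambda|\le C_0$) by the count $N_\eta$ of zeros of $D_{U+W}$ inside $E_\eta^{(r)}$, so Lemma~\ref{l:koosis} gives
\[
(\gamma-\eta)\, n_{U+W}(\hat L_M^+;r) \le \frac{1}{2\pi}\int_0^{2\pi}\log|D_{U+W}(a\cosh(\gamma+i\theta))|\,d\theta \;-\;\frac{1}{\pi}\int_{-a}^{a}\frac{\log|D_{U+W}(t)|}{\sqrt{a^2-t^2}}\,dt + O(1).
\]

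Next I would estimate the two integrals on the right. For the boundary integral over $E_\gamma^{(r)}$, the outer ellipse is chosen so that its lower arc sits just above (or within a bounded distance of) the curve $L_M$, hence inside $\hat L_{M'}$ for some $M'$; there Lemma~\ref{l:Dub} gives $\log|D_{U+W}(\lambda)| \le C + 2|\ch\supp U|(\im\lambda)_-$, while on the upper arc, which lies in $\{\im\lambda\ge 0\}$, property (iii) of $D_V$ gives $\log|D_{U+W}|\le C$. Integrating $(\im\lambda)_-$ around the ellipse: parametrizing $\lambda = a\cosh(\gamma+i\theta)$, one has $\im\lambda = a\sinh\gamma\sin\theta$, so $\int_0^{2\pi}(\im\lambda)_-\,d\theta = a\sinh\gamma\int_\pi^{2\pi}|\sin\theta|\,d\theta = 2a\sinh\gamma$, and the semi-minor axis $a\sinh\gamma$ is $O(\log r)$ by construction. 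For the real-axis integral, by (iii) we have $|\log|D_{U+W}(t)||$ bounded for $|t|>C_0$ and $D_{U+W}$ has only finitely many zeros/poles in $|t|\le C_0$, so $\int_{-a}^a \frac{\log|D_{U+W}(t)|}{\sqrt{a^2-t^2}}\,dt = O(1)$ as $a\to\infty$ (using $\int_{-a}^a \frac{dt}{\sqrt{a^2-t^2}}=\pi$ and that near $t=\pm a$ the weight is integrable). Putting these together, the right side is $\le 2|\ch\supp U|\,a\sinh\gamma\cdot O(1) + O(1) = O(|\ch\supp U|\log r)$.

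The final step is bookkeeping with the geometric parameters: I must choose $a$, $\gamma$, $\eta$ so that $\gamma - \eta$ is bounded below by a positive constant while simultaneously (a) $E_\eta^{(r)}$ covers $\hat L_M^+\cap\{|\lambda|\le r\}$, forcing $a\sinh\eta \gtrsim M\log r$, and (b) $a\sinh\gamma \le (M+\delta)\log r$ roughly, so the outer boundary stays in a logarithmic neighborhood where Lemma~\ref{l:Dub} applies with the clean constant $|\ch\supp U|$. Dividing by $(\gamma-\eta)r$ and letting $r\to\infty$, the dominant term is $\frac{2|\ch\supp U|\,a\sinh\gamma}{2\pi(\gamma-\eta)r}$; the subtlety is that to get exactly $\frac1\pi|\ch\supp U|$ rather than a larger constant I need the ratio $\frac{a\sinh\gamma}{(\gamma-\eta)r}\to \frac{1}{?}$ under control — this is arranged by taking $a$ comparable to $r$ and both $\gamma,\eta$ small with $\sinh\gamma\approx\gamma$, $\sinh\eta\approx\eta$, so that $a\gamma$ and $a\eta$ are the semi-minor axes ($O(\log r)=o(r)$) and $\gamma-\eta$ itself can be taken to be a fixed fraction of $\gamma$; then $\frac{a\sinh\gamma}{(\gamma-\eta)r}\sim \frac{a}{r}\cdot\frac{\gamma}{\gamma-\eta}$, and optimizing (letting the fraction tend to $1$ and $a/r\to 1$, which is legitimate since the covering condition only needs $a\gtrsim r$) yields the bound $\frac1\pi|\ch\supp U|$ after taking the $\limsup$. \textbf{Main obstacle:} the delicate part is this last geometric optimization — arranging the ellipse family so that the inner ellipse genuinely contains all of $\hat L_M^+$ up to radius $r$ (the logarithmic curve is not an ellipse, so one needs $a\sinh\eta$ to beat $M\log r$ with room to spare) while keeping $\gamma-\eta$ bounded away from $0$ and the outer semi-minor axis still only $O(\log r)$, so that after division by $r$ the $\log r$ factor is killed and the constant that survives is exactly $\frac1\pi|\ch\supp U|$ and not something larger. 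One also has to confirm that shifting the ellipse's center off the origin does not spoil the application of Lemma~\ref{l:koosis} (it does not, since one may translate and rescale, absorbing the finitely many small-$|\lambda|$ zeros/poles into the $O(1)$).
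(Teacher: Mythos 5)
Your strategy is the same as the paper's: apply Lemma~\ref{l:koosis} to $D_{U+W}$ on a family of ellipses shifted along the positive real axis with semi-minor axis of size $O(\log r)$, bound the boundary integral with Lemma~\ref{l:Dub} and the upper-half-plane bound on $D$, bound the real-axis integral below using $|D_{U+W}(t)|\geq C_1$, and divide by $(\gamma-\eta)r$. The mechanism that makes this work — sending both $\gamma$ and $\eta$ to $0$ with $\eta=\varepsilon_1\gamma$ so that $\tfrac{\gamma}{\sinh\gamma}\to 1$ and the weight $(\gamma-\eta)$ is comparable to the semi-minor axis divided by $a$ — is present in your write-up (note that your earlier requirement that $\gamma-\eta$ be ``bounded below by a positive constant'' is incompatible with $a\sim r$ and $a\sinh\gamma=O(\log r)$; only your later choice, $\gamma,\eta\to 0$ with fixed ratio, is viable, and that is what the paper does, phrased as a proof by contradiction with the threshold $\delta_0$ of (\ref{eq:d0})).

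However, your final optimization contains a genuine error that loses a factor of $2$. The boundary term contributes $\tfrac{2}{\pi}|\ch\supp U|\,a\sinh\gamma$, so after dividing by $(\gamma-\eta)$ times the counting radius $R_{\mathrm{cov}}$ the density bound you obtain is $\tfrac{2}{\pi}|\ch\supp U|\cdot\tfrac{a}{R_{\mathrm{cov}}}\cdot\tfrac{1}{1-\varepsilon_1}$. Since the convex hull of the inner ellipse must contain a real segment of length nearly $R_{\mathrm{cov}}$, one is forced to take $a\geq R_{\mathrm{cov}}/2$ (up to $\varepsilon$), and the theorem's constant $\tfrac{1}{\pi}|\ch\supp U|$ is obtained precisely at the optimal choice $a\approx R_{\mathrm{cov}}/2$ — this is why the paper uses the cosh-scale $a=r_k$ while counting resonances out to $|\lambda|=2r_k$. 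Your stated choice ``$a/r\to 1$'' with counting radius $r$ goes in the wrong direction (larger $a$ makes the bound worse) and yields $\tfrac{2}{\pi}|\ch\supp U|$, not $\tfrac{1}{\pi}|\ch\supp U|$. A secondary point: the part of $\hat L_M^+\cap\{|\lambda|\leq R_{\mathrm{cov}}\}$ missed by the inner ellipse near its left vertex (where the ellipse pinches to the real axis while the logarithmic region still has height $\sim M\log r$) is a region of diameter comparable to $\varepsilon r$, not $O(1)$; it must be controlled by the global asymptotics (\ref{eq:MZas}), which costs an arbitrarily small $\varepsilon_0 r$ in the count, as in the paper's use of the disk $B(0;\varepsilon_3 r_k)$.
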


 \begin{proof}
     It suffices to prove the claim for $\hat{L}_M^+$ only and we will do so by contradiction. To that end, suppose that for some $M>0$ and some $\varepsilon_0 > 0$ we can find a sequence $\{r_k\}_{k=1}^\infty$ with $r_k \nearrow \infty$ and
     \begin{equation}\label{eq:lba}
         \frac{n_{U+W}(\hat{L}_M^+ ;2r_k)}{2r_k} > \frac{1}{\pi}|\ch\supp U| + \varepsilon_0
     \end{equation}
     for all $k$.

     Fix the positive constants $\varepsilon_1, \varepsilon_2, \varepsilon_3, \gamma_0$, and $\delta_0$ such that $\varepsilon_2 < \varepsilon_3 := (\frac{2}{\pi}|\ch\supp (U+W)| + 1)^{-1} \varepsilon_0$ and
     \begin{equation}\label{eq:d0}
         \left(\frac{2}{\pi}|\ch\supp U| + \varepsilon_0\right)\left(1-\varepsilon_1\right)\frac{\gamma}{\sinh\gamma} - \frac{2}{\pi}\left|\ch\supp U\right| \geq \delta_0
     \end{equation}
     for $\gamma \in [0, \gamma_0].$ We will use (\ref{eq:d0}) later to produce a contradiction, but with these quantities now fixed, we take some $T\gg M$ (we will place a more precise restriction on $T$ below) and set
     \begin{equation}\label{eq:etak}
         \eta_k := \sinh^{-1}\left(\frac{T\log(1+(1+\varepsilon_2)r_k)}{r_k}\right)\; \text{and} \; \gamma_k := \varepsilon_1^{-1}\eta_k.
     \end{equation}
     Note that $0 < \eta_k < \gamma_k \rightarrow 0$ as $k \rightarrow \infty$. The purpose of these definitions is to ensure that the sequence of ellipses
     \[E_k := \{\lambda = (1+\varepsilon_2)r_k + r_k\cosh(\eta_k + i\theta): \theta \in [0, 2\pi]\}\]
     contain larger and larger portions of $\hat{L}_M^+$ (see Figure \ref{fig:enter-label}).
     
\begin{figure}
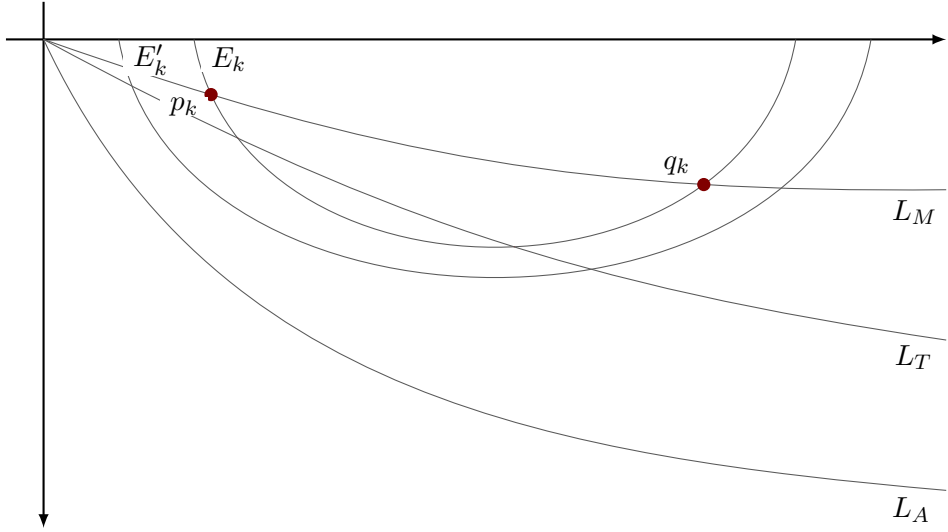

    \centering
\tikz{
\draw[-latex,  thick] (-0.5,0)--(12,0);
\draw[-latex,  thick] (0,0.5)--(0,-6.5);

\draw[name path=A, black!65] (1,0) to[bend right=80, looseness=1.1] (11,0);
\draw[name path=B, black!65] (2,0) to[bend right=80, looseness=1.2] (10,0);

\draw[name path=l1, black!65] (0,0) to[bend right=10, looseness=1] node[pos=.97, below, text=black] {$L_M$} (12,-2);
\draw[name path=l2, black!65] (0,0) to[bend right=10, looseness=1] node[pos=.97, below, text=black] {$L_T$} (12,-4);
\draw[name path=l3, black!65] (0,0) to[out=-65, in=175, looseness=1] node[pos=.97, below, text=black] {$L_A$} (12,-6);

  \fill[red!50!black, name intersections={of=B and l1, name=i, total=\n}]
    \foreach \s in {1,...,\n} {(i-\s) circle (2.5pt) node[above] (\s) {}};


\node[below right=1.3mm, fill=white, inner ysep=0pt, fill opacity=.6, text opacity=1] at (2,0) {$E_k$};
\node[below right=.8mm, fill=white, inner ysep=0pt, fill opacity=.6, text opacity=1] at (1,0) {$E'_k$};
\node[below left=.5mm, fill=white, inner ysep=0pt, fill opacity=.6, text opacity=1] at (i-1) {$p_k$};
\node[above left=1.3mm and 0.5mm, fill=white, inner ysep=0pt, fill opacity=.6, text opacity=1] at (i-2) {$q_k$};
    
  \fill[red!50!black, name intersections={of=B and l2, name=i, total=\n}]
    \foreach \s in {1,...,\n} {(i-\s) circle (0pt) node[above] (\s) {}};
}
 \caption{Curves in the proof of Theorem \ref{t:ssv2}. Given $M$, ellipses $E_k$ are defined so as to cover large portions of $\hat{L}_M^+\cap \{ \im \lambda\leq 0\}$, while at the same time $E_k'$ remains entirely above the curve $L_A$. For large $k$, we have $|p_k| < \varepsilon_3r_k$ and $|q_k| > 2r_k$, so the region $\hat{L}_M^+ \cap \{\im \lambda \leq 0\} \cap \{|\lambda| \leq 2r_k\}$ is covered by the union of the disk $B(0;\varepsilon_3r_k)$ and the convex hull of the ellipse $E_k$.}
    \label{fig:enter-label}
\end{figure}
     
     As $M<T$, $L_M$ intersects $E_k$ at points $p_k$ and $q_k$ corresponding to the points on $E_k$ with angles $\theta_{p_k} \in (\pi, \frac{3\pi}{2})$ and $\theta_{q_k} \in (\frac{3\pi}{2}, 2\pi)$. We claim that for large values of $k$ we will have $|p_k| < \varepsilon_3r_k$. Indeed, since $p_k \in  E_k$, we have
     \begin{align*}p_k  & = (1+\varepsilon_2)r_k + r_k\cosh(\eta_k+i\theta_{p_k})\\
     |p_k|^2& =((1+\varepsilon_2)r_k+r_k\cosh\eta_k\cos\theta_{p_k})^2 + r_k^2\sinh^2\eta_k\sin^2\theta_{p_k}
     \end{align*}
     and as $\varepsilon_2 < \varepsilon_3$ the claim will follow if we can show that $\theta_{p_k}$ is sufficiently close to $\pi$ for large values of $k$. To that end, we equate the two expressions for the imaginary part of $p_k$ as a point of $E_k$ and of $L_M$ to get
     \begin{align*}
     \sin\theta_{p_k}T\log(1+(1+\varepsilon_2)r_k)&  = -M\log(1+|p_k|)\\ &
     = -M\log(1+(1+\varepsilon_2)r_k + r_k\cosh\eta_k\cos\theta_{p_k}) + o(1)
     \end{align*}
     where for the second equality we used that $|p_k| = \re p_k(1+o(1))$ as $k \rightarrow \infty$. From this we obtain
     \[\sin\theta_{p_k} = -\frac{M}{T} + O(1/\log r_k)\]
     as $k \rightarrow \infty$. As $\theta_{p_k} \in (\pi, \frac{3\pi}{2})$, at this point we may assume that we have taken $T$ so large that as $k \rightarrow \infty$, $\theta_{p_k}$ remains close enough to $\pi$ that $|p_k| < \varepsilon_3r_k$ holds, as claimed. An analogous consideration shows that for large values of $k$ we will also have $|q_k| > 2r_k$. Therefore,
     \[\hat{L}_M^+ \cap \{\operatorname{Im}\lambda \leq 0\} \cap \{|\lambda| < 2r_k\} \subset \operatorname{ch}(E_k) \cup B(0;\varepsilon_3r_k),\]
     and hence if $N_k := \#\left(\mathcal{R}_{U+W} \cap \operatorname{ch}(E_k)\right)$, we will have
     \[n_{U+W}(\hat{L}_M^+;2r_k) \leq N_k + \varepsilon_0r_k\]
     for large $k$, where we used (\ref{eq:MZas}) and our definition of $\varepsilon_3$ to get $n_{V+W}(\varepsilon_3r_k) \leq \varepsilon_0r_k$ for large $k$. The assumption (\ref{eq:lba}) then shows that
     \begin{equation}\label{eq:Nk}
         N_k \geq \left (\frac{2}{\pi}|\ch\supp U| + \varepsilon_0\right)r_k
     \end{equation}
     for large $k$.

     Next we need to show that for a large enough $A$, the ellipse
 \[E_k':= \{\lambda = (1+\varepsilon_2)r_k + r_k\cosh(\gamma_k + i\theta): \theta \in [0, 2\pi]\}\]
     will lie entirely above $L_A$ (for large $k$), and we will show that this is the case if $A > T/\varepsilon_1$. Noting that $\sinh x \sim x$ and $\sinh^{-1}x \sim x$ as $x \rightarrow 0+$, we obtain from (\ref{eq:etak}) that
         \[r_k\sinh\gamma_k \sim \frac{T}{\varepsilon_1}\log(1+(1+\varepsilon_2)r_k)\]
     as $k \rightarrow \infty$, so we can find $v > 0$ such that for large $k$,
     \[r_k\sinh\gamma_k < (A-v)\log(1+(1+\varepsilon_2)r_k).\]
     Moreover, for large $k$ and $\lambda \in E'_k$ we have 
     $\re \lambda > r_k(\varepsilon_2+O(\gamma_k^2))$, so that 
     for sufficiently large $k$ and any $\lambda \in E'_k$ we will also have
     \[(A-v)\log(1+(1+\varepsilon_2)r_k) < A\log(1+|\lambda|)\]
     and
     \[|\operatorname{Im}\lambda| \leq r_k\sinh\gamma_k.\]
    Thus  we see that $E_k' \subset \hat{L}_A$ for large $k$ and the claim follows.

     In order to produce a contradiction and conclude the proof, we now apply Lemma \ref{l:koosis} to see that for all $k$
     \begin{multline}\label{eq:lucky}
         \frac{1}{\pi} \int_{-r_k}^{r_k} \frac{\log |D_{U+W}((1+\varepsilon_2)r_k+t)|}{\sqrt{r_k^2-t^2}}dt \\
         \leq \frac{1}{2\pi} \int_0^{2\pi} \log |D_{U+W}((1+\varepsilon_2)r_k + r_k\cosh(\gamma_k+i\theta))|d\theta 
         - N_k(1-\varepsilon_1)\gamma_k.
     \end{multline}
     Using that $|D_{U+W}(\lambda)| \geq C_1$ in $\im \lambda \geq 0, |\lambda| >C_0$ for some constants $C_0, C_1 > 0$ (see Section \ref{sec:preliminaries}), the left hand side in (\ref{eq:lucky}) is uniformly bounded below by $\log C_1$. On the right, we apply Lemma \ref{l:Dub} for the first term and (\ref{eq:Nk}) for the second to find
     \begin{equation*}
     \log C_1 \leq C + \frac{2}{\pi}|\ch\supp U|r_k\sinh\gamma_k-\left(\frac{2}{\pi}|\ch\supp U|+\varepsilon_0\right)(1-\varepsilon_1)r_k\gamma_k
     \leq C-\delta_0r_k\sinh\gamma_k,
     \end{equation*}
     where we used (\ref{eq:d0}) for the latter bound. As the right hand side tends to $-\infty$ as $k \rightarrow \infty$, this contradiction proves the theorem.
\end{proof}

\section{Resonances in sectors away from $\Real$ } \label{s:awayfromR}

The idea behind the proof of Theorem \ref{t:ara} is to use the lower bound given by Lemma \ref{l:Sasymp} and the upper bound given by Lemma \ref{l:Sdiff} to produce the inequality
\[\left|\det S_{V+W}(-\lambda) - \det S_V(-\lambda)\right| < \left| \det S_V(-\lambda)\right|\]
in suitable regions, and then to use Rouché's theorem to relate the number of resonances of the two Schrödinger operators under consideration.

\begin{proof}[Proof of Theorem \ref{t:ara}]
    From Lemma \ref{l:Sdiff} there is an $M < |\ch \supp V|$ such that
    \begin{equation}
        \log |\det S_{V+W}(-\lambda) - \det S_V(-\lambda)| \leq 2M|\operatorname{Im}\lambda|
    \end{equation}
    when $\im \lambda \leq 0$ and $|\lambda|$ is large enough. Choose $\varepsilon > 0$ so that $2M|\sin \alpha| + \varepsilon < 2|\ch \supp V| |\sin \alpha|$ for all $\alpha \in [\theta, \varphi]$. Now let $\ci_V$ be the set from Lemma \ref{l:Sasymp} and choose $R_1 > 0$ so that
    \begin{equation}
        \log|\det S_V(-\lambda)| \geq 2|\ch \supp V| |\operatorname{Im}\lambda| - \varepsilon|\lambda|
    \end{equation}
    for $|\lambda| > R_1$, $\im \lambda \leq 0$, and $\lambda \not\in \ci_V$. Restricting $\lambda$ to $\Bar{\sect}$ in these inequalities, it follows from our choice of $\varepsilon$ that
    \begin{equation}\label{eq:smdb}
        |\det S_{V+W}(-\lambda) - \det S_V(-\lambda)| < |\det S_V(-\lambda)|
    \end{equation}
    when $|\lambda| \geq R_1, \lambda \in \Bar{\sect}, \lambda \not\in \ci_V$.

    Now, as the angles $\theta, \varphi \not\in \Bar{A}_V$, we can find $\delta > 0$ such that $(\theta - \delta, \theta + \delta) \cap \Bar{A}_V = \emptyset$ and $(\varphi - \delta, \varphi + \delta) \cap \Bar{A}_V = \emptyset$ and therefore it is possible to find $R_2 \geq R_1$ such that if $\lambda \in \partial \sect, |\lambda| > R_2$, then $\lambda \not\in \ci_V$.

    Next define 
    $$E_V := \{r > 0: \ci_V \cap \{|\lambda| = r\} \not= \emptyset \}.$$
     This set has finite logarithmic measure (see the remarks following the proof of \cite[Ch. 15, Theorem 1]{Levin96}). If we fix $R \geq \max\{R_1, R_2\}, R \not\in E_V$, then for any $r > R$ with $r \not\in E_V$, the inequality (\ref{eq:smdb}) holds on the boundary of the annular region $\sect \cap \{R < |\lambda| < r\}.$ This shows that
    \[n_{V+W} (\sect;r) - n_{V+W}(\sect; R) = n_V(\sect;r) - n_V(\sect;R)\]
    by an application of Rouché's theorem. Since this holds for any such $r$, the proof is complete. 
\end{proof}

We now give a complementary result which follows immediately and shows that, in addition to their counting functions having the same asymptotics, in sectors away from the real axis resonances of $-\frac{d^2}{dx^2} + V+ W$ do not deviate too much from those of $-\frac{d^2}{dx^2} + V$.

\begin{theorem}\label{t:closeres}
    Let $V \in L_c^\infty(\Real; \mathbb{C})$. Then there exists a union of disks $\ci_V = \bigcup\limits_{i=1}^\infty B(z_i; r_i)$ such that $\mathcal{R}_V \subset \ci_V$ and $\sum\limits_{i=1}^\infty \frac{r_i}{|z_i|} < \infty$, and such that the following holds: For any sector $\sect = \sect(\theta, \varphi)$ with $-\pi < \theta < \varphi < 0$ and any $W \in L_c^\infty(\Real; \mathbb{C})$, with 
    $\supp W \subset (\ch \supp V)^\circ$, there exists $R > 0$ so that
    \[\mathcal{R}_{V+W} \cap \sect \cap \{|\lambda| > R\} \subset \ci_V.\]
\end{theorem}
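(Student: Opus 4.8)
\textbf{Proof proposal for Theorem \ref{t:closeres}.}

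The plan is to extract the set $\ci_V$ directly from Lemma \ref{l:Sasymp}, and then simply re-run the Rouch\'e argument of the proof of Theorem \ref{t:ara} with the roles of the exceptional disks made explicit. Concretely, let $\ci_V = \bigcup_{i=1}^\infty B(z_i;r_i)$ be the union of disks provided by Lemma \ref{l:Sasymp}, so that $\sum r_i/|z_i| < \infty$ and $\log|\det S_V(-\lambda)| = 2|\ch\supp V||\im\lambda| + o(|\lambda|)$ for $\im\lambda \leq 0$, $\lambda \notin \ci_V$. As noted in Section \ref{sec:preliminaries}, since resonances are the zeros of $\det S_V(-\lambda)$ in $\im\lambda \leq 0$, the asymptotic (\ref{eq:Sasympt}) forces $\mcR_V \subset \ci_V \cup B(0,C)$ for some $C$; after enlarging one of the disks (or adjoining a single disk $B(0,C)$, which does not affect the summability condition) we may assume $\mcR_V \subset \ci_V$ outright. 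This is the $\ci_V$ in the statement.

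Now fix a sector $\sect = \sect(\theta,\varphi)$ with $-\pi < \theta < \varphi < 0$ and $W \in L_c^\infty(\Real;\Complex)$ with $\supp W \subset (\ch\supp V)^\circ$. First I would run exactly the estimate from the proof of Theorem \ref{t:ara}: by Lemma \ref{l:Sdiff} there is $M < |\ch\supp V|$ with $\log|\det S_{V+W}(-\lambda) - \det S_V(-\lambda)| \leq 2M|\im\lambda|$ for $\im\lambda \leq 0$ and $|\lambda|$ large, while Lemma \ref{l:Sasymp} gives, for any $\varepsilon>0$ and $|\lambda|$ large with $\lambda \notin \ci_V$ and $\im\lambda\le 0$, the bound $\log|\det S_V(-\lambda)| \geq 2|\ch\supp V||\im\lambda| - \varepsilon|\lambda|$. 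Choosing $\varepsilon$ small enough that $2M|\sin\alpha| + \varepsilon < 2|\ch\supp V||\sin\alpha|$ for all $\alpha \in [\theta,\varphi]$, we get
\[
    |\det S_{V+W}(-\lambda) - \det S_V(-\lambda)| < |\det S_V(-\lambda)|
\]
for all $\lambda \in \bar\sect$ with $|\lambda|$ large and $\lambda \notin \ci_V$. In particular $\det S_{V+W}(-\lambda) \ne 0$ on $\bar\sect \setminus \ci_V$ for $|\lambda| > R$, i.e.\ every resonance of $-\frac{d^2}{dx^2}+V+W$ lying in $\sect$ with $|\lambda| > R$ must lie in $\ci_V$, which is the claim. (One can also invoke Rouch\'e on a suitable annular region cut along $\partial\sect \setminus \ci_V$, as in the proof of Theorem \ref{t:ara}, to get the sharper statement that the resonances of $V+W$ and of $V$ are in bijection within each component of $\bar\sect \cap \{|\lambda|>R\} \setminus \ci_V$; but for the displayed conclusion the containment alone suffices.)

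The main point to be careful about is the order of quantifiers: $\ci_V$ must be produced \emph{first}, depending only on $V$, and then shown to work for every admissible $\sect$ and $W$ — but this is automatic, since $\ci_V$ comes from Lemma \ref{l:Sasymp} (which depends only on $V$), and the constant $M$ in Lemma \ref{l:Sdiff} depends on $V$ and on $\|W\|_{L^\infty}$ and the supports but, crucially, satisfies $M < |\ch\supp V|$ regardless; the only thing that varies with $\sect$ and $W$ is the threshold $R$, which the statement allows. The one genuine subtlety is ensuring that the needed inequality $M < |\ch\supp V|$ in Lemma \ref{l:Sdiff} really holds under the hypothesis $\supp W \subset (\ch\supp V)^\circ$: writing $\ch\supp V = [a,b]$ and $\ch\supp W = [a_0,b_0]$ with $a < a_0 < b_0 < b$, Lemma \ref{l:Sdiff} gives the exponent $M = \max\{b_0 - a, b - a_0\} < b - a = |\ch\supp V|$, so this is indeed fine. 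Thus there is no serious obstacle — the theorem is essentially a repackaging of the intermediate inequality (\ref{eq:smdb}) from the proof of Theorem \ref{t:ara} together with the containment $\mcR_V \subset \ci_V$.
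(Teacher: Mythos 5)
Your proposal is correct and follows essentially the same route as the paper: the paper's proof also just re-runs the argument of Theorem \ref{t:ara} up to the strict inequality (\ref{eq:smdb}) (noting it never used $\theta,\varphi\not\in\bar A_V$) and concludes that $\det S_{V+W}(-\lambda)$ cannot vanish in $\bar\sect\cap\{|\lambda|\geq R_1\}\setminus\ci_V$. Your extra care about adjoining $B(0,C)$ to ensure $\mcR_V\subset\ci_V$ and about the quantifier order is a reasonable elaboration of what the paper leaves implicit.
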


\begin{proof}
    
    We follow the proof of Theorem \ref{t:ara} to obtain (\ref{eq:smdb}) (which made no assumption that $\theta, \varphi \not\in \Bar{A_V})$. This strict inequality implies that $\det S_{V+W}(-\lambda)$ cannot vanish in $\Bar{\sect} \cap \{|\lambda| \geq R_1\} \cap (\mathbb{C} \setminus \ci_V)$, which implies the conclusion of the theorem.
\end{proof}

\appendix
\section{Applications of Hardy's method}
In this appendix we apply a method of G.H. Hardy which provides a process for asymptotically locating the solutions of certain transcendental equations wherein one first finds 
preliminary expressions for the \emph{possible} asymptotic location of solutions, and subsequently proves that such points actually \emph{are} solutions, by applying Rouché's theorem.

The method was originally used in \cite[Sections 26 and 27]{Hardy04} to locate zeros of a particular entire function, and was subsequently applied repeatedly in \cite{Cartwright30} to study zeros of Fourier transforms of certain classes of functions. More recently, Hardy's method has been applied to study the distribution of resonances for a compactly supported potential in one dimension \cite[Lemma 5]{Zworski87}, \cite[Section 5]{Stepin07}, for radial potentials in higher odd dimensions \cite[Lemma 6]{Zworski89}, \cite[Section 4]{Dinh14}, and for sums of $\delta$-function potentials \cite[Theorem 1]{Datchev22}.

Here we focus on just two situations for which we have applications in Section \ref{s:logc}. The first (Lemma \ref{l:Hm1}) was referenced previously in \cite{Zworski87} but we give an explicit proof here as this version is used repeatedly in this paper. In Lemma \ref{l:Hm2} we give an extension which applies to a more complicated case. In both situations, the transcendental equations to which we apply Hardy's method arise from looking for zeros of analytic functions of a certain form. For $R > 0$ and $0 \leq D_1 < D_2 \leq \infty$, we set 
$$\Omega_R(D_1,D_2) := \{-D_1\log(1+|z|) > \operatorname{Im}z > -D_2\log(1+|z|)\} \cap \{|z| > R\}.$$

\begin{lemma}\label{l:Hm1}
    Let $f$ be a function analytic in $\Omega_R(D_1, D_2)$ and satisfy
    \begin{equation}\label{eq:hma}
        f(z) = 1 - \frac{C}{(iz)^M}e^{iLz}(1+\varepsilon_1(z)) + \varepsilon_2(z) \qquad z \in \Omega_R(D_1, D_2)
    \end{equation}
    where $C \in \mathbb{C}\setminus \{0\}, M>0, \frac{M}{L} \in (D_1, D_2)$ and $\varepsilon_i(z) = o(1)$ uniformly as $|z| \rightarrow \infty$ through $\Omega_R(D_1, D_2)$. Then there exists $R'>R$, $N\in \Natural$ such that the zeros of $f$ in $\Omega_{R'} (D_1, D_2)$ form a sequence
    \begin{equation}\label{eq:hmc}
        z_{\pm n} = \pm\frac{2\pi n}{L} \pm\frac{M\pi}{2L} + \frac{i}{L}\log C - i\frac{M}{L}\log\left(\frac{2\pi n}{L}\right) + \varepsilon_{\pm n},\; n>N
    \end{equation}
    where $\varepsilon_{\pm n} \rightarrow 0$ as $n \rightarrow \infty$.
\end{lemma}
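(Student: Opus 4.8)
The plan is to carry out the two–stage Hardy procedure described at the start of this appendix: first extract from (\ref{eq:hma}) the only possible asymptotic locations of the zeros of $f$, then confirm by Rouch\'e's theorem that each such location carries exactly one zero. It is convenient to write $w(z) := \frac{C}{(iz)^M}e^{iLz}$, so that $w = e^{\Phi}$ with $\Phi(z) := iLz - M\log(iz) + \log C$ for the principal branch of $\log$, and $f = (1-w) - w\varepsilon_1 + \varepsilon_2$. The candidate zeros will be the points
\[\zeta_n := \frac{2\pi n}{L} + \operatorname{sgn}(n)\frac{M\pi}{2L} + \frac{i}{L}\log C - i\frac{M}{L}\log\left(\frac{2\pi|n|}{L}\right),\qquad |n|\ \text{large},\]
which are exactly the $z_{\pm n}$ of (\ref{eq:hmc}) after relabelling. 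The heuristic is that $f(z)=0$ forces $w(z)=1+o(1)$, i.e. $\Phi(z)\in 2\pi i\Integers + o(1)$, and solving $\Phi(z)=2\pi i n$ by one step of iteration (starting from $\re z\approx 2\pi n/L$ and using $\log(iz)=\log|z| + i\operatorname{sgn}(\re z)\tfrac{\pi}{2} + o(1)$) yields $\zeta_n$. One checks directly that $\Phi(\zeta_n)=2\pi i n + o(1)$ and $|\zeta_{n+1}-\zeta_n| = \tfrac{2\pi}{L}(1+o(1))$ as $|n|\to\infty$, and (using $D_1 < M/L < D_2$) that the distance from $\zeta_n$ to $\partial\Omega_R(D_1,D_2)$ grows like $\log|\zeta_n|$, so fixed–radius disks about the $\zeta_n$ are eventually contained in $\Omega_R(D_1,D_2)$ and pairwise disjoint.

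First I would prove a lower bound for $|f|$ away from the $\zeta_n$: given small $\delta>0$, there are $R_\delta, c_\delta>0$ with $|f(z)|\geq c_\delta$ for $z\in\Omega_R(D_1,D_2)$, $|z|>R_\delta$, $z\notin\bigcup_n B(\zeta_n,\delta)$. When $|w(z)|\geq 4$ or $|w(z)|\leq 1/4$ this is immediate from $\varepsilon_i=o(1)$; in the band $1/4<|w(z)|<4$ one has $\im z = O(\log|z|)$, hence $|z|=|\re z| + o(1)$ and $\arg(iz)=\operatorname{sgn}(\re z)\tfrac{\pi}{2} + o(1)$, and $|f(z)|\geq |1-w(z)| - o(1)$. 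Writing $w(z)=\rho e^{i\psi}$ with $\rho = |C|\,|z|^{-M}e^{-L\im z}$ and $\psi = \arg C - M\arg(iz) + L\re z$, the condition $|1-w(z)|<\varepsilon$ forces $|\log\rho| = O(\varepsilon)$ (which pins $\im z$ within $O(\varepsilon)$ of $(\log|C| - M\log|z|)/L$) together with $\operatorname{dist}(\psi,2\pi\Integers)=O(\varepsilon)$ (which, via the expansion of $\arg(iz)$, pins $\re z$ within $O(\varepsilon)+o(1)$ of $\re\zeta_n$ for a unique $n$); combined with $|z|=|\re z| + o(1)$ this places $z$ within $O(\varepsilon)+o(1)$ of $\zeta_n$. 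Choosing $\varepsilon$ a small multiple of $\delta$ and then $R_\delta$ large gives the bound, and in particular $f$ has no zeros outside $\bigcup_n B(\zeta_n,\delta)$ once $|z|>R_\delta$.

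Next I would localize one zero in each $B(\zeta_n,\delta)$. Since $\Phi'(z)=iL - M/z = iL(1+o(1))\neq 0$ on $B(\zeta_n,\delta)$, the map $\Phi$ is injective there, and, using $\Phi(\zeta_n)=2\pi i n + o(1)$, its image contains $B(2\pi i n,L\delta/2)$ and is contained in $B(2\pi i n, 2L\delta)$ for $|n|$ large; as $w-1 = e^{\Phi}-1$ vanishes precisely where $\Phi\in 2\pi i\Integers$, this shows $1-w$ has exactly one, simple, zero in $B(\zeta_n,\delta)$. On $\partial B(\zeta_n,\delta)$ the previous paragraph gives $|1-w(z)|\geq c_\delta$, while $|f-(1-w)| = |w\varepsilon_1 - \varepsilon_2| = o(1)$, so Rouch\'e's theorem yields exactly one zero $z_n$ of $f$ in $B(\zeta_n,\delta)$. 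Combining the two steps, for a suitable $R'>R$ and $N$ the zeros of $f$ in $\Omega_{R'}(D_1,D_2)$ are exactly the points $z_n$, $|n|\geq N$, with $z_n\in B(\zeta_n,\delta)$. Finally, to upgrade this (for fixed $\delta$) to $z_n=\zeta_n+o(1)$, I would re-run the estimates with $\delta$ replaced by a sequence $\delta_n\to 0$ decaying more slowly than $\sup\{|\varepsilon_1(z)| + |\varepsilon_2(z)| : |z|\geq |\zeta_n|/2\}$ and than the other $o(1)$ terms; a unique zero still lies in $B(\zeta_n,\delta_n)$, so $z_n-\zeta_n\to 0$, which is (\ref{eq:hmc}).

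The hard part will be the lower bound of the second paragraph, i.e. showing that the zeros can cluster only near the $\zeta_n$: the Rouch\'e argument is routine once that is known, but the dichotomy on $|w(z)|$ and the translation of ``$|1-w(z)|$ small'' into ``$z$ near some $\zeta_n$'' must be carried out uniformly over the unbounded region $\Omega_R(D_1,D_2)$, carefully tracking the logarithmically small corrections to $\arg(iz)$ and to $\log|z|$ versus $\log(2\pi|n|/L)$.
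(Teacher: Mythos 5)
Your proposal is correct and follows essentially the same two-stage Hardy argument as the paper: the modulus and argument of the exponential term force any large zero to lie near one of the candidate points $\zeta_{\pm n}$, and Rouch\'e's theorem (comparing $f$ near $\zeta_{\pm n}$ with $1-w$, which after recentering is the paper's $1-e^{iL\xi}$) confirms exactly one simple zero in each small disk. The only differences are presentational — you phrase the necessity step as a uniform lower bound on $|f|$ off the disks and recover $\varepsilon_{\pm n}\to 0$ by shrinking radii, whereas the paper derives the asymptotic form of a zero directly and then applies Rouch\'e at a fixed radius — but the content is the same.
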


\begin{proof}
    We work in $\Omega_{R_0} (D_1, D_2)$ with $R_0 > R$ large enough so that $|\varepsilon_i(z)| < \frac{1}{2}$ in this region. If $z$ is a 
    zero  of $f$ we find from (\ref{eq:hma}) that
    \begin{equation}\label{eq:at0}
        (1+\varepsilon_1(z))e^{iLz} = \frac{(iz)^M}{C}(1+\varepsilon_2(z)).
    \end{equation}
    Writing $z = x + iy$ and taking magnitudes,
    \begin{equation}\label{eq:magnitude}
        |1+\varepsilon_1|e^{-yL} = \frac{|z|^M}{|C|}|1+\varepsilon_2|
    \end{equation}
    so that
    \[y = -\frac{M}{L}\log|z| + \frac{1}{L}\log|C| - \frac{1}{L}\log\left(\frac{|1+\varepsilon_2|}{|1+\varepsilon_1|}\right).\]
    Now if $D_2 = \infty$, then for any $D$ with $\frac{M}{L} <D$ we can use (\ref{eq:magnitude}) to conclude that there can be only finitely many zeros of $f$ in $\{\operatorname{Im}z < -D\log(1+|z|)\}$, so we may as well assume $D_2$ is finite moving forward. With that agreement, as $|z| \rightarrow \infty$ through $\Omega_{R_0}(D_1, D_2)$ we clearly have $|z| = |x|(1+\varepsilon_3(z))$ where $\varepsilon_3(z) = o(1)$ and therefore
    \begin{equation}\label{eq:yy}
        y = -\frac{M}{L}\log|x| + \frac{1}{L}\log|C| + \varepsilon_4(z)
    \end{equation}
    with $\varepsilon_4(z) = o(1)$.

    Next we divide (\ref{eq:at0}) by (\ref{eq:magnitude}) to get
    \[\left(\frac{1+\varepsilon_1}{|1+\varepsilon_1|}\right)e^{iLx} = \frac{|C|}{C}\frac{(iz)^M}{|z|^M}\left(\frac{1+\varepsilon_2}{|1+\varepsilon_2|}\right)\]
    from which we find that the real part of a zero must satisfy
    \begin{equation}\label{eq:xx}
        iLx_{\pm n} = \pm 2\pi ni - i\arg C + \frac{iM\pi}{2} + iM\arg z_{\pm n} + \varepsilon'_{\pm n}
        = \pm2\pi ni - i\arg C \pm\frac{iM\pi}{2} + \varepsilon_{\pm n}^1
    \end{equation}
    for some $n \in \mathbb{N}$, where $\varepsilon_{\pm n}^1 \rightarrow 0$ as $n \rightarrow \infty$.

    From (\ref{eq:yy}) it is now easy to see that
    \[y_{\pm n} = -\frac{M}{L}\log\left(\frac{2\pi n}{L}\right) + \frac{1}{L}\log|C| + \varepsilon_{\pm n}^2\]
    where $\varepsilon_{\pm n}^2 \rightarrow 0$ as $n \rightarrow \infty$ and together with (\ref{eq:xx}) this shows that any large zero of $f$ must be of the form of one of the points in (\ref{eq:hmc}).

    To conclude the proof we need to show that for large enough $n$ each point of (\ref{eq:hmc}) actually \emph{is} a zero of $f$. To that end, set
    \[\zeta_{\pm n} := \pm\frac{2\pi n}{L} \pm\frac{M\pi}{2L} + \frac{i}{L}\log C - i\frac{M}{L}\log\left(\frac{2\pi n}{L}\right).\]
    If $|\xi| \leq C_0$ for some $C_0 < \frac{\pi}{L}$, then we can obtain
    \[f(\zeta_{\pm n} + \xi) = 1-e^{iL\xi}(1+\delta_{\pm n}^1(\xi)) + \delta_{\pm n}^2(\xi)\]
    for large $n$, where the $\delta_{\pm n}^i$ tend to zero uniformly as $n \rightarrow \infty$ for $|\xi| \leq C_0$. When $|\xi| = C_0$, the function $g(\xi) := 1-e^{iL\xi}$ satisfies $|g(\xi)| \geq C_1$ for some $C_1 > 0$, and therefore it is straightforward to arrange that
    \[|f(\zeta_{\pm n} + \xi)-g(\xi)| < |g(\xi)| \qquad\qquad \text{on} \quad |\xi| = C_0\]
    for large enough $n$. As $g$ has only the simple zero at $\xi = 0$ within $|\xi| \leq C_0$, we see that $f(\zeta_{\pm n}+\bullet)$ has exactly one zero in $|\xi| \leq C_0$ by Rouché's theorem. This must be the point $z_{\pm n}$.
\end{proof}

We now give an extension, determining the zeros of a function $f$ having the more complicated form
\begin{equation}\label{eq:Hm2e}
    f(z) = 1-\frac{C_1}{(iz)^M}e^{iLz}(1+\varepsilon_1(z)) - \frac{C_2}{(iz)^N}e^{iKz}(1+\varepsilon_2(z)) + \varepsilon_3(z)
\end{equation}
when $M$ and $N$ are integers. In Section \ref{s:rsum} we showed how to deal with the case $\frac{M}{L} \not= \frac{N}{K}$ by reducing to an application of the simpler Lemma \ref{l:Hm1}, so we focus here on the case $\frac{M}{L} = \frac{N}{K}$.

\begin{lemma}\label{l:Hm2}
    Let $f$ be a function analytic in $\Omega_R(D_1,D_2)$ and satisfy (\ref{eq:Hm2e}) where $M<N$ are positive integers, $\frac{M}{L} = \frac{N}{K} \in (D_1,D_2)$ and $\varepsilon_i(z) = o(1)$ uniformly as $|z| \rightarrow \infty$ through $\Omega_R(D_1, D_2)$. Then for some $n_0 > 0$ f has N sequences of zeros $\{\{z^{(l)}_{\pm n}\}^\infty_{n=n_0}\}^N_{l=1}$ satisfying
    \[z_{\pm n}^{(l)} = \pm \frac{2\pi Nn}{K} \pm \frac{N\pi}{2K} - i\frac{N}{K}\log a_l - i\frac{N}{K}\log\left(\frac{2\pi Nn}{K}\right) + \varepsilon_{\pm n}^{(l)}\]
    where $\{a_l\}_{l=1}^N$ are the roots of the equation $C_2z^N + C_1z^M - 1 = 0$ repeated to multiplicity and $\varepsilon_{\pm n}^{(l)} \rightarrow 0$ as $n \rightarrow \infty$.
\end{lemma}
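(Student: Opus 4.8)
The plan is to convert \eqref{eq:Hm2e} into a polynomial equation by a change of variables and then locate and verify the zeros by Hardy's method. Since $M,N$ are integers and $\frac{M}{L}=\frac{N}{K}=:\mu^{-1}$, we have $L=\mu M$ and $K=\mu N$, so introducing the single-valued holomorphic function $\zeta=\zeta(z):=e^{i\mu z}/(iz)$ (well defined for $\im z<0$, $|z|$ large) the two exponential terms in \eqref{eq:Hm2e} become $C_1\zeta^M(1+\varepsilon_1)$ and $C_2\zeta^N(1+\varepsilon_2)$, and
\[
f(z)=P(\zeta(z))-C_1\zeta(z)^M\varepsilon_1(z)-C_2\zeta(z)^N\varepsilon_2(z)+\varepsilon_3(z),\qquad P(\zeta):=C_2\zeta^N+C_1\zeta^M-1 .
\]
It is essential here that $M,N$ are integers, so that $P$ is a genuine polynomial; as $M<N$ it has degree $N$ and hence $N$ roots $a_1,\dots,a_N$ counted with multiplicity, which are precisely the $a_l$ of the statement. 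The key elementary identity throughout is $|\zeta(z)|=e^{-\mu\im z}/|z|$, so $|\zeta(z)|$ is bounded above and below exactly when $\im z$ stays within a bounded distance of the curve $\{\im z=-\mu^{-1}\log|\re z|\}$, and this curve lies strictly inside $\Omega_R(D_1,D_2)$ precisely because $\mu^{-1}=N/K\in(D_1,D_2)$.

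First I would carry out the ``locating'' half, which identifies the candidate points. If $z$ is a large zero of $f$, the displayed formula gives $P(\zeta(z))=\varepsilon_3(z)-C_1\zeta(z)^M\varepsilon_1(z)-C_2\zeta(z)^N\varepsilon_2(z)$; inspecting the top-degree term of $P$ shows $\zeta(z)$ cannot tend to $0$ or $\infty$ along a sequence of zeros, so $\zeta(z)$ stays in a compact subset of $\mathbb{C}\setminus\{0\}$, $P(\zeta(z))=o(1)$, and $\zeta(z)$ must approach the root set $\{a_l\}$. Fixing $a_l$ and inverting $\zeta(z)\approx a_l$, i.e.\ solving $e^{i\mu z}\approx a_l\,iz$ by taking logarithms and using $|z|=|\re z|(1+o(1))$ in the region, exactly as in the proof of Lemma \ref{l:Hm1}, forces $z$ into the asymptotic shape asserted in the statement: the $+$ signs when $\re z>0$ (so $\arg z\to 0$) and the $-$ signs when $\re z<0$ (so $\arg z\to-\pi$), with the grouping $\mu^{-1}\arg a_l-i\mu^{-1}\log|a_l|=-i(N/K)\log a_l$ producing the stated $\log a_l$ term. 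A diagonal argument over the finitely many distinct root values then shows every large zero of $f$ in the region is among these candidate points.

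For the ``verification'' half, fix $l$ and, for $n$ large, let $w^{(l)}_{\pm n}$ be the main term of $z^{(l)}_{\pm n}$. A direct computation gives $e^{i\mu w^{(l)}_{\pm n}}=\pm i\,a_l\,(2\pi n/\mu)$, whence $\zeta(w^{(l)}_{\pm n})\to a_l$. Since $\zeta'(z)=\zeta(z)(i\mu-1/z)$ is bounded away from $0$ near these points, $\zeta$ is a biholomorphism of a small fixed disk $D(w^{(l)}_{\pm n},\rho)$ onto a neighbourhood of $a_l$, with all constants uniform in $n$; if $a_l$ has multiplicity $m$ then $z\mapsto P(\zeta(z))$ has exactly $m$ zeros in this disk and $|P(\zeta(z))|\ge c(\rho)>0$ on its boundary, while $\sup_{D}|f-P\circ\zeta|\to 0$ as $n\to\infty$ because $|\zeta|$ is bounded on $D$ and the $\varepsilon_i$ tend to $0$ uniformly as $|z|\to\infty$. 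Rouch\'e's theorem then yields exactly $m$ zeros of $f$ in $D(w^{(l)}_{\pm n},\rho)$ for $n$ large; letting $\rho$ shrink shows these zeros differ from $w^{(l)}_{\pm n}$ by $o(1)$, and ranging over the distinct root values (whose candidate points are separated by fixed nonzero distances and all lie in $\Omega_{R'}(D_1,D_2)$ for large $n$) assembles the $N$ sequences.

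I expect the main obstacle to be the bookkeeping forced by multiple roots of $P$: the Rouch\'e step must be applied to a whole cluster of coincident candidate points at once, producing $m$ zeros of $f$ that one then assigns to the $m$ equal indices $l$ arbitrarily, so that the ``$N$ sequences'' are well defined as sequences with the stated asymptotics even when their individual members cannot be distinguished. The remaining technical work is to keep the estimates uniform in $n$ — injectivity and the lower bound for $|\zeta'|$, the size of $\zeta$ on the disks, and the decay of $f-P\circ\zeta$ — which is routine given the explicit formula $|\zeta(z)|=e^{-\mu\im z}/|z|$.
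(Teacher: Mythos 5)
Your proof is correct, and its core — a Rouch\'e comparison of $f$ with the polynomial $1-C_1w^M-C_2w^N$ near each root $a_l$, with a small disk capturing exactly $m$ zeros when $a_l$ has multiplicity $m$ — is the same as the paper's, which reaches that polynomial by substituting the explicit parametrization $\xi\mapsto\zeta^a_{\pm n}(\xi)$ of a neighborhood of each candidate point rather than by your global change of variable $\zeta(z)=e^{i\mu z}/(iz)$. (Note the harmless sign slip: with your $P$ one has $f=-P(\zeta)-C_1\zeta^M\varepsilon_1-C_2\zeta^N\varepsilon_2+\varepsilon_3$, not $+P(\zeta)$; the zero set is unaffected.) Where you genuinely go further is the ``locating'' half: by showing that at any large zero $\zeta(z)$ must stay in a compact subset of $\Complex\setminus\{0\}$, hence that $P(\zeta(z))=o(1)$ and $\zeta(z)$ approaches the root set, you conclude that \emph{every} large zero of $f$ in $\Omega_{R'}(D_1,D_2)$ lies in one of the $N$ sequences. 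The paper deliberately refrains from claiming this (see the closing remark of the appendix) and only verifies that the candidate points are zeros, relying in the application on the global resonance-counting asymptotics to bound what might be missed. Provided you write out the routine uniformity estimates you flag (injectivity of $\zeta$ on fixed-radius disks via the bound on $\zeta'/\zeta$, the boundary lower bound for $|P\circ\zeta|$, and the separation of preimages of distinct roots), your argument upgrades the lemma to an ``exactly these zeros within the strip'' statement, which is a worthwhile strengthening.
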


\begin{proof}
    Let $a$ be a root of the equation $C_2z^N + C_1z^M - 1 = 0$ and define, for $|\xi| \leq \eta \ll 1$,
    \[\zeta_{\pm n}^a(\xi) := \pm\frac{2\pi Nn}{K} \pm\frac{N\pi}{2K} -i\frac{N}{K}\log(a+\xi) - i\frac{N}{K}\log\left(\frac{2\pi Nn}{K}\right).\]
    From (\ref{eq:Hm2e}) we find that
    \[f(\zeta_{\pm n}^a(\xi)) = 1 - C_1(a+\xi)^M(1+\delta_{\pm n}^1(\xi)) - C_2(a+\xi)^N(1+\delta_{\pm n}^2(\xi)) + \delta_{\pm n}^3(\xi)\]
    where $\delta_{\pm n}^i$ tend to zero uniformly as $n \rightarrow \infty$. If $\eta$ is small enough then when $|\xi| = \eta$, the function $g(\xi) := 1 - C_1(a + \xi)^M - C_2(a+\xi)^N$ satisfies $|g(\xi)| \geq C_\eta$ for some $C_\eta > 0$, and we can arrange,
    \begin{equation}\label{eq:fg}
        |f(\zeta_{\pm n}^a(\xi))-g(\xi)| < |g(\xi)| \qquad\qquad \text{on} \quad |\xi| = \eta
    \end{equation}
    for large enough $n$. It is easy to see that $g$ has a zero at $\xi = 0$ which is of the same order as that of $a$ as a root of $C_2z^N + C_1z^M - 1 = 0$, say $d \geq 1$. Thus, (\ref{eq:fg}) and Rouché's theorem show that $f$ admits $d$ sequences of zeros given by $\{\zeta_{\pm n}^a(\xi_{\pm n}^{(j)})\}_{n=n_0}^\infty$ for $j=1,\dots,d$ and large enough $n_0$, where $|\xi_{\pm n}^{(j)}| < \eta$.

    Now
    \[\zeta_{\pm n}^a(\xi_{\pm n}^{(i)}) = \pm\frac{2\pi Nn}{K} \pm\frac{N\pi}{2K} -i\frac{N}{K}\log a -i\frac{N}{K}\log\left(\frac{2\pi Nn}{K}\right) -i\frac{N}{K}\log\left(1+\frac{\xi_{\pm n}^{(i)}}{a}\right),\]
    so given $\varepsilon > 0$ we can apply the proof with $\eta > 0$ small enough so that for sufficiently large $n$ the last term has modulus less than $\varepsilon$. This shows that the remainders tend to zero with $n$ and gives us the first $d$ sequences in Lemma \ref{l:Hm2}. Repeating the argument for each of the remaining roots concludes the proof.
\end{proof}

The first part of the proof of Lemma \ref{l:Hm1} follows directly from (\ref{eq:hma})--  any large zero must be a point of the sequence (\ref{eq:hmc}).  The 
proof  subsequently shows that these points actually \emph{are} zeros. In contrast, the added complexity in (\ref{eq:Hm2e}) prevents us from concluding that in Lemma \ref{l:Hm2} we have found \emph{all} of the zeros; however, for the application to resonance distribution in Section \ref{s:logc}, using the known asymptotics of the resonance counting function, we are able to conclude that we miss at most a set of resonances with linear density zero.

\bibliographystyle{plain}
\bibliography{sample}{}

\end{document}